\newcommand{\bm}[1]{\mbox{\boldmath $#1$}}
\def\be{\begin{equation}}
\def\ee{\end{equation}}
\def\bea{\begin{eqnarray}}
\def\eea{\end{eqnarray}}
\def\bean{\begin{eqnarray*}}
\def\eean{\end{eqnarray*}}
\def\S{\Sigma}
\def\ele{\left(\frac{\Lambda}{3}\right)^{1/2}}
\newcommand{\lied}{\pounds}
\newcommand{\scri}{\mathscr{J}}
\newcommand{\eqs}{\ \stackrel{\scri}{=}\ }
\newcommand{\neqs}{\ \stackrel{\scri}{\neq}\ }
\newcommand{\eqc}{\ \stackrel{\Sc}{=}\ }
\newcommand{\Q}{\mathcal{Q}}
\newcommand{\T}{\mathcal{T}}
\newcommand{\D}{\mathcal{D}}
\newcommand{\W}{\mathcal{W}}
\newcommand{\Z}{\mathcal{Z}}
\newcommand{\Sc}{\mathcal{S}}
\def\nablah{\overline \nabla}
\def\={\stackrel{\scri}{=}}
\def\F{\mathcal F}
\newtheorem{crit}{Criterion}
\newtheorem{Definition}{Definition}
\newtheorem{Remark}{Remark}
\newtheorem{Theorem}{Theorem}
\newtheorem{Proposition}{Proposition}
\newtheorem{Lemma}{Lemma}
\newtheorem{Corollary}{Corollary}
\def\n{\bar{n}}
\begin{document}

\title{Gravitational radiation at infinity with non-negative cosmological constant}

\author{Jos\'e M. M. Senovilla $^{1,2}$\\
$^{1}$ \quad Departamento de F\'{\i}sica, Universidad del Pa\'{\i}s Vasco UPV/EHU,\\
Apartado 644, 48080 Bilbao, Spain; \\
$^{2}$ \quad EHU Quantum Center , Universidad del Pa\'{\i}s Vasco UPV/EHU;\\
e-mail: josemm.senovilla@ehu.eus}

\maketitle

\begin{abstract}
The existence of gravitational radiation arriving at null infinity $\scri^+$ --i.e. escaping from the physical system-- is addressed in the presence of a non-negative cosmological constant $\Lambda\geq 0$. The case with vanishing $\Lambda$ is well understood and relies on the properties of the News tensor field (or the News function) defined at $\scri^+$. The situation is drastically different when $\Lambda>0$ where there is no known notion of `News' with similar good properties.  In this paper both situations are considered jointly from a {\em tidal} point of view, that is, taking into account the strength (or energy) of the curvature tensors. The fundamental object used for that purposes in the {\em asymptotic (radiant) super-momentum}, a causal vector defined at infinity with remarkable properties. This leads to a novel characterization of gravitational radiation valid for the general case with $\Lambda\geq 0$ that has been proven to be equivalent, when $\Lambda =0$, to the standard one based on News. The implications of this result are analyzed in some detail when $\Lambda >0$. A general procedure to construct `news tensors' when $\Lambda >0$ is depicted, a proposal for asymptotic symmetries provided, and an example of a conserved charge that may detect gravitational radiation at $\scri^+$ exhibited. A series of illustrative examples is listed.
\end{abstract}

\section{Introduction}\label{sec:intro}

The characterization of gravitational radiation escaping (or entering) asymptotically flat spacetimes was firmly established in the 1950-60's \cite{Trautman58,Pirani57,Bel1962,Bondi1962,Sachs1962,Penrose62}, see \cite{Zakharov} and references therein for a comprehensive review of 1973. The covariant approach uses Penrose's conformal completions \cite{Penrose65,Penrose1966,Frauendiener2004,Kroon} and the basic ingredient is the {\em News} tensor field \cite{Bondi1962,Sachs1962}, a tensor that lives at infinity and which, when non-zero, determines univocally the existence of gravitational radiation escaping (or entering) the spacetime. 

Unfortunately, results based on the News tensor apply only to the case with vanishing cosmological constant $\Lambda =0$. Since the beginning of this century we know that the Universe is in accelerated expansion, that proves the existence of a {\em positive} cosmological constant $\Lambda >0$. This constant might be an effective one, or a true new universal constant, but either way it destroys the asymptotically-flat picture, independently of the value of $\Lambda$. Even if $\Lambda$ is minuscule the problem remains. The difficulties were pointed out in \cite{Penrose2011} and largely explained in \cite{Ashtekar2014,Ashtekar2017} where the various problems arising were clearly exposed.

This situation prompted many scientists to attack the problem which resulted in a plethora of results, new techniques, new definitions, and various attempts to recover the neat and nice picture we had when $\Lambda=0$. Nowadays, there is a vast literature on the subject and a better understanding of the predicament when $\Lambda>0$, which can be categorized in the following points
\begin{itemize}
\item Linearized approximations \cite{Ashtekar2015-b}, including  a version of the quadrupole formula in the linear regime \cite{Ashtekar2015,Hoque2019}, the power radiated by a binary system in a de Sitter background \cite{Bonga2017}, or intended definitions of energy \cite{Bishop2016,Kolanowski2020}.
\item Studies using techniques of exact solutions, analyzing the asymptotic behaviour of the Weyl tensor \cite{Krtous2004}, or the radiation generated by accelerating balck holes \cite{Podolsky2009,Griffiths-Podolsky2009}
\item Definitions of mass-energy, by using spinorial techniques \cite{Szabados2013,Szabados2015}, or Newman-Penrose expansions in preferred coordinate systems \cite{Saw2018}, or on null hypersurfaces \cite{Chrusciel2016}, or for weak gravitational waves \cite{Chrusciel2021,Chrusciel2021b}, or using Hamiltonian techniques \cite{Chrusciel2013}, or --for the case of a black hole-- assuming the existence of a timelike Killing vector \cite{Dolan2019}. For a review, see \cite{Szabados2019}.
\item Searching for mass-loss formulas by means of Newman-Penrose formalism using Bondi-type coordinate expansions \cite{Saw2016,Saw2017ziu,Saw2017,Saw2018i,He2016}
\item Using holographic methods, gauge fixing and foliations on $\scri$ , in particular to study asypmtotic symmetries  \cite{Compere2019,Compere2020}  or in combination with Bondi-like coordinate expansions \cite{Poole2019} 
\item Looking for charges and conservation laws \cite{Chrusciel2013,Virmani2019,Kolanowski2021,Poole2021} and references therein.
\item Relation between the radiation and the properties of the sources \cite{He2018}
\end{itemize}

Despite all these advances, a basic problem remained: how to characterize, unambiguously, the presence of gravitational radiation at $\scri$. To solve this funsamental problem, we explored alternative, but physically equivalent, descriptions of the existence of radiation at infinity when $\Lambda =0$.
The main aim in this quest was to find alternatives that could perform equally well in the presence of a positive cosmological constant too. We found an appropriate characterization of gravitational radiation at $\scri$ {\em fully equivalent} to the standard one based on the News tensor \cite{Fernandez-Alvarez_Senovilla20}.
Our proposal was based on a re-scaled version of the {\em Bel-Robinson tensor} \cite{Bel1958,Senovilla97,Bel1962,Senovilla2000} at $\scri$, which describes the tidal energy-momentum of the gravitational field. The News tensor encodes information about {\em quasi-local} energy-momentum radiated away by an isolated system, while the Bel-Robinson tensor describes energy-momentum properties of the {\em tidal} gravitational field ---for historical reasons, one uses the name `super-energy' for this, see Appendix \ref{App:1}. There is a relation between superenergy and quasi-local energy-momentum quantities on closed surfaces \cite{Horowitz82,Senovilla2000,Szabados2004} that can be exploited. Furthermore, actual measurements of gravitational waves are basically of tidal nature.
Hence, it seemed like a good idea to explore the re-scaled Bel-Robinson tensor as a viable object detecting the existence of gravitational radiation. 

Once we had the novel, but equivalent, characterization of radiation we were able to simply use their appropriate version when $\Lambda >0$ and check whether or not it was able to do the job. It certainly is \cite{Fernandez-Alvarez_Senovilla20b}, and we found the fundamental object that can be used for that purposes: the {\em asymptotic (radiant) super-momentum}. This is introduced in section \ref{sec:Q}, where I present our radiation criteria for general $\Lambda\geq 0$. The next section is devoted to clarify the equivalence with the News prescription when $\Lambda =0$, and then section \ref{sec:dS} is devoted to the case with positive $\Lambda$. The problem of the existence of news-like objects in this case, and the question of in- and out-going radiation are discussed in section \ref{sec:news} and the existence of asymptotic symmetries is studied in section \ref{sec:sym}. I end the paper with a list of examples presented in \cite{Fernandez-Alvarez_Senovilla-afs,Fernandez-Alvarez_Senovilla-dS} and some closing comments.

Before that, let us present the set up.

\subsection{Weakly asymptotically simple spacetimes}\label{subsec:prelim}

Throughout, I will assume that the spacetime $(\hat M,\hat g)$ is weakly asymptotically simple admitting a conformal compactification \`a la Penrose \cite{Penrose65,Kroon,Frauendiener2004,Stewart1991}, so that there exists a (unphysical) spacetime $(M,g)$ and a conformal embedding $\Phi : \hat M \hookrightarrow M$ such that 
$$\Phi^* (\Omega^{-2} g) \stackrel{\hat M}{=} \hat g, \hspace{1cm} \Omega\in C^\infty (M), \hspace{1cm} \Omega |_{\Phi(\hat M)} >0$$
where $\Phi^*$ is the pullback of $\Phi$, and that the boundary of the image of $\hat M$ in $M$, denoted by {\color{blue} $\scri :=\partial [\Phi(\hat M)]$}, is a smooth hypersurface where $\Omega$ vanishes:
$$
\Omega \eqs 0, \hspace{1cm}  \bm{n} := d\Omega \neqs 0 .
$$

$\scri$ is called ``null infinity''. When $\Lambda \geq 0$ it consists of two (not necessarily connected) subsets: future ($\scri^+$) and past ($\scri^-$) null infinity, distinguished by the absence of endpoints of past or future causal curves contained in $(M,g)$, respectively.
Under appropriate decaying conditions for the physical Ricci tensor $\hat R_{\mu\nu}$ one has \cite{Penrose65,Kroon}
\be\label{causalcharacter}
n_\mu n^\mu \eqs -\frac{\Lambda}{3} \hspace{4mm} \Longrightarrow \scri \, \, \mbox{is} 
\left\{
\begin{array}{lcr}
\mbox{timelike} & \mbox{if} & \Lambda <0\\
\mbox{null} & \mbox{if} & \Lambda =0\\
\mbox{{\color{blue} spacelike}} & \mbox{if} & \Lambda >0
\end{array}
\right.
\ee
In the cases with $\Lambda \geq 0$, $n_\mu$ is taken to be future pointing.

There is a {\em gauge freedom} by changing the conformal factor by an arbitrary positive factor
\be\label{gauge}
\Omega \rightarrow \Omega\omega, \hspace{1cm} 0< \omega\in C^\infty (M).
\ee
Though this is not necessary, in order to concorde with references \cite{Fernandez-Alvarez_Senovilla20,Fernandez-Alvarez_Senovilla20b,Fernandez-Alvarez_Senovilla-afs,Fernandez-Alvarez_Senovilla-dS} I am going to {\em partly} fix this gauge freedom by choosing $\Omega$ such that $\nabla_\mu \nabla^\mu \Omega \eqs 0$, which in turn implies
\be\label{nablan}
\nabla_\mu n_\nu =\nabla_\mu \nabla_\nu \Omega \eqs 0.
\ee
The remaining gauge freedom is given by functions $\omega>0$ restricted to 
$$
\lied_n \omega =n^\mu\nabla_\mu \omega\eqs 0.
$$

$\scri$ being a hypersurface, it inherits a metric from $(M,g)$, its first fundamental form: 
$$
h(X,Y) := g(X,Y),  \hspace{1cm} \forall X,Y \in \mathfrak{X}(\scri).
$$
Given any basis $\{\vec{e}{}_a\}$ ($a,b,\dots =1,2,3$) of vector fields in $\mathfrak{X}(\scri)$ the corresponding components are denoted by
$$
h_{ab} = g(\vec{e}_a,\vec{e_b}) .
$$
Due to (\ref{causalcharacter}) the metric $h_{ab}$ is Riemannian (positive definite) if $\Lambda >0$, Lorentzian if $\Lambda <0$ and degenerate if $\Lambda =0$. In the latter case, $n^\mu$ is tangent to $\scri$ so that $n^\mu =n^a e^\mu_a$, and then $n^a$ is the degeneration direction
\be
h_{ab} n^a =0, \hspace{1cm} (\Lambda =0).
\ee

For general $\Lambda$, and according to (\ref{nablan}), $\scri$ is a {\em totally geodesic} hypersurface, its second fundamental form vanishing\footnote{In the general case where the partial gauge fixing (\ref{nablan}) is not enforced $\scri$ is a {\em totally umbilic} hypersurface, the second fundamental form being proportional to the first fundamental form.}:
$$
K(X,Y) =0 \hspace{1cm} \forall X,Y \in \mathfrak{X}(\scri) .
$$
This leads to the existence of a canonical torsion-free connection $\overline\nabla$ on $\scri$, inherited from $(M,g)$, {\em independently of the sign of $\Lambda$}:
$$
\overline\nabla_X Y := \nabla_X Y  \hspace{1cm} \forall X,Y \in \mathfrak{X}(\scri).
$$
This connection is, of course, the Levi-Civita connection of $(\scri,h_{ab})$ whenever $\Lambda \neq 0$. Actually, one has
\be\label{nablah}
\overline\nabla_c h_{ab} =0
\ee
for all values of $\Lambda$.

One can also define a volume 3-form $\epsilon_{abc}$ by
$$
- n_\alpha \epsilon_{abc} :\eqs V \eta_{\alpha\mu\nu\rho} e^\mu{}_a e^\nu{}_b e^\rho{}_c .
$$
where $\eta_{\alpha\mu\nu\rho}$ is the canonical volume 4-form in $(M,g)$ and the constant
$$
V=\left\{\begin{array}{ccc}
(|\Lambda|/3)^{1/2} & \mbox{if} & \Lambda \neq 0\\
1 & \mbox{if} & \Lambda =0 .
\end{array}
\right.
$$
Again $\overline{\nabla}_d \epsilon_{abc}=0$ in all cases.

Henceforth, we will say that $S\subset \scri$ is a {\em cut} on $\scri$ if it is a 2-dimensional spacelike submanifold immersed in $\scri$. When $\Lambda >0$ the `spacelike' character is ensured and all possible 2-dimensional submanifolds are cuts. For $\Lambda =0$, cuts are cross sections of the null $\scri$ transversal to the null generators everywhere. In many cases cuts will have $\mathbb{S}^2$ topology, and these always exist in the regular (or asymptotically Minskowskian) case when $\Lambda =0$ as the topology of $\scri$ is $\mathbb{R}\times \mathbb{S}^2$ \cite{Geroch1977}. However, this will not be necessarily the case when $\Lambda >0$ and, furthermore, even in the case with  $\scri \simeq \mathbb{R}\times \mathbb{S}^2$ one might be interested in preferred cuts with non-$\mathbb{S}^2$ topology. Examples are given in \cite{Fernandez-Alvarez_Senovilla-dS}.


\section{Asymptotic (radiant) super-momentum: the radiation criterion}\label{sec:Q}
The fact is that a real gravitational field is described by the curvature of the spacetime. In particular, gravitational radiation is the propagation of curvature, the propagation of changing geometrical properties, in space and time. Hence, the existence of gravitational radiation carrying energy-momentum --lost by isolated systems in their dynamical evolution-- should be amenable to a description that considers the strength of the curvature, that is, the strength of the tidal gravitational effects, as the fundamental variable. This is the basic idea to be developed in what follows which was put forward and developed in detail in \cite{Fernandez-Alvarez_Senovilla20,Fernandez-Alvarez_Senovilla20b,Fernandez-Alvarez_Senovilla-afs,Fernandez-Alvarez_Senovilla-dS}.

The strength\footnote{This could be called the `energy' of the Weyl curvature, but I prefer to use the word `strength' to avoid misunderstandings, as the physical units are not those of energy \cite{Senovilla97,Senovilla2000}. Actually the name `super-energy' has been traditionally used for these quantities quadratic in the curvature, but this may also lead to confusion. A better name could be the {\em tidal energy}, but we will have to wait to see if this will eventually catch up.} of the tidal gravitational forces can be appropriately described by the {\em Bel-Robinson tensor} (see Appendix \ref{App:1}), defined by
$$
{\cal T}_{\alpha\beta\lambda\mu}=C_{\alpha\rho\lambda}{}^{\sigma}
C_{\mu\sigma\beta}{}^{\rho}+\stackrel{*}{C}_{\alpha\rho\lambda}{}^{\sigma}
\stackrel{*}{C}_{\mu\sigma\beta}{}^{\rho} .
$$
${\cal T}_{\alpha\beta\lambda\mu}$ is conformally invariant, fully symmetric and traceless
$${\cal T}_{\alpha\beta\lambda\mu}={\cal T}_{(\alpha\beta\lambda\mu)},\hspace{1cm} {\cal T}^{\rho}{}_{\rho\lambda\mu}=0
$$
and satisfies the dominant property
\be\label{DP}
{\cal T}_{\alpha\beta\lambda\mu}u^{\alpha}v^{\beta}w^{\lambda}z^{\mu}\geq 0
\ee
for arbitrary future-pointing vectors $u^{\alpha}$, $v^{\beta}$, $w^{\lambda}$, and $z^{\mu}$ (inequality is strict if all of them are timelike).
The Bel-Robinson tensor is also covariantly conserved 
$$\nabla^{\alpha}{\cal T}_{\alpha\beta\lambda\mu} =0$$
if the $\Lambda$-vacuum Einstein's field equations $R_{\beta\mu}=\Lambda g_{\beta\mu}$ hold.
This provides conserved quantities if there are (conformal) Killing vector fields \cite{Senovilla2000,Lazkoz2003}.
Nevertheless, ${\cal T}_{\alpha\beta\lambda\mu}$ is not a good tensor to describe radiation arriving at {\em infinity}. The reason is that one can prove under very general circumstances that the Weyl tensor vanishes at $\scri$ \cite{Penrose65,Kroon,Geroch1977}: $$C_{\alpha\beta\mu}{}^\nu \eqs 0.$$ Therefore, the Bel-Robinson tensor vanishes there too.

However, the vanishing of the Weyl tensor at $\scri$ allows us to introduce the re-scaled Weyl tensor 
\be\label{def:d}
d_{\alpha\beta\mu}{}^\nu :=\frac{1}{\Omega} C_{\alpha\beta\mu}{}^\nu 
\ee
which is well defined, and generically non-vanishing,  at $\scri$.
This is a conformally invariant traceless tensor field defined on $M$ with the same symmetry and trace properties as the Weyl tensor: it is a Weyl-tensor candidate ---see Appendix \ref{App:1}. In the physical spacetime one has
$$\nabla_{\nu} d_{\alpha\beta\mu}{}^\nu \stackrel{\hat M}{=} \Omega^{-1} \hat\nabla_\nu \hat C_{\alpha\beta\mu}{}^\nu$$
so that $d_{\alpha\beta\mu}{}^\nu$ is divergence-free on $\hat M$ and also at $\scri$ in $\Lambda$-vacuum\footnote{Actually, at $\scri$ it is enough that the physical Cotton tensor decays quickly enough.}.
The gauge behaviour of the re-scaled Weyl tensor under the remaining gauge freedom (\ref{gauge}) is simply
$$\hspace{1cm} d_{\alpha\beta\mu}{}^\nu \rightarrow \frac{1}{\omega} d_{\alpha\beta\mu}{}^\nu .$$
The Bianchi identities imply that
\be\label{dn}
d_{\alpha\beta\mu}{}^\nu n_\nu +2 \nabla_{[\alpha} S_{\beta]\mu} \eqs 0
\ee
where $S_{\beta\mu} := \frac{1}{2}(R_{\beta\mu} -\frac{1}{6} g_{\beta\mu})$ is the Schouten tensor on $(M,g)$.

Given that $d_{\alpha\beta\mu}{}^\nu$ is a Weyl-tensor candidate, we can build its super-energy tensor $T\{d\}$ as shown in Appendix \ref{App:1} 
$$
				T\{d\}_{\alpha\beta\gamma\delta} := D_{\alpha\beta\gamma\delta} := \Omega^{-2}\T_{\alpha\beta\gamma\delta}=  d_{\alpha\mu\gamma}{}^\nu d_{\delta\nu\beta}{}^\mu +\stackrel{*}{d}_{\alpha\mu\gamma}{}^\nu \stackrel{*}{d}_{\delta\nu\beta}{}^\mu 
$$
which can also be considered as a \emph{re-scaled Bel-Robinson tensor}.
$\D_{\alpha\beta\gamma\delta}$ is regular at $\scri$, non-vanishing in general.
$\D_{\alpha\beta\gamma\delta}$ has all the properties of the Bel-Robinson tensor, in particular is fully symmetric and traceless. It is also divergence-free at $\scri$ under the decaying conditions for the physical energy-momentum tensor that imply $\nabla_\nu d_{\alpha\mu\gamma}{}^\nu\eqs 0$.
 Its gauge behaviour under (\ref{gauge}) is 
				$$\D_{\alpha\beta\gamma\delta}\rightarrow \frac{1}{\omega^2}\D_{\alpha\beta\gamma\delta}.$$
				
From now on we will concentrate in the physical relevant case with non-negative $\Lambda\geq 0$.
The fundamental object on which the entire approach is based is the following one-form			
\be\label{Pi}
\fbox{$\Pi_\alpha := -n^\mu n^\nu n^\rho \D_{\alpha\mu\nu\rho} =-\nabla^\mu \Omega \nabla^\nu \Omega \nabla^\rho \Omega \D_{\alpha\mu\nu\rho}  $}
\ee
which is geometrically well and uniquely defined at $\scri$. We will mainly use the properties of $\Pi_\alpha$ at $\scri$. From the general dominant property of super-energy tensors (Appendix \ref{App:1}) one knows that $\Pi_\alpha|_\scri$ is causal and future pointing ---this is also true on a neighbourhood of $\scri$ when $\Lambda >0$, and can always be achieved on such a neighbourhood when $\Lambda =0$ by appropriate choices of $\Omega$. In general, we call $\Pi_\alpha|_\scri$ the {\em asymptotic super-momentum}. Actually, in the $\Lambda =0$ situation, $\Pi_\alpha|_\scri$ is null, and to stress this fact we add the adjective ``radiant'' and then a specific notation is used:
\bean
&\Lambda =0:& \hspace{7mm} \Pi_\mu |_\scri := \Q_\mu, \hspace{7mm} \Q_\mu \Q^\mu =0 \hspace{2mm} \mbox{(Asymptotic radiant super-momentum)}\\
&\Lambda >0:& \hspace{7mm} \Pi_\mu |_\scri := p_\mu, \hspace{7mm} p_\mu p^\mu \leq 0 \hspace{2mm} \mbox{(Asymptotic super-momentum)}
\eean

The gauge behaviour under (\ref{gauge}) is the same for both $\Q_\mu$ and $p_\mu$, namely we have in general
$$\Pi_\alpha |_\scri \rightarrow \omega^{-5} \Pi_\alpha |_\scri .
$$
Furthermore we have the following important property
\be \label{divPi}
					\nabla_\mu\Pi^\mu\eqs 0  
\ee
which holds in full generality when $\Lambda =0$ \cite{Fernandez-Alvarez_Senovilla-afs}, but needs to assume 
that the energy-momentum tensor of the physical space-time $(\hat M,\hat g_{\mu\nu})$ behaves approaching $\scri$ as $ \hat{T}_{\alpha\beta}|_\scri \sim \mathcal{O}(\Omega^3)$ \cite{Fernandez-Alvarez_Senovilla-dS} (this includes the vacuum case $ \hat{T}_{\alpha\beta}=0$).	

The existence of gravitational radiation cannot be detected at a given point, due to the non-local nature of the gravitational field. Thus, the maximum one can aspire for is to detect the radiation by tidal deformations of cuts \cite{Penrose1986}. Consider thus any cut $S\subset \scri$ and let $\ell^\mu$ be a null normal to $S$ such that $\bm{\ell}\wedge \bm{n} \neq 0$. 
The criteria that we found to detect the existence or absence of gravitational radiation arriving at $\scri^+$ (or departing from $\scri^-$) are as follows \cite{Fernandez-Alvarez_Senovilla20,Fernandez-Alvarez_Senovilla20b,Fernandez-Alvarez_Senovilla-afs,Fernandez-Alvarez_Senovilla-dS}
\begin{crit}[Absence of radiation on a cut]\label{crit1}
When $\Lambda \geq 0$, there is no gravitational radiation on a cut $S\subset \scri$ with spherical topology if and only if $\Pi_\alpha|_S$ is orthogonal to $S$ pointing along the direction $\ell_\alpha +\mbox{{\rm sgn}} (\Lambda) \left(n_\alpha -\ell_\alpha\right)$.
\end{crit}		
Observe that this criterion states that $p_\mu$ points along $n_\mu$ if $\Lambda >0$, and that if $\Lambda =0$, $\Q_\mu$ points along $\ell_\mu$ (which in this case is uniquely determined as the null direction orthogonal to $S$ other than $n_\mu$).

The restriction on the topology of the cut will be justified later when we discuss the equivalence with the standard characterization of a vanishing news tensor if $\Lambda =0$. However, such a restriction can be somewhat relaxed if one considers open portions of $\scri$. Thus, 
let now $\Delta\subset \scri$ denote an open portion of $\scri$ with the same topology of $\scri$.
\begin{crit}[Absence of radiation on $\Delta \subset \scri$]\label{crit2}
When $\Lambda \geq 0$, there is no gravitational radiation on an open portion $\Delta\subset \scri$ that admits a cut with $\mathbb{S}^2$-topology if and only if $\Pi_\alpha|_\Delta$ is transversal to $\scri$ and orthogonal to $\Delta$. This is the same as saying that $\Pi_\alpha|_\Delta$ is orthogonal to every cut within $\Delta$.

Equivalently, there is no gravitational radiation on such open portion $\Delta\subset \scri$ if and only if $n_\alpha|_\Delta$ is a principal direction of the re-scaled Weyl tensor $d_{\alpha\beta\lambda\mu}$ there.
\end{crit}		
Observe that these criteria are identical for both cases with positive or zero $\Lambda$, and that they are purely geometrical and fully determined by the algebraic properties of $d_{\alpha\beta\lambda\mu}$. Here, the principal directions of the Weyl-tensor candidate $d_{\alpha\beta\lambda\mu}$ are considered in the classical sense \cite{Pirani57,Bel1962}, that is, those lying in the intersection of the principal planes, or in other words, the common directions of the eigen-2-forms of $d_{\alpha\beta\lambda\mu}$ when seen as an endomorphism on 2-forms. Recall that, considering only the {\em causal} principal vectors, for Petrov type I there is one principal {\em timelike} vector and no null one, for Petrov type D there is an entire 2-plane of causal principal directions --which contains the two null multiple null ones-- and finally for Petrov types II, III, or N there is just one null principal vector and no timelike one.

Let me then make some brief considerations about the implications of these criteria from the viewpoint of the algebraic properties of the re-scaled Weyl tensor. In the case with $\Lambda =0$, stating that $\Q_\alpha$ is orthogonal to $\Delta\subset \scri$ {\em and} transversal to $\scri$ can only happen if $\Q_\alpha$ actually vanishes there $\Q_\alpha|_\Delta =0$. But this is known to imply \cite{Bergqvist98,Senovilla2011} that the null $n^\mu$ is actually a multiple principal null direction of $d_{\alpha\beta\lambda\mu}|_\Delta$, that is to say, the re-scaled Weyl tensor is algebraically special and, at least, of Petrov type II there, which is in accordance with the discussion in \cite{Krtous2004}. Hence, if $d_{\alpha\beta\lambda\mu}$ is type I and $\Lambda =0$, the existence of radiation is ensured. In the case with $\Lambda >0$, $p_\mu$ is orthogonal to $\Delta\subset \scri$ (and then automatically transversal too) if $p_\mu$ points along the normal $n_\mu$, so that $\bm{p}\wedge \bm{n}=0$. This states that the `asymptotic' super-Poynting (see later section \ref{subsec:superp}) relative to the frame defined by $n^\mu$ vanishes, that is
$$
\left(\delta^\mu_\nu -\frac{3}{\Lambda}  n^\mu n_\nu\right)p^\nu \stackrel{\Delta}{=} 0 
$$
which implies that $n^\mu$ is a principal vector of $d_{\alpha\beta\lambda\mu}$ \cite{Bel1962,Ferrando1997}. As $n_\mu$ is timelike in this situation, absence of radiation in this case requires that $d_{\alpha\beta\lambda\mu}|_\Delta$ is of Petrov type I or D. The converse does not hold; for instance, the $C$-metric is Petrov type D and contains gravitational radiation, see section \ref{sec:fin} and \cite{Fernandez-Alvarez_Senovilla-dS}.

There should be no confusion between the Petrov type of the physical Weyl tensor $\hat{C}_{\alpha\beta\lambda}{}^\mu$ and that of $d_{\alpha\beta\lambda}{}^{\mu}$. Of course, there is a relation between them, as the Petrov type of the latter can only be equally, or more, degenerate than that of the former in the asymptotic region. This follows because the Weyl tensor is conformally invariant so that $\hat{C}_{\alpha\beta\lambda}{}^\mu\stackrel{\hat M}{=} C_{\alpha\beta\lambda}{}^\mu$ and therefore, using (\ref{def:d}) the Petrov type of $d_{\alpha\beta\lambda}{}^{\mu}$ is the same as that of $\hat{C}_{\alpha\beta\lambda}{}^\mu$ on a neighbourhood of $\scri$. By using any invariant characterization of the Petrov types, for instance with curvature invariants or the number of principal null directions, one easily deduces that the Petrov type of $d_{\alpha\beta\lambda}{}^\mu$ at $\scri$ is as degenerate, or more, than that of the physical Weyl tensor near $\scri$. The reasoning is that, if one of the invariants used in the classification \cite{Stephani2003} vanishes in the neighbourhood of $\scri$ then it will also vanish at $\scri$, while if it does not vanish on the neighbourhood, it may vanish or not at $\scri$. Therefore, the possible Petrov types of $d_{\alpha\beta\lambda}{}^{\mu}$ are restricted as follows
\begin{itemize}
\item If the Petrov type of $\hat{C}_{\alpha\beta\lambda}{}^{\mu}$ in the asymptotic region is I, then $d_{\alpha\beta\lambda}{}^{\mu}$ can have {\em any} Petrov type at $\scri$
\item If the Petrov type of $\hat{C}_{\alpha\beta\lambda}{}^{\mu}$ in the asymptotic region is II, then $d_{\alpha\beta\lambda}{}^{\mu}$ can have any Petrov type at $\scri$ except I
\item If the Petrov type of $\hat{C}_{\alpha\beta\lambda}{}^{\mu}$ in the asymptotic region is III, then $d_{\alpha\beta\lambda}{}^{\mu}$ can have Petrov types III, N and 0 at $\scri$
\item If the Petrov type of $\hat{C}_{\alpha\beta\lambda}{}^{\mu}$ in the asymptotic region is N, then $d_{\alpha\beta\lambda}{}^{\mu}$ is either Petrov type N or 0 at $\scri$
\item If the Petrov type of $\hat{C}_{\alpha\beta\lambda}{}^{\mu}$ in the asymptotic region is D, then $d_{\alpha\beta\lambda}{}^{\mu}$ is either Petrov type D or 0 at $\scri$
\item If $\hat{C}_{\alpha\beta\lambda}{}^{\mu}=0$ on an open asymptotic region, then $d_{\alpha\beta\lambda}{}^{\mu}\eqs 0$
\end{itemize}
Hence, all Petrov types on the asymptotic region of the physical spacetime --except 0-- are compatible with the existence, and with the absence, of gravitational radiation crossing $\scri$.

In what follows, first I will show that criterion \ref{crit2} coincides with the traditional one when  $\Lambda =0$, and then I will discuss the implications that it has when $\Lambda >0$.

\section{The case with $\Lambda =0$: equivalence with the news criterion}\label{sec:Lambda=0}
As we saw in subsection \ref{subsec:prelim}, if $\Lambda =0$ $n_\mu $ is null, $h_{ab}$ is degenerate, $n^\mu \eqs n^a e^\mu{}_a$ and $n^a$ is the degeneration vector field at $\scri$, ergo tangent to its null generators: $h_{ab}n^a =0$. Using the canonical connection and (\ref{nablan}), $n^a$ is parallel on $\scri$: 
\be\label{nablan1}
\overline\nabla_b n^a =0.
\ee

The topology of $\scri$ is usually taken to be $\mathbb{R}\times \mathbb{S}^2$, though there are cases where this does not hold if there are singularities or incompleteness of $\scri$. In the standard case with $\scri\simeq \mathbb{R}\times \mathbb{S}^2$, the cuts $\Sc$ can be chosen to be topologically $\mathbb{S}^2$, see figure \ref{fig:cut}.
For any cut $\Sc$ there is a {\em unique} lightlike vector field $\ell^\mu$ orthogonal to $\Sc$ and such that $n_\mu \ell^\mu =-2$ ---this is the vector field $\ell^\mu$ used in criterion \ref{crit1}. I will denote by $\{\vec{E}_A\}$ any basis of $\mathfrak{X}(\Sc)$ ($A,B,\dots =2,3$). These can be extended to vector fields on $\scri$ by choosing them on any cut and then propagating them such that $\lied_n E^a_A = M_A n^a$ (for some $M_A$ which will be irrelevant in what follows), where $\lied_v$ is the Lie derivative with respect to $v^a$ on $\scri$. Then $\{\vec{e}_a\}=\{\vec n,\vec{E}_A\}$ are a basis of vector fields on $\scri$. Let $h^{ab}$ represent any tensor field satisfying 
$$h^{ab} h_{ac} h_{bd} = h_{cd}.$$
Such $h^{ab}$ suffers from an indeterminacy as $h^{ab} +n^a s^b + n^b s^a$ also satisfies the condition, for arbitrary $s^b$. Nevertheless, $h^{ab}$  allows us to raise indices and take traces {\em unambiguously} when acting on covariant tensors fully orthogonal to $n^a$.

\begin{figure}[h]
\includegraphics[height=10cm]{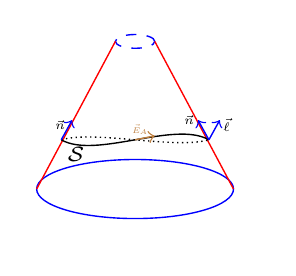}
\caption{This is a schematic representation of $\scri^+$ when $\Lambda =0$, where $\vec n$ is the null degeneration vector field, $\Sc$ is a cut, $\vec\ell$ is the unique null vector orthogonal to $\Sc$ and transversal to $\scri$, and $\vec E_A$ are vector fields tangent to the cut. Cuts are 2-dimensional surfaces, usually with $\mathbb{S}^2$ topology. In the picture, one dimension is suppressed, so that here this topology of the cut is represented as a circumference. \label{fig:cut}}
\end{figure}

The connection $\overline\nabla$, which is inherited from the spacetime, has a curvature tensor $\overline{R}_{abc}{}^d$ and the corresponding (symmetric) Ricci tensor $\overline{R}_{ac}:=\overline{R}_{adc}{}^d$. It happens that
$$
\overline{R}_{ab} n^b =0
$$
and therefore 
\be\label{Rbar}
\overline{R} := h^{ab} \overline{R}_{ab} 
\ee
is well defined.

Due to (\ref{nablah}) and to the vanishing of the second fundamental form on $\scri$, which induces (\ref{nablan1}), in this case we also have on $\scri$
$$\lied_n h_{ab}=0.$$
Hence, {\em all possible cuts are isometric}, with a first fundamental form 
$$
q_{AB}:=h_{ab} E^a_A E^b_B, \hspace{15mm} \lied_n q_{AB} =n^c\nablah_c q_{AB} =0
$$
which is basically the non-degenerate part of $h_{ab}$. Its covariant derivative will be denoted by $D_A$. The scalar curvature (or twice the Gaussian curvature) of the cuts is precisely (\ref{Rbar}) ---and $\lied_n \overline R =0$. Of course, only the conformal class is fixed because of the gauge freedom (\ref{gauge}): 
\be\label{gaugeh}
h_{ab} \rightarrow \tilde{h}_{ab}\eqs \omega^2 h_{ab},\hspace{1cm}   \tilde{q}_{AB} \eqc \omega^2 q_{AB}.
\ee

The structure $(h_{ab},n^a)$ on $\scri$ is universal. Nevertheless, observe that it does not contain any {\em dynamical} behaviour. The dynamics, and therefore the possible existence of gravitational radiation, is not encoded in this universal structure: it comes from structure {\em inherited} from the physical spacetime. In this $\Lambda =0$ situation the time dependence along $\scri$ is actually encoded in the connection $\overline\nabla$ and its curvature. This is crucial. Notice that
$$
\lied_n \overline\nabla \neq 0, \hspace{1cm} [\lied_n,\overline\nabla]\neq 0
$$
In particular, for any one-form $\bm{t}$ 
\be
[\lied_n,\overline\nabla_b]t_a = -n^c t_c \, \left(\overline{S}_{ab}-\frac{1}{2}h^{ef}\overline{S}_{ef} h_{ab}\right)
\ee
where $\overline{S}_{ab}$ is the pull-back of the Schouten tensor to $\scri$:
$$\overline{S}_{ab}:\eqs S_{\mu\nu}e^\mu{}_a e^\nu{}_b , \hspace{1cm} n^a \overline{S}_{ab} =0$$
also given by
$$\overline{S}_{ab}-\frac{1}{2}h^{ef}\overline{S}_{ef} h_{ab}= \overline{R}_{ab} -\frac{1}{2}\overline R h_{ab}.$$

In plain words, $\overline{S}_{ab}$ encodes the time variations within $\scri$, hence it contains the information about any gravitational radiation crossing $\scri$.
However, $\overline{S}_{ab}$ has non-trivial gauge behaviour:
\be\label{gaugeS}
\overline{S}_{ab} \rightarrow \overline{S}_{ab} - \frac{1}{\omega}\overline\nabla_a \overline\nabla_b \omega + \frac{2}{\omega^2}\overline\nabla_a\omega\overline\nabla_b\omega-\frac{1}{2\omega^2} h_{ab}\, \omega^c\overline\nabla_c \omega
\ee
(here $g^{\mu\nu}\nabla_\nu\omega : \eqs \omega^c e^\mu{}_c$).
One needs to extract the relevant gauge-invariant part of $\overline{S}_{ab}$: this is the News tensor field.

There are many ways to define the News tensor field, such as by using expansions in Bondi coordinates \cite{Bondi1960,Bondi1962,Sachs1962}, or defining the asymptotic outgoing shear \cite{Ashtekar81,Kroon,Penrose65,Stewart1991}, or by computing the limit at $\scri$ of $\Omega^{-1} \nabla_\mu n_\nu$ in certain gauges \cite{Winicour}.
To our purposes, the best suited definition is just the dynamical (time-dependent) and gauge invariant part of $\overline{S}_{ab}$, in accordance with \cite{Geroch1977}. This is a geometrically neat and physically clarifying definition. 

To find the explicit expression, I start by noticing that $\overline{S}_{ab}$ is orthogonal to $n^a$, so that only the components $S_{AB} =\overline{S}_{ab}E^a_A E^b_B$ are non-zero. Nevertheless, these components change from cut to cut, due to the dynamical dependence of $\overline{S}_{ab}$ itself. By projecting \eqref{dn} to $\scri$ one has
\be\label{bianchi}
2\nablah_{[a}\overline{S}_{b]c}=-e^\alpha_a e^\beta_b e^\lambda_c d_{\alpha\beta\lambda}{}^\mu n_\mu
\ee
from where it easily follows
$$
\lied_n \overline{S}_{bc} = n^c\nablah_c \overline{S}_{bc} = -n^\alpha e^\beta_b e^\lambda_c d_{\alpha\beta\lambda}{}^\mu n_\mu\neq 0
$$
which is non-vanishing in general. In particular
$$
\lied_n S_{AB} = n^c\nablah_c S_{AB} \neq 0
$$
so that $S_{AB}$ depend on the cut. Such a time-dependent part is what interests us. Consequently, we need tu subtract, from $\overline{S}_{ab}$, a tensor field that is symmetric, orthogonal to $n^a$, time-independent and with a gauge behaviour that compensates \eqref{gaugeS}. Explicitly, we need a tensor field $\rho_{ab}$ such that
\be\label{rhoprop}
\rho_{ab}=\rho_{ba}, \hspace{1cm} n^a\rho_{ab}=0, \hspace{8mm} \overline\nabla_{[c}\rho_{a]b} =0,
\ee
and with the following gauge behaviour under \eqref{gaugeh}
$$
\tilde{\rho}_{ab}= \rho_{ab} - \frac{1}{\omega}\overline\nabla_a \overline\nabla_b \omega + \frac{2}{\omega^2}\overline\nabla_a\omega\overline\nabla_b\omega-\frac{1}{2\omega^2} h_{ab}\, \omega^c\overline\nabla_c \omega .
$$
Note that $n^c\nablah_c \rho_{ab}=0$ follows from the above, so that $\rho_{ab}$ is actually a true 2-dimensional tensor field, with only $\rho_{AB}$ non-zero components and these are time-independent $n^c\nablah_c \rho_{AB}=0$. Therefore, it is enough to have this tensor field on any cut. {\em But this is the tensor $\rho_{AB}$ studied in Appendix \ref{App:rho}}. Observe that then we have, in addition, $h^{ab} \rho_{ab} =\overline{R}/2$.

The News tensor field is defined by \cite{Geroch1977}
\be\label{news}
\fbox{$N_{ab}:= \overline{S}_{ab} -\rho_{ab}$}
\ee
and has the following properties
$$
N_{ab}=N_{ba}, \hspace{1cm} n^a N_{ab} =0, \hspace{1cm} h^{ab} N_{ab} =0
$$
and, more importantly, $N_{ab}$ {\em is gauge invariant} under \eqref{gaugeh}
$$
N_{ab}=\tilde{N}_{ab}.
$$
From \eqref{bianchi}, \eqref{news} and \eqref{rhoprop} we derive
\be\label{bianchiN}
2\nablah_{[a}N_{b]c}=-e^\alpha_a e^\beta_b e^\lambda_c d_{\alpha\beta\lambda}{}^\mu n_\mu
\ee
from where, as before,
$$
\lied_n N_{ab} \neq 0
$$
in general, so that the News tensor generically changes from one cut to another.
The pullback of $N_{ab}$ to any cut $\Sc$ is denoted by 
$$
N_{AB}(\Sc)\eqc N_{ab} E^a{}_A E^b{}_B.
$$
I will also use the notation
$$
\dot N_{AB}(\Sc) :\eqc E^a{}_A E^b{}_B \lied_n N_{ab}
$$
The classical characterization of gravitational radiation in the case $\Lambda =0$ is given as follows
\begin{Definition}[Classical radiation characterization]\label{def:news}
There is no gravitational radiation on a given cut $\Sc\subset \scri$ if and only if the News tensor vanishes there:
$$
N_{AB}(\Sc)=0 \Longleftrightarrow N_{ab}\eqc 0 \Longleftrightarrow \mbox{no gravitational radiation on $\Sc$}
$$
\end{Definition}
\begin{Remark}
Observe that $N_{ab}$ is a tensor field, and its vanishing at any point is an invariant statement. Nevertheless, one cannot aspire to localize gravitational radiation at a point, and thus the vanishing of $N_{ab}$ at a given point has no meaning in principle --see e.g. the discussion in \cite{Penrose1986}. On the other hand, the vanishing of $N_{ab}$ on an entire cut does have a meaning, as this is a quasilocal statement. In this sense, $N_{ab}$ is related to the quasi-local energy-momentum properties of the gravitational field at $\scri$.
\end{Remark}

To justify the previous definition, a description of the gravitational energy-momentum properties at infinity is needed, which in turn requires the knowledge of the asymptotic symmetries, that is, the symmetries of $\scri$: the BMS group \cite{Geroch1977,Bondi1962,Winicour,Penrose1966,Sachs62}. A convenient characterization of the infinitesimal isometries of $\scri$ that is independent of the gauge choice is given by the vector fields $\vec Y\in \mathfrak{X}(\scri)$ satisfying
$$
\lied_Y (n^a n^b h_{cd})=0.
$$
This can be shown to be equivalent to ($\phi\in C^\infty(\scri)$)
$$
\lied_Y n^b =-\phi n^b, \hspace{1cm} \lied_Y h_{ab} = 2\phi h_{ab}
$$
and the set of such vector fields is a Lie algebra.
Any vector field of the form $Y^a=\alpha n^a$, with $\lied_n \alpha =0$ (and gauge behaviour $\tilde\alpha =\omega\alpha$), satisfies these relations. These are called {\em infinitesimal super-translations}, and constitute an infinite-dimensional Abelian ideal. The rest of the BMS algebra is given by the conformal Killing vectors of $(\Sc,q_{AB})$ (i.e., the Lorentz group for round spheres).
There exists, however, a 4-dimensional Abelian sub-ideal constituted by the solutions of the linear equation ($\Delta$ is the Laplacian on $(\Sc,q_{AB})$, see Appendix \ref{App:rho})
$$
\overline\nabla_a \overline\nabla_b \alpha +\alpha \rho_{ab}-\frac{1}{2}h_{ab} \left(\Delta \alpha + \frac{\overline R}{2} \right)=0 
$$
whose elements are called {\em infinitesimal translations}. This equation is fully orthogonal to $n^a$ and time independent (its Lie derivative with respect to $n^a$ vanishes), and thus it is actually fully equivalent to the equation on any given cut
$$
D_A D_B \alpha -\frac{1}{2}q_{AB} \Delta \alpha +\alpha \left(\rho_{AB} -\frac{\overline R}{4}q_{AB} \right)=0 .
$$
{\em This is precisely equation \eqref{eq:Hess-gen}} whose four independent solutions are denoted by $\pi_{(\mu)}$.
Using these solutions $Y^b_{(\mu)} :=\pi_{(\mu)} n^b$, the corresponding {\em Bondi-Trautman 4-momentum} on any given cut $\Sc$ can be expressed as \cite{Geroch1977}
$$
B_{(\mu)} (\Sc) :=-\frac{1}{32\pi}  \int_\Sc \pi_{(\mu)} \left( d_{\beta\mu\nu}{}^\rho n_\rho\ell^\beta n^\mu\ell^\nu +2\sigma^{AB}N_{AB}\right) 
$$
where $\sigma_{AB}$ is the shear tensor of $\Sc$ along $\ell^\mu$, that is to say, the trace-free part of $E^\mu{}_A E^\nu{}_B \nabla_\mu\ell_\nu$ on $\Sc$.
\begin{figure}
\includegraphics[height=10cm]{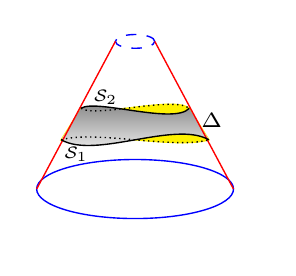}
\caption{Schematic representation of a portion $\Delta$ of $\scri^+$ delimited by two cuts $\Sc_1$ and $\Sc_2$ when $\Lambda =0$, one dimension suppressed. The cut $\Sc_2$ is to the future of $\Sc_1$. The portion $\Delta$ has the same topology as $\scri^+$ and is depicted by the shadowed part.\label{fig:Delta}}
\end{figure}
Let now $\Delta\subset\scri^+$ be a connected open portion of $\scri^+$, with the same topology as $\scri^+$, and limited by two cuts $\Sc_1$ and $\Sc_2$, with $\Sc_2$ entirely to the future of $\Sc_1$, as shown in figure \ref{fig:Delta}. 
One can compute the Bondi-Trautman 4-momentum for both cuts, and check what is the difference. 
The result is, removing any matter content around $\scri^+$ for simplicity and to make things clearer (for the general case see e.g. \cite{Geroch1977,Fernandez-Alvarez_Senovilla-afs}), 
$$
B_{(\mu)}(\Sc_2)-B_{(\mu)}(\Sc_1) = -\frac{1}{32\pi}\int_{\Delta} \pi_{(\mu)}  h^{ab} h^{cd} N_{ac} N_{bd} 
$$
which is a null vector in the auxiliary Minkowski metric of Appendix \ref{App:rho} where $\eta^{\mu\nu} \pi_{(\mu)}\pi_{(\mu)}=0$ and in particular has a strictly negative 0-component. This leads to the interpretation of News of definition \ref{def:news}.

We can finally prove the equivalence of definition \ref{def:news} with Criteria \ref{crit1} and \ref{crit2}. 
On a given cut $ \Sc $, one can split the radiant super-momentum into its null transverse (along $\ell^\alpha$) and tangent parts to $ \scri $,
$$
			\Q^\alpha\eqc 
			\frac{1}{2}\W \ell^\alpha + \overline{\Q}^a e^\alpha{}_a,$$
		where
$\W := -n^\mu \Q_\mu \geq 0$ and 
$$
			\overline{\Q}^a :=\frac{1}{2} \Z n^a + \overline{\Q}^A E^a{}_A\quad \text{with}
			\hspace{3mm} \Z := -\ell_\mu \Q^\mu \geq 0 .
$$
These quantities are observer-independent: $\Z$ and $\overline{\Q}^A$ depend only on the cut, while $\W$ is fully intrinsic to $\scri$. 

The theorem that proves equivalence with criterion \ref{crit1} is:
\begin{Theorem}[Radiation condition]\label{th}
				 There is no gravitational radiation on a given cut $ \Sc \subset \scri $ with $\mathbb{S}^2$ topology if and only $\Q^\mu$ points along $\ell^\mu$ on that cut:
$$
				   N_{AB}(\Sc)= 0 \quad\Longleftrightarrow \quad \overline{\Q}^a\eqc 0 \quad (\Longleftrightarrow \quad \Z = 0).
$$
				
\end{Theorem}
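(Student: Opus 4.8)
My plan is to establish the two equivalences in turn: the parenthetical one, $\overline{\Q}^a \eqc 0 \Leftrightarrow \Z=0$, follows purely from the causal character of $\Q^\mu$, whereas the substantive one, $N_{AB}(\Sc)=0 \Leftrightarrow \overline{\Q}^a\eqc 0$, requires an algebraic identification of $\Q^\mu$ followed by a global argument on the cut. For the first, I would use that $\Q^\mu$ is null when $\Lambda=0$, i.e. $\Q_\mu\Q^\mu=0$. Substituting the decomposition $\Q^\alpha\eqc\frac12\W\ell^\alpha+\frac12\Z n^\alpha+\overline{\Q}^AE^\alpha{}_A$ and using $n_\mu\ell^\mu=-2$, $n_\mu n^\mu=0$ and $E_A\cdot E_B=q_{AB}$ gives $\Q_\mu\Q^\mu=-\W\Z+q_{AB}\overline{\Q}^A\overline{\Q}^B=0$, so that $\W\Z=q_{AB}\overline{\Q}^A\overline{\Q}^B$. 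Since $\W\geq0$, $\Z\geq0$ and $q_{AB}$ is positive definite, $\Z=0$ forces $\overline{\Q}^A=0$ and hence $\overline{\Q}^a=\frac12\Z n^a+\overline{\Q}^AE^a_A=0$; the converse is trivial because $\Z$ is the $n$-component of $\overline{\Q}^a$. No reference to the News is needed here.

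The heart of the proof is to compute $\Q^\mu$ explicitly on $\Sc$ in a complex null frame $\{n,\ell,m,\bar m\}$ adapted to the cut. Because $\D_{\alpha\beta\gamma\delta}$ is the super-energy tensor of the Weyl candidate $d_{\alpha\beta\gamma}{}^\delta$, its spinor form is $\D\sim\Psi_{ABCD}\bar\Psi_{A'B'C'D'}$ with $\Psi_{ABCD}$ the totally symmetric Weyl spinor of $d$, and the triple contraction with $n^\mu\leftrightarrow\iota^A\bar\iota^{A'}$ factorises: setting $\phi_A:=\Psi_{ABCD}\iota^B\iota^C\iota^D$ one finds $\Q_{AA'}=-\phi_A\bar\phi_{A'}$, which makes the null (and, with the dominant property, causal) character manifest. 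Writing $\ell^\mu\leftrightarrow o^A\bar o^{A'}$ with $o_A\iota^A=1$, one has $\phi_A=\Psi_4\,o_A-\Psi_3\,\iota_A$, where $\Psi_4$ and $\Psi_3$ are the re-scaled Weyl scalars with four and three legs along $n$, respectively. Expanding $\Q_{AA'}=-\phi_A\bar\phi_{A'}$ then gives, with positive constants, $\W\propto|\Psi_4|^2$, $\Z\propto|\Psi_3|^2$, and tangential part $\propto\Psi_3\bar\Psi_4$. In particular $\overline{\Q}^a\eqc 0 \Leftrightarrow \Z=0 \Leftrightarrow \Psi_3\eqc 0$, reducing the theorem to identifying $\Psi_3\eqc0$ with $N_{AB}(\Sc)=0$.

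To close the argument I would use the Bianchi identity \eqref{bianchiN} together with the spherical topology. Projecting \eqref{bianchiN} entirely onto $\Sc$ gives $2\nablah_{[A}N_{B]C}=-E^\alpha_AE^\beta_BE^\lambda_C d_{\alpha\beta\lambda}{}^\mu n_\mu$; the right-hand side is exactly the $\Psi_3$ component, while on the left, $N_{AB}$ being symmetric and trace-free, the antisymmetrised derivative in two dimensions carries precisely the divergence $D^AN_{AB}$. Hence $\Psi_3\eqc0$ is equivalent to $D^AN_{AB}=0$ on $\Sc$, i.e. to $N_{AB}$ being transverse-traceless, and the implication $N_{AB}(\Sc)=0\Rightarrow\Psi_3\eqc0$ is immediate. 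The converse is the main obstacle, and it is where the hypothesis of $\mathbb{S}^2$ topology is indispensable: a symmetric, trace-free, divergence-free tensor field on the $2$-sphere must vanish (there are no nontrivial transverse-traceless tensors on $\mathbb{S}^2$; equivalently, viewing $N_{AB}$ as a spin-weight $-2$ field, $\eth$ has trivial kernel on $\mathbb{S}^2$), so $D^AN_{AB}=0$ forces $N_{AB}=0$. This step fails for cuts of higher genus, which is precisely why the theorem—and its equivalence with Definition \ref{def:news}—is stated for spherical cuts. I expect the careful verification of the frame identities $\W\propto|\Psi_4|^2$, $\Z\propto|\Psi_3|^2$ and of the exact divergence/edth relation to be the most laborious but conceptually routine part, whereas the topological vanishing on $\mathbb{S}^2$ is the genuine crux.
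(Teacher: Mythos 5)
Your proof is correct, and its overall skeleton coincides with the paper's: both routes reduce the theorem, via the Bianchi identity \eqref{bianchiN} and the two-dimensional trace identity for symmetric trace-free tensors, to the condition $D^AN_{AB}=0$ on $\Sc$, and both then conclude by the vanishing of symmetric, trace-free, divergence-free (equivalently Codazzi) tensors on a topological $\mathbb{S}^2$ --- the paper invokes \cite{Liu1998}, you invoke the equivalent fact that there are no nontrivial holomorphic quadratic differentials on the sphere; you are also right that this is the step where the $\mathbb{S}^2$ hypothesis is indispensable. Where you genuinely diverge is in how the components of $\Q^\mu$ are obtained. The paper performs a direct ``somehow long calculation'' expressing $\W$, $\Z$ and $\overline{\Q}^A$ as explicit quadratics in $\dot N_{AB}$ and $D_{[A}N_{B]C}$, namely \eqref{nQ}--\eqref{FQ}, and reads the equivalences off those formulas. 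You instead factorize the radiant super-momentum spinorially, $\Q_{AA'}=-\phi_A\bar{\phi}_{A'}$ with $\phi_A=\Psi_{ABCD}\iota^B\iota^C\iota^D=\Psi_4\,o_A-\Psi_3\,\iota_A$, so that $\W\propto|\Psi_4|^2$, $\Z\propto|\Psi_3|^2$ and the tangential part is $\propto\Psi_3\bar{\Psi}_4$, and only then bring in \eqref{bianchiN} to identify $\Psi_3$ with $D_{[A}N_{B]C}$, i.e.\ with $D^AN_{AB}$. This buys two things: the nullity of $\Q^\mu$ is manifest, so your derivation of $\overline{\Q}^a\eqc 0\Leftrightarrow\Z=0$ from $\W\Z=q_{AB}\overline{\Q}^A\overline{\Q}^B$ needs no reference to the News at all (the paper gets it from \eqref{lQ}--\eqref{FQ} instead), and the chain $\Z=0\Leftrightarrow\Psi_3=0\Leftrightarrow D^AN_{AB}=0$ requires noticeably less computation. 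What your route does not supply is the relation $\W\eqc 2\dot N^{RT}\dot N_{RT}$ of \eqref{nQ}; that is irrelevant for the present theorem but is what the paper uses in the subsequent theorem on open portions $\Delta\subset\scri$, so if you intend to continue in this framework you would still need to relate $\Psi_4$ to $\lied_n N_{ab}$.
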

\begin{proof}
Projecting \eqref{bianchiN} to $\Sc$, a somehow long calculation leads to
\begin{align}
			\W&\eqc 2\dot {N}^{RT}\dot {N}_{RT} \geq 0 ,\label{nQ}\\
			\Z&\eqc 8 D^{[A} N^{B]C} D_{[A} N_{B]C}=4 D_C N^C{}_A D_B N^{BA}  \geq 0 ,\label{lQ}\\
			\overline{\Q}^A &\eqc 	 8\dot N_{BC} D^{[B} N^{A]C} = -4 \dot N^{BA} D_C N^C{}_B .\label{FQ}	
			\end{align}
Eq. \eqref{lQ} implies that $\Z= 0 \iff D_{[A} N_{B]C} =  0 $. Using \eqref{FQ}, this happens if and only if $ \overline\Q^a= 0 $, that is, if and only if $2\Q^\mu \eqc \W \ell^\mu$.
But $D_{[A} N_{B]C} =  0$ ---or equivalently $D_A N^A{}_B =0$--- informs us that $N_{AB}$ is a traceless symmetric Codazzi (and divergence-free) tensor on the compact $\Sc$, which implies \cite{Liu1998} that $N_{AB}=0$. Hence $  N_{AB}= 0  \iff \overline{\Q}^a= 0 $ on $\Sc$ .
			\end{proof}
			
\begin{Remark}
As the radiant super-momentum $\Q^\mu$ is always null, this theorem can be equivalently stated as: there is no gravitational radiation on a given cut $ \Sc \subset \scri $ if and only if the radiant super-momentum is orthogonal to $\Sc$ everywhere and not co-linear with $n^\alpha$. 
Notice that, given a cut, this statement is totally unambiguous.			
\end{Remark}

Similarly, the theorem that proves equivalence with criterion \ref{crit2} is:
\begin{Theorem}[No radiation on $ \Delta\subset\scri$] 
There is no gravitational radiation on an open portion $ \Delta \subset \scri $ which contains a cut with topology $\mathbb{S}^2$ if and only if the radiant super-momentum $ \Q{^\alpha} $ vanishes on $\Delta$:
$$
				   N_{ab}\stackrel{\Delta}{=}  0 \quad\Longleftrightarrow \quad \Q{^\alpha}\stackrel{\Delta}{=}  0 .
$$
\end{Theorem}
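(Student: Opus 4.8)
The plan is to reduce the statement on the open portion $\Delta$ to the single-cut result of Theorem \ref{th} by foliating $\Delta$ by spherical cuts, and then to exploit the explicit formulas \eqref{nQ}--\eqref{FQ} to control the one component of $\Q^\alpha$ that a single cut cannot detect. Since $\Delta$ has the topology of $\scri\simeq\mathbb{R}\times\mathbb{S}^2$ and contains a cut $\Sc_0$ with $\mathbb{S}^2$ topology, I would first Lie-drag $\Sc_0$ along the null generators $n^a$. Because $\lied_n h_{ab}=0$ in this $\Lambda=0$ setting, all leaves of the resulting foliation $\{\Sc_t\}$ are isometric $\mathbb{S}^2$ cuts transversal to $\scri$, so each one is eligible for Theorem \ref{th}.

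For the implication $\Q^\alpha\stackrel{\Delta}{=}0\Rightarrow N_{ab}\stackrel{\Delta}{=}0$ I would argue leaf by leaf. On each $\Sc_t$ the vanishing of $\Q^\alpha$ forces in particular its $\scri$-tangent part $\overline{\Q}^a\eqc 0$ (equivalently $\Z=0$), so Theorem \ref{th} gives $N_{AB}(\Sc_t)=0$. As $N_{ab}$ is orthogonal to $n^a$, its only non-zero components in the adapted frame $\{\vec n,\vec E_A\}$ are the $N_{AB}$; hence vanishing of $N_{AB}(\Sc_t)$ on every leaf is exactly the vanishing of the tensor field $N_{ab}$ throughout $\Delta$.

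For the converse $N_{ab}\stackrel{\Delta}{=}0\Rightarrow\Q^\alpha\stackrel{\Delta}{=}0$, the point is that $N_{ab}$ now vanishes on an \emph{open} set, so all of its derivatives tangent to $\scri$ vanish there too: both the intrinsic derivatives $D_A N_{BC}\eqc 0$ on each leaf and the transverse (time) derivative $\dot N_{AB}=E^a{}_A E^b{}_B\lied_n N_{ab}\eqc 0$. Substituting these into \eqref{nQ}, \eqref{lQ} and \eqref{FQ} makes every right-hand side vanish, so $\W=\Z=\overline{\Q}^A=0$ and therefore $\Q^\alpha\stackrel{\Delta}{=}0$.

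The component that deserves the most care --- and the precise reason an open portion yields a strictly stronger conclusion than a single cut --- is $\W=-n^\mu\Q_\mu$. Theorem \ref{th} on an isolated cut controls only the tangential part $\overline{\Q}^a$, through the \emph{spatial} derivatives of $N_{AB}$, and leaves $\W$ undetermined: a cut may well have $N_{AB}(\Sc)=0$ while $\dot N_{AB}\neq 0$, since by \eqref{nQ} $\W$ is quadratic in the \emph{time} derivative $\dot N_{AB}$. Passing to an open portion is exactly what forces $N_{ab}$ to be stationary along the generators, killing $\dot N_{AB}$ and hence $\W$. I therefore expect the only genuine technical point to be verifying that the spherical foliation of $\Delta$ exists and is regular, so that the leaf-wise application of Theorem \ref{th} patches into a statement about the tensor field $N_{ab}$ on all of $\Delta$; the algebra in \eqref{nQ}--\eqref{FQ} is then immediate.
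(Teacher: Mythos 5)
Your proof is correct and follows essentially the same route as the paper: both directions rest on the formulas \eqref{nQ}--\eqref{FQ} together with the single-cut machinery of Theorem \ref{th} (i.e.\ the Codazzi-tensor theorem on compact $\mathbb{S}^2$ cuts), with the key observation that openness of $\Delta$ kills $\dot N_{AB}$ and hence the component $\W$ that a single cut cannot control. The only cosmetic difference lies in the converse direction, where you propagate the conclusion over $\Delta$ by foliating with isometric $\mathbb{S}^2$ leaves and applying Theorem \ref{th} leaf by leaf, whereas the paper uses $\W\stackrel{\Delta}{=}0\Rightarrow\dot N_{AB}\stackrel{\Delta}{=}0$ to make $N_{ab}$ time-independent and then invokes the Codazzi argument on a single compact cut.
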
	
\begin{proof}
If one can find cuts with $\mathbb{S}^2$ topology in $\Delta$, then according to the previous remark and theorem \ref{th}, absence of radiation on $\Delta$ requires that $2\Q^\alpha\eqc \W \ell^\alpha$ on {\em every possible such cut} $\Sc$ included in $\Delta$. But this is only possible if $\Q^\alpha\stackrel{\Delta}{=} 0$. 
More generally, observe first that $N_{ab}\stackrel{\Delta}{=}  0$ trivially implies $\Q^\alpha\stackrel{\Delta}{=}  0$ due to \eqref{nQ}-\eqref{FQ} independently of the topologies. Conversely, if $\Q^\alpha \stackrel{\Delta}{=}  0$, then from \eqref{nQ} $\dot N_{AB}\stackrel{\Delta}{=}  0$, so that $N_{ab}$ is time independent and $N_{AB}$ is the same for all possible cuts (as they are all locally isometric). From \eqref{lQ} we also have $D_{[A} N_{B]C}=0$ on every cut. Thus, if a compact cut has a positive Gaussian curvature --so that its topology is necessarily $\mathbb{S}^2$, then a known theorem \cite{Liu1998} implies that $N_{AB}=0$.
\end{proof}

\begin{Remark}
If there is gravitational radiation at $\scri$, there can arise situations where actually $2\Q^\mu =\W\ell^\mu\neq 0$ for a given foliation of cuts, with $\Z= 0$ on them. Of course, this is only possible if the cuts have a non-$\mathbb{S}^2$ topology. In this case, on those cuts $D_{[A} N_{B]C}=0$ (and $D_B N^{BA}=0$). In particular, for instance if $\overline R =0$ one further has $D_C N_{AB}=0$, so that $N_{AB}$ is constant on those cuts. Hence, $N_{ab}=N_{ab}(v)$ are functions of a single coordinate $v$ such that the foliation is defined by $v=$const.\, and necessarily $n^a\nablah_a v \neq 0$. For any other cut not in this special foliation $\Z\neq 0$. In any case, the non-vanishing of $\Q^\mu$ detects the radiation in this case correctly. Some examples of this situation exist in the C-metric and the Robinson-Trautman solutions.
\end{Remark}

\section{The case with $\Lambda >0$}\label{sec:dS}
The case of asymptotically de Sitter spacetimes is much harder and of a different nature. The main differences and the basic complications arise due to the fact that $\bm n$ is now timelike, and thus $\scri$ is a spacelike hypersurface: there is no notion of `evolution'. The topology of $\scri$ is not determined, and it has no `universal' structure. The existence of infinitesimal symmetries is not guaranteed. There is a big issue concerning in- and out-going gravitational radiation. The very notion of energy is unclear because there cannot be any globally defined timelike Killing vector ---actually all possible Killing vectors on $(\hat M,\hat g)$ become tangent to $\scri$ at $\scri$.
And there are other issues, see e.g., \cite{Penrose2011,Ashtekar2017, Ashtekar2014,Szabados2019}. Still, criteria \ref{crit1} and \ref{crit2} appropriately identify the cases without radiation, even though there remain some subtleties to be understood concerning the mixture (or possible anihilation) of in- and out-going radiation.

Let us start by noticing that, contrary to the asymptotically flat case where generally one deals with a nice topology $\mathbb{R}\times \mathbb{S}^2$, in the case with $\Lambda>0$ the topology of any connected component of $\scri$ is not determined, figure \ref{fig:scri}. Its topology can be (see e.g. \cite{Mars2017} with examples)
\begin{figure}
\includegraphics[width=12cm]{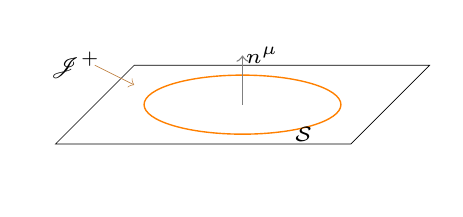}
\caption{This is a schematic representation of $\scri^+$ when $\Lambda >0$, where $n^\mu$ is timelike and normal to $\scri^+$, and $\Sc$ represents a cut with spherical topology. As usual, one dimension is suppressed. The topology of $\scri$ is not fixed, the manifold can be $\mathbb{R}^3$, $\mathbb{R}\times\mathbb{S}^2$, $\mathbb{S}^3$, or even $\mathbb{S}^3\setminus \{p_1,\dots, p_n\}$ with $n>2$, see the main text. If, for instance, the topology is $\mathbb{S}^3$, the shown schematic representation should be understood as a stereographic projection onto Euclidean space. Thus, the best way to always imagine $\scri^+$ when $\Lambda >0$ is as $\mathbb{S}^3$, possibly with a number of points removed.\label{fig:scri}}
\end{figure}
\begin{enumerate}
\item $\mathbb{S}^3$. This is the case for de Sitter or Taub-NUT-de Sitter spacetimes.
\item $\mathbb{R}\times \mathbb{S}^2$. This happens in Kerr-de Sitter spacetime, including Kottler with spherical symmetry.
\item $\mathbb{R}^3$, such as in Kottler spacetimes with non-positively curved group orbits.
\item Others, $\mathbb{S}^3\setminus \{p_1,\dots, p_n\}$ with $n>2$.
\end{enumerate}
The {\em conformal} geometry of $(\scri,h_{ab})$ is given by the completion of the physical spacetime. In particular 
\begin{itemize}
\item its intrinsic Schouten tensor, which actually coincides with the pull-back of the Schouten tensor on $(M,g)$:
$$\overline{S}_{ab}:=\overline{R}_{ab}-\frac{\overline{R}}{4} h_{ab} \eqs S_{\mu\nu}e^\mu{}_a e^\nu{}_b$$
\item and the corresponding Cotton-York tensor $C_{ab}$, which coincides with the {\em magnetic} part of the re-scaled Weyl tensor \cite{Kroon,Ashtekar2014,Friedrich1986a}
\be\label{eq1}
\ele C_{ab}:= \epsilon_a{}^{cd} \overline\nabla_c \overline{S}_{db} \eqs \stackrel{*}{d}_{\mu\nu\rho}{}^\sigma \n_\sigma e^\mu{}_a \n^\nu e^\rho{}_b
\ee
where $\n^\mu $ is the normalized version of $n^\mu$.
\end{itemize}


Only the trace-free part of $\overline{S}_{ab}$ enters into the previous equation.
Given the foliation by spacelike hypersurfaces $\Omega =$  const.\ around $\scri$ determined by $\bm{n}=d\Omega$, the time derivative of its shear $\sigma_{\mu\nu}$ coincides, on $\scri$, with the mentioned trace-free part:
$$
\dot \sigma_{ab}:\eqs  e^\mu{}_a e^\nu{}_b \lied_{\n} \sigma_{\mu\nu}=\overline{S}_{ab}-\frac{1}{12} \overline{R} h_{ab}  .
$$
The completion of the physical spacetime also provides the {\em electric} part of the re-scaled Weyl tensor\footnote{The standard notation for this electric part is $D_{ab}$  \cite{Fernandez-Alvarez_Senovilla20b,Fernandez-Alvarez_Senovilla-dS,Friedrich1986a,Friedrich1986b,Friedrich2002}, but I will use $\F_{ab}$ herein to avoid notational conflicts.}
$$
\F_{ab} :\eqs d_{\mu\nu\rho}{}^\sigma \n_\sigma e^\mu{}_a \n^\nu e^\rho{}_b
$$
but this is not intrinsic to $(\scri,h_{ab})$. $\F_{ab}$ can be seen to coincide with the second time-derivative of the shear: 
$$
\ddot \sigma_{ab} \eqs 2\ele \F_{ab}.
$$
In general, $C_{ab}$ and $\F_{ab}$ are {\em trace-free} tensors with gauge behaviour under \eqref{gauge}
$$
\{C_{ab},\F_{ab}\} \rightarrow \omega^{-1} \{C_{ab}, \F_{ab} \}.
$$
From the Bianchi identities, $C_{ab}$ is also divergence free, that is to say, it is a TT-tensor. For appropriate decaying condition of the physical energy-momentum tensor, $\F_{ab}$ is also a TT-tensor. Under these decaying conditions the Bianchi identities reduce to
\be
\nablah_a C^{ab}=0, \hspace{4mm} \nablah_a \F^{ab}=0, \hspace{4mm}
\nablah_{[c} C_{a]b} =\frac{1}{2} \epsilon_{cad}\dot{\F}^d{}_b, \hspace{4mm}
\nablah_{[c} \F_{a]b} =\frac{1}{2} \epsilon_{cad}\dot{C}^d{}_b .\label{Bianchi}
\ee
Note that the first two are consequences of the second pair by using the traceless property of $\F_{ab}$ and $C_{ab}$. In the above, the dot means derivative along the unit normal to $\scri$.

There are several fundamental results demonstrating that the geometry of the physical spacetime is fully encoded, as initial conditions of a well-posed initial value problem, on $(\scri,h_{ab})$ {\em together} with a symmetric and trace-free tensor field ($\F_{ab}$). This can be seen as an initial or final value problem. Specifically, I refer to
\begin{itemize}
\item A classical result by Starobinsky \cite{Starobinsky1983}. An expansion in powers of $e^{-\ele t}$ as $t\rightarrow \infty$ shows that the first term is a spatial 3-dimensional metric $h_{ab}$, then the next two terms are determined by the curvature of $h_{ab}$ and a traceless symmetric tensor $\F_{ab}$ whose divergence depends on the matter contents --and is divergence free in vacuum--, and these three terms determine the whole expansion.
\item A more mathematical (and more general) similar result due to Fefferman and Graham \cite{Fefferman2002,Fefferman2005} showing that given any conformal geometry $(\S,h_{ab})$ the addition of a TT-tensor $\F_{ab}$ provides, via a well determined expansion, a 4-dimensional spacetime whose conformal completion has $(\scri ,h_{ab})=(\S,h_{ab})$.
\item The results by Friedrich \cite{Friedrich1986a,Friedrich1986b,Friedrich2002,Kroon} proving that the $\Lambda$-vacuum Einstein field equations are equivalent to a set of symmetric hyperbolic partial differential equations on the unphysical spacetime and the solutions are fully determined by initial/final data consisting of a 3-dimensional Riemannian manifold with the metric conformal class plus a TT-tensor. The Riemannian manifold turns out to be (a representative of the conformal class of) $(\scri,h_{ab})$ while the TT-tensor coincides with the electric part $\F_{ab}$ of the re-scaled Weyl tensor.
\end{itemize}
 In summary, we now know that {\em any property of the physical spacetime is fully encoded in the triplet $(\scri,h_{ab},\F_{ab})$}. Consequently, the existence, or absence, of gravitational radiation {\em is also fully encoded} in $(\scri,h_{ab},\F_{ab})$. Our criteria fulfil this completely, because the
 asymptotic super-momentum can be split into the parts tangent and normal to $\scri$
$$
p^\alpha := -\D^\alpha{}_{\beta\mu\nu} n^\beta n^\mu n^\nu \eqs \overline W \n^\alpha +\bar p^a e^\alpha{}_a
$$ 
and  \eqref{divPi}, that now requieres appropriate matter decaying conditions, gives
\be\label{continuity}
\nabla_\mu p^\mu \eqs 0 \hspace{2mm} \Longrightarrow \hspace{2mm} \dot{\overline W} +\overline\nabla_a \bar p^a=0.
\ee
$\bar p^a$ is called the {\em asymptotic super-Poynting} vector.
Observe that criterion \ref{crit1} (respectively criterion \ref{crit2}) states that there is no gravitational radiation crossing a cut $\Sc\subset \scri$ (resp.\ $\Delta$) if $\bar p^a$ vanishes on $\Sc$ (resp.\ $\Delta$). From well-known old results \cite{Bel1962,Maartens1998,Alfonso2008}
\be
\bar p_a=2\left(\frac{\Lambda}{3}\right)^{(3/2)} \epsilon_{abc} C^{bd} \F^c{}_d \label{superp}
\ee
so that there is no gravitational radiation crossing $\scri$ if and only if $C^a{}_b$ and $\F^a{}_b$ conmute:
$$
 \bar p_a =0 \hspace{4mm} \Longleftrightarrow \hspace{4mm}  \epsilon_{abc} C^{bd} \F^c{}_d =0.
$$
This condition is truly encoded on $(\scri,h_{ab},\F_{ab})$ and it takes all its elements into account, as required.
\begin{Remark}[Radiation encoded at $\scri$]
From the perspective of the initial, or final, value problem, given a particular conformal geometry representing $(\scri,h_{ab})$, one only needs to add a TT tensor $\F_{ab}$ such that it does (not) conmute with the Cotton-York tensor $C_{ab}$ if the spacetime is going to (not) be free of gravitational radiation. Observe that there is a special possibility when $(\scri,h_{ab})$ is conformally flat, so that $C_{ab}=0$, in which case no matter which TT-tensor field $\F_{ab}$ one adds the resulting spacetime will not contain gravitational radiation.
\end{Remark}

Let now $\Delta\subset \scri$ be an open region of $\scri$ bounded by two disjoint cuts $\Sc_1$ and $\Sc_2$, as shown in figure \ref{fig:2cuts}. From \eqref{continuity} one easily gets 
\be\label{balance}
\int_\Delta \dot{\overline{W}}\bm{\epsilon} = \int_{\Sc_1} m^a_1 \bar p_a \bm{\epsilon}_2 -\int_{\Sc_2} m^a_2 \bar p_a \bm{\epsilon}_2
\ee
where $m^a_1$ and $m^a_2$ are the unit normals to $\Sc_1$ and $\Sc_2$ within $\scri$, respectively. We will later see that $\bar{p}_a m^a$ has a sign in relevant cases.
\begin{figure}[!ht]
\includegraphics[width=12cm]{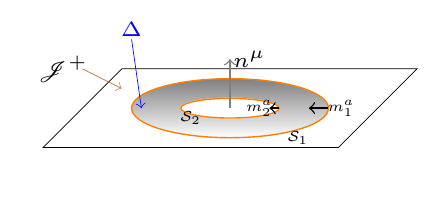}
\caption{Schematic representation of a region $\Delta$ in $\scri^+$ when $\Lambda >0$ bounded by two disjoint cuts $\Sc_1$ and $\Sc_2$. The vector fields $m_1^a$ and $m_2^a$ are the unit normal vectors to the cuts $\Sc_1$ and $\Sc_2$ within $\scri^+$, respectively. \label{fig:2cuts}}
\end{figure}

\subsection{Geometry of cuts on $\scri$}
Our criteria for absence of radiation are primarily associated to cuts, and thus it is convenient to develop some formalism for the geometry of these cross-sections of $\scri$ in relation with the physical quantities relevant for the criteria. Let $\Sc$ be any cut on $\scri$ and let $m^b$ denote the unit vector field normal to ${\cal S}$ within $\scri$ and, as before, $\{E_A^a\}$ a basis of tangent vector fields on ${\cal S}$. The first fundamental form of the cut is denoted by
$$
q_{AB}= h_{ab} E^a_A E^b_B
$$
and \eqref{gaugeh} still holds now. Define for every symmetric tensor field $\bar{t}^{ab}$ on $\scri$ its corresponding parts in an orthogonal decomposition relative to $\Sc$ and thereby introduce the notation for all such tensor decompositions:
$$
\bar{t}^{ab} = t^{AB} E^a_A E^b_B +t^A E^a_A m^b + t^B E^b_B m^a + t m^a m^b
$$ 
and then raise and lower indices of the objects on ${\cal S}$ with the inherited metric $q_{AB}$. The Levi-Civita connection of $({\cal S},q_{AB})$ is denoted by $\gamma^A_{BC}$ and one then has
$$
E^a_A \nablah_a E^b_B = \gamma^C_{AB} E^b_C -\varkappa_{AB} m^b, 
$$
where $\varkappa_{AB}$ is the 2nd fundamental form of ${\cal S}$ in $\scri$ ---and also the unique non-zero 2nd fundamental form of ${\cal S}$ in the unphysical spacetime. One can decompose this object as usual
$$
\varkappa_{AB}:=\Sigma_{AB} +\frac{1}{2} \varkappa q_{AB}, \hspace{7mm} \varkappa := q^{AB}\varkappa_{AB} , \hspace{7mm} q^{AB}\Sigma_{AB}=0 
$$
where $\Sigma_{AB}$ is the shear of ${\cal S}$ in $\scri$ ---or the unique non-zero shear of ${\cal S}$ in the unphysical spacetime. 
Furthermore, for any symmetric $t_{ab}$
$$
E^a_A E^b_B E^c_C  \nablah_c \bar{t}_{ab} = D_C t_{AB} +t_A \varkappa_{BC} +t_B \varkappa_{AC} .
$$

Under the allowed gauge transformations \eqref{gaugeh} the above objects and those relative to $\overline{S}_{ab}$ transform as follows ($\omega_A := D_A\omega$, $\omega_m := m^b \nablah_b \omega$)):
\bea 
\tilde{m}_a &=&\omega m_a,\\
\tilde{\gamma}^C_{AB} &=&\gamma^C_{AB} +\frac{1}{\omega}\left(\delta^C_A\omega_B +\delta^C_B \omega_A-\omega^C q_{AB} \right),\label{gammagauge}\\
\tilde{\varkappa}_{AB}&=& \omega \varkappa_{AB} + \omega_m q_{AB},\label{kappagauge}\\
\tilde{\Sigma}_{AB} &=& \omega \Sigma_{AB},\label{Sigmagauge}\\
\tilde{\varkappa} &=&\frac{1}{\omega} \varkappa + \frac{2}{\omega^2} \omega_m,\label{trkappagauge}\\
\tilde{S}_{AB}&=& S_{AB}-\frac{1}{\omega}D_A\omega_B +\frac{2}{\omega^2} \omega_A\omega_B -\frac{1}{2\omega^2} \omega^D\omega_D q_{AB}-\frac{\omega_m}{\omega} \left(\varkappa_{AB}+\frac{1}{2\omega} \omega_m q_{AB}\right),\label{gaugeS'}\\
\tilde{S}_A &=& \frac{1}{\omega} \left(S_A -\frac{1}{\omega} D_A\omega_m+\frac{1}{\omega}\varkappa_{AB}\omega^B +\frac{2}{\omega^2} \omega_m \omega_A \right),\\
\tilde{S} &=& \frac{1}{\omega^2} \left(S- \frac{1}{\omega} m^a m^b \nablah_a \nablah_b \omega +\frac{2}{\omega^2} \omega_m^2 -\frac{1}{2\omega^2} \nablah_c\omega \nablah^c\omega\right).
\eea

The projections of the gauge-invariant equation (\ref{eq1}) onto to cut ${\cal S}$ lead to the following relations
\bea
D_{[C}S_{A]B}+\varkappa_{B[C} S_{A]} =\frac{1}{2}\ele \epsilon_{CA} C_B, \label{eq11}\\
E^a_A E^b_B m^c  \nablah_c S_{ab} -D_A S_B+\varkappa_A^D S_{BD}-S \varkappa_{AB}= \ele\epsilon_A{}^D C_{DB}  \label{eq12}
\eea
where $\epsilon_{AB}$ is the canonical volume element 2-form on $({\cal S},q_{AB})$. Relation (\ref{eq11}) is gauge invariant, while (\ref{eq12}) is gauge homogeneous with a factor $1/\omega$. As the righthand side of (\ref{eq11}) is easily seen to be gauge invariant (because $\tilde{C}_{ab}=(1/\omega) C_{ab}$), it follows that $D_{[C}S_{A]B}+\varkappa_{B[C} S_{A]} $ is also gauge invariant. The skew-symmetric part of (\ref{eq12}) reads
$$
D_{[C} S_{A]} -\varkappa^D_{[C}S_{A]D}=\frac{1}{2} \ele \epsilon_{CA} C
$$
(notice that $C:=C_{ab}m^am^b =-C^E_E$, as follows from $C^b_b=0$),  while the symmetric part reads
$$
E^a_A E^b_B m^c  \nablah_c S_{ab} -D_{(A} S_{B)}+\varkappa_{(A}^D S_{B)D}-S \varkappa_{AB}= \ele\epsilon_{(A}{}^D C_{B)D}=\ele\epsilon_{A}{}^D \hat{C}_{BD}
$$
where we use a hat over the matrices to denote its trace-free part:
\be
\hat{C}_{AB} := C_{AB} -\frac{1}{2} q_{AB} C^E{}_E, \hspace{1cm} \epsilon_{DA}\hat{C}_{B}{}^D=\epsilon_{DB}\hat{C}_{A}{}^D = \epsilon_{D(A} C_{B)}{}^D 
\ee
and similarly for $\hat{\F}_{AB}$. Using the 2-dimensional identity
$$
\varkappa_{(A}^DS_{B)D} -\frac{1}{2} \varkappa S_{AB} -\frac{1}{2} S^D_D \varkappa_{AB}+\frac{1}{2}\left(\varkappa S^D_D-\varkappa^{CD} S_{CD} \right)q_{AB}=0
$$
the previous symmetric part can be recast into the form
\bea
E^a_A E^b_B m^c  \nablah_c S_{ab} -D_{(A} S_{B)}+\frac{1}{2} \varkappa S_{AB}+\left(\frac{1}{2} S^D_D-S\right) \varkappa_{AB}\nonumber \\
-\frac{1}{2}\left(\varkappa S^D_D-\varkappa^{CD} S_{CD} \right)q_{AB}= \ele\epsilon_{A}{}^D \hat{C}_{BD} .
\eea
 An equivalent form of (\ref{eq11}) is
$$
D_B S^B_A -D_A S^D_D +\varkappa S_A -\varkappa_A^B S_B =\ele C^D \epsilon_{DA}.\label{eq11'}
$$


One can rewrite (\ref{eq11}) in a form without $S_A$. This can be achieved by using the Gauss and Codazzi relations for ${\cal S}$, which can be checked to read
\bea
S_{A[C}q_{D]B}+q_{A[C}S_{D]B}&=&Kq_{A[C}q_{D]B}-\varkappa_{A[C}\varkappa_{D]B}, \label{gauss0}\\
D_{[C}\varkappa_{A]B}&=&q_{B[C}S_{A]} \label{cod}
\eea
Relation (\ref{cod}) is equivalent to its trace
\be
S_A=D_E\varkappa^E_A-D_A\varkappa . \label{cod'}
\ee
The Gauss equation (\ref{gauss0}) is also fully equivalent to its trace and also to its double trace
\bea
S^D_D q_{AB} &=&K q_{AB} +\varkappa_A^D\varkappa_{DB} -\varkappa \varkappa_{AB},\label{gauss}\\
S^D_D &=&K +\frac{1}{2}\left(\varkappa^{AB}\varkappa_{AB}- \varkappa^2 \right)=K-\det(\varkappa^E_F) \label{gauss'},
\eea
which can be easily checked by using a typical 2-dimensional identity, and for the last part also using the Caley-Hamilton theorem 
$$
\varkappa_A^D\varkappa_{DB} -\varkappa \varkappa_{AB}+q_{AB} \det(\varkappa^E_F)=0.
$$
Another simpler version of this relation is simply
\be
\Sigma_A{}^D\Sigma_{DB} =\frac{1}{2} \Sigma_{DE} \Sigma^{DE} q_{AB} . \label{Sigmasquare}
\ee
Notice that
$$
\varkappa_{AB} \varkappa^{AB} = \Sigma_{AB} \Sigma^{AB} +\frac{1}{2} \varkappa^2 .
$$
Using \eqref{cod'}, equation (\ref{eq11}) can be rewritten as
\be
D_{[C}S_{A]B}+\varkappa_{B[C}\left(D^E\varkappa_{A]E} -D_{A]}\varkappa \right)=\frac{1}{2}\ele \epsilon_{CA} C_B, 
\ee
whose lefthand side is (must be!) gauge invariant, in accordance with (\ref{DSgauge}). This is still equivalent, after some calculation, to
\bea
D_C \left(S^C{}_A -\frac{1}{2} \Sigma^{CE}\Sigma_{EA}+ \frac{\varkappa}{2}\Sigma^C{}_A +\frac{\varkappa^2}{8} \delta^C_A -K\delta^C_A \right)=\nonumber \\
\frac{3}{2}D_B(\Sigma^{BE}\Sigma_{EA})-\Sigma^{CE}D_E \Sigma_{CA} +\ele \epsilon_{EA}C^E\label{eq11''}.
\eea
Observe that the righthand side in this expression is gauge homogeneous with a factor $1/\omega^2$.

Projecting the Bianchi equations (\ref{Bianchi}) to the cut ${\cal S}$ as before one derives
\bea
D_{[C}C_{A]B} +\varkappa_{B[C} C_{A]} =\frac{1}{2} \epsilon_{CA} \dot\F_B,\\
E^a_A E^b_B m^c\nablah_c C_{ab} -D_A C_B +\varkappa_A{}^D C_{BD} +C^E{}_E \varkappa_{AB} =\epsilon_{AD} \dot\F^D{}_B,\\
D_{[C}\F_{A]B} +\varkappa_{B[C} \F_{A]} =\frac{1}{2} \epsilon_{CA} \dot{C}_B,\\
E^a_A E^b_B m^c\nablah_c \F_{ab} -D_A \F_B +\varkappa_A{}^D \F_{BD} +\F^E{}_E \varkappa_{AB} =\epsilon_{AD} \dot{C}^D{}_B
\eea

Analogously to Lemma \ref{lem:dt2} one can prove the following result for cuts on $\scri$ when $\Lambda >0$
\begin{Lemma}\label{lem:dt1}
Let $p_{AB}=p_{(AB)}$ be any symmetric tensor field on $({\cal S},q_{AB})$ whose gauge behaviour under residual gauge transformations (\ref{gaugeh}) is
$$
\tilde{p}_{AB}= p_{AB}-\frac{1}{\omega}D_A\omega_B +\frac{2}{\omega^2} \omega_A\omega_B -\frac{1}{2\omega^2} \omega^D\omega_D q_{AB}-\frac{\omega_m}{\omega} \left(\varkappa_{AB}+\frac{1}{2\omega} \omega_m q_{AB}\right)
$$
Then, 
\bean
\tilde{D}_{[C}\tilde{p}_{A]B}+\tilde{\varkappa}_{B[C}\left(\tilde{D}^E\tilde{\varkappa}_{A]E}-\tilde{D}_{A]}\tilde{\varkappa} \right)&=&D_{[C}p_{A]B}+\varkappa_{B[C}\left(D^E\varkappa_{A]E}-D_{A]}\varkappa \right) \\
&+&\frac{1}{\omega} \left(p_{B[C}-S_{B[C}\right)\omega_{A]}+\frac{1}{\omega}q_{B[C} \left(p^D_{A]}-S^D_{A]} \right)\omega_D 
\eean
\end{Lemma}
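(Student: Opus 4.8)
The plan is to exploit the linearity of the operator
$$
p_{AB}\longmapsto D_{[C}p_{A]B}+\varkappa_{B[C}\left(D^E\varkappa_{A]E}-D_{A]}\varkappa\right)
$$
in its tensorial argument and to reduce everything to the already-understood case $p_{AB}=S_{AB}$. The crucial observation is that, since by hypothesis $p_{AB}$ and the projected Schouten tensor $S_{AB}$ share \emph{exactly} the same inhomogeneous gauge behaviour (\ref{gaugeS'}), their difference
$$
r_{AB}:=p_{AB}-S_{AB}
$$
is \emph{gauge invariant}: $\tilde r_{AB}=r_{AB}$, because all the $\omega$-dependent terms cancel identically. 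Thus $r_{AB}$ is a genuine, gauge-inert, symmetric tensor on $({\cal S},q_{AB})$, and we may write $p_{AB}=S_{AB}+r_{AB}$.

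First I would split the left-hand side of the claim using this decomposition. By linearity,
$$
\tilde{D}_{[C}\tilde{p}_{A]B}+\tilde{\varkappa}_{B[C}\left(\tilde{D}^E\tilde{\varkappa}_{A]E}-\tilde{D}_{A]}\tilde{\varkappa}\right)=\left[\tilde{D}_{[C}\tilde{S}_{A]B}+\tilde{\varkappa}_{B[C}\left(\tilde{D}^E\tilde{\varkappa}_{A]E}-\tilde{D}_{A]}\tilde{\varkappa}\right)\right]+\tilde{D}_{[C}r_{A]B}.
$$
For the bracketed $S$-piece I would invoke the Codazzi identity (\ref{cod'}), which holds in every gauge and turns $\tilde{\varkappa}_{B[C}(\tilde{D}^E\tilde{\varkappa}_{A]E}-\tilde{D}_{A]}\tilde{\varkappa})$ into $\tilde{\varkappa}_{B[C}\tilde{S}_{A]}$, so that the bracket equals $\tilde D_{[C}\tilde S_{A]B}+\tilde\varkappa_{B[C}\tilde S_{A]}$. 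By (\ref{eq11}), together with the gauge invariance of its left-hand side established immediately thereafter, this quantity is gauge invariant and hence coincides with its untilded counterpart $D_{[C}S_{A]B}+\varkappa_{B[C}(D^E\varkappa_{A]E}-D_{A]}\varkappa)$. Consequently the \emph{entire} gauge-dependence of the statement is concentrated in the single term $\tilde{D}_{[C}r_{A]B}$.

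The core computation is therefore to relate $\tilde D_C r_{AB}$ to $D_C r_{AB}$. Since $D$ and $\tilde D$ are both metric connections differing only by the tensor $(\tilde\gamma-\gamma)^E_{CA}=\tfrac1\omega(\delta^E_C\omega_A+\delta^E_A\omega_C-\omega^E q_{CA})$ read off from (\ref{gammagauge}), I would write
$$
\tilde D_C r_{AB}=D_C r_{AB}-(\tilde\gamma-\gamma)^E_{CA}\,r_{EB}-(\tilde\gamma-\gamma)^E_{CB}\,r_{AE},
$$
substitute (\ref{gammagauge}) and antisymmetrise over the pair $C,A$ (with $B$ a spectator). In the antisymmetrisation the terms proportional to $q_{CA}$ drop by the symmetry of $q$, the contribution $\omega_B r_{[AC]}$ drops by the symmetry of $r_{AB}$, and what survives collapses, after reindexing with $r_{AB}=r_{BA}$, precisely to
$$
\tilde D_{[C}r_{A]B}=D_{[C}r_{A]B}+\frac{1}{\omega}\,r_{B[C}\,\omega_{A]}+\frac{1}{\omega}\,q_{B[C}\,r^D_{A]}\,\omega_D .
$$
Recombining $D_{[C}S_{A]B}+D_{[C}r_{A]B}=D_{[C}p_{A]B}$ and recalling $r_{AB}=p_{AB}-S_{AB}$, $r^D_A=p^D_A-S^D_A$, reproduces exactly the claimed identity.

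The only real obstacle is the index bookkeeping in this last step: one must keep straight which indices are antisymmetrised, and repeatedly use $r_{AB}=r_{BA}$ both to annihilate the $\omega_B r_{[AC]}$ contribution and to recast $\tfrac1\omega\omega_{[C}r_{A]B}$ as $-\tfrac1\omega r_{B[C}\omega_{A]}$. This is the exact analogue, for cuts of a spacelike $\scri$ with $\Lambda>0$, of the corresponding manipulation in Lemma \ref{lem:dt2}; the structural difference is merely that here $\varkappa_{AB}$ is the second fundamental form of ${\cal S}$ inside the Riemannian $(\scri,h_{ab})$ and enters through the Codazzi relation (\ref{cod'}), while every weight-bearing contribution cancels thanks to the shared gauge law (\ref{gaugeS'}).
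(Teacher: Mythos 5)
Your proof is correct, but it takes a different route from the paper, which simply performs the computation ``by direct calculation'' in the manner of Lemma \ref{lem:dt2}: one would expand $\tilde{D}_{[C}\tilde{p}_{A]B}$ using the full inhomogeneous gauge law of $p_{AB}$ (including the $\varkappa_{AB}$ and $\omega_m$ terms) together with \eqref{gammagauge}, and separately transform $\tilde{\varkappa}_{B[C}(\tilde{D}^E\tilde{\varkappa}_{A]E}-\tilde{D}_{A]}\tilde{\varkappa})$ via \eqref{kappagauge} and \eqref{trkappagauge}, checking that everything recombines. Your shortcut --- writing $p_{AB}=S_{AB}+r_{AB}$ with $r_{AB}$ gauge invariant because $p_{AB}$ and $S_{AB}$ share the same inhomogeneous law \eqref{gaugeS'}, then absorbing all the $\varkappa$-dependence into the combination $D_{[C}S_{A]B}+\varkappa_{B[C}S_{A]}$, whose gauge invariance follows from \eqref{eq11}, \eqref{cod'} and the conformal weight of the Cotton--York tensor --- reduces the whole Lemma to the $a=0$ instance of the computation already done for \eqref{Dtgauge}, and your resulting formula $\tilde D_{[C}r_{A]B}=D_{[C}r_{A]B}+\tfrac1\omega r_{B[C}\omega_{A]}+\tfrac1\omega q_{B[C}r^D_{A]}\omega_D$ is exactly \eqref{DNgauge} with $n^E{}_E$ replaced appropriately (note $r_{AB}$ need not be traceless, which is why both correction terms survive rather than collapsing as in Corollary \ref{coroDt}). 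What your approach buys is brevity and a conceptual explanation of why the $\varkappa$-terms appear at all; what it costs is self-containedness, since it imports the projected Bianchi identity \eqref{eq11}. One word of caution: the gauge invariance of $D_{[C}S_{A]B}+\varkappa_{B[C}(D^E\varkappa_{A]E}-D_{A]}\varkappa)$ is stated in the paper both \emph{before} the Lemma (via \eqref{eq11}) and \emph{after} it as the corollary \eqref{DSgauge}; you correctly cite the former, which keeps your argument non-circular, but this dependence should be made explicit.
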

The proof is again by direct calculation. As a corollary we immediately have
\be
\tilde{D}_{[C}\tilde{S}_{A]B}+\tilde{\varkappa}_{B[C}\left(\tilde{D}^E\tilde{\varkappa}_{A]E}-\tilde{D}_{A]}\tilde{\varkappa} \right)=D_{[C}S_{A]B}+\varkappa_{B[C}\left(D^E\varkappa_{A]E}-D_{A]}\varkappa \right) \label{DSgauge}
\ee

\subsubsection{The super-Poynting vector and asymptotic radiant super-momenta on cuts of $\scri$}\label{subsec:superp}
Let me denote by
$$
\vec k_\pm := \vec \n \pm \vec m, \hspace{1cm} k^\mu_+ k_{-\mu}=-2
$$
the two future null normals to the cut $\Sc$ (see figure \ref{fig:kpm}) and, given that $\S_{AB}$ is the only non-zero shear of $\Sc$ in $\scri$, the corresponding two null shears are simply $\pm\S_{AB}$.
\begin{figure}[!ht]
\includegraphics[width=12cm]{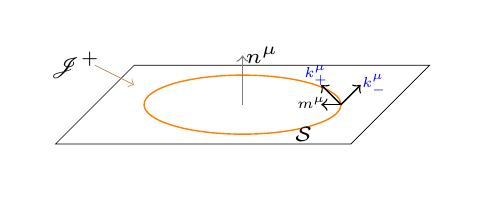}
\caption{Schematic representation of the two null normals $k^\mu_\pm =\n^\mu \pm m^\mu$ to the cut $\Sc$ at a given point of the cut. \label{fig:kpm}}
\end{figure}
We introduce, for each cut $\Sc$, the two {\em asymptotic radiant super-momenta} as
\be\label{Qpm}
Q_\pm^\alpha := -D^\alpha{}_{\mu\nu\rho} k_\pm^\mu k_\pm^\nu k_\pm^\rho, 
\ee
and they are always, by construction, null and future. It is convenient to have formulae for $\bar p_a$ and also for $Q_\pm^\alpha$ in terms of $C_{ab}$ and $\F_{ab}$. To that end, we write the asymptotic  radiant super-momenta in the given bases
\be\label{Qs}
Q_\pm^\alpha = \frac{1}{2} \W_\pm k_\mp^\alpha +\frac{1}{2} \Z_\pm k^\alpha_\pm +Q^A_\pm E^\alpha_A
\ee
or equivalently
\be\label{Qs1}
Q_\pm^\alpha = \frac{1}{2}(\W_\pm +\Z_\pm) \n^\alpha \pm \frac{1}{2} (\Z_\pm -\W_\pm) m^\alpha 
+Q^A_\pm E^\alpha_A
\ee
where by direct (long) calculation one finds
\bea
\W_\pm &:=& -k_\alpha^\pm Q_\pm^\alpha =8(\hat{\F}_{AB}\mp \epsilon_{DA}\hat{C}_{B}{}^D)
(\hat{\F}^{AB}\mp \epsilon^{CA}\hat{C}^{B}{}_C)\geq 0, \label{Ws}\\
\Z_\pm &:= & -k_\alpha^\mp Q_\pm^\alpha = 4(\F_A\pm\epsilon_{AB}C^B)(\F^A\pm\epsilon^{AD}C_D)\geq 0,
\label{Zs}\\
Q_\pm^A &:=& W^A_\alpha Q_\pm^\alpha =\pm 8 (\hat{\F}_{AB}\mp \epsilon_{D(A}\hat{C}_{B)}{}^D)(\F^B\pm\epsilon^{BE}C_E)\label{QAs}.
\eea
Some useful formulas are
\bea
\Z_+ -\Z_- &=& 16\epsilon_{AB}\F^AC^B,\hspace{1cm} \Z_+ +\Z_- = 8(\F_A\F^A+C_A C^A),\label{Zs1}\\
\W_+ -\W_- &=& 
32\epsilon_{AB}\hat{\F}^{AD}\hat{C}^B{}_D, \hspace{3mm} \W_+ +\W_- =16(\hat{\F}_{AB}\hat{\F}^{AB}+\hat{C}_{AB}\hat{C}^{AB}),\label{Ws1}\\
Q_+^A -Q_-^A &=& 16 (\hat{C}^{AB} C_B + \hat{\F}^{AB}\F_B), \, \, 
Q_+^A +Q_-^A = 16\epsilon_{AB} (\hat{C}^{BD} \F_D - \hat{\F}^{BD}C_D).
\eea
Then, the expressions of the components of $\bar p_a$ can be easily found. Orthogonally decomposing the super-Poynting on $\Sc$ as
$$
\left(\frac{3}{\Lambda}\right)^{3/2}\bar p^a = p_m m^a + p^A E_A^a 
$$
another straightforward calculation leads to
\be\label{pm}
p_m =\frac{1}{16} \left(\Z_+-\Z_--\W_++\W_-\right)+3\epsilon_{AB}C^A\F^B=\frac{1}{16} \left(\W_--\W_++2\Z_--2\Z_+ \right)
\ee
(where the first in (\ref{Zs1}) has been used) and to
\bea
p_A &=& 2\epsilon_{AB} \left(C^{BD}\F_D -\F^{BD} C_D +C^E{}_E \F^B-\F^E{}_E C^B \right)\nonumber\\
&=& 2\epsilon_{AB} \left(\hat{C}^{BD}\F_D -\hat{\F}^{BD} C_D +\frac{3}{2}C^E{}_E \F^B-\frac{3}{2}\F^E{}_E C^B \right) \label{pA}\\
&=& \frac{1}{8} \left(Q^+_A +Q^-_A\right)+3\epsilon_{AB}\left(C^E{}_E \F^B-\F^E{}_E C^B\right).\nonumber
\eea
For completeness, we note in passing that
\be\label{sumQs}
Q_+^\alpha+Q_-^\alpha= \frac{1}{2} (\W_++\W_-+\Z_+ +\Z_-) n^\alpha +\frac{1}{2}(\Z_+ -\Z_- -\W_++\W_-)m^\alpha+
\left(Q_+^A+Q_-^A \right)E^\alpha_A .
\ee

\section{Are there any News for cuts (and for $\scri$)?}\label{sec:news}
There are some objects in the literature that are called ``news'' tensor in the case with $\Lambda >0$ based on analogies with the asymptotically flat case. None of them seem to have led to properties similar to that of the News tensor when $\Lambda=0$, and one can raise some doubts about the existence of news in the general case with $\Lambda >0$. Nevertheless, in this section I describe a general method to search for such `News', and also a tensor field is uncovered that will certainly be part of any news tensor, if this exists. 

Recall first of all that, when $\Lambda =0$, $N_{ab}$ is the pull-backed Schouten tensor gauge corrected, and that one can unambiguously define the news tensor associated to any cut $\Sc$ by projecting into the cut. An interesting idea, given the previous considerations, is to try to assign to any possible cut ${\cal S}\subset \scri$ ---and especially when the cut is topologically $\mathbb{S}^2$--- a gauge invariant tensor field contained {\em partly} in the pullback to ${\cal S}$ of $\overline{S}_{ab}$. 

Why partly? Well, there are crucial differences now with respect to the case with $\Lambda =0$, as now the Schouten tensor $\overline{S}_{ab}$ is fully intrinsic to $(\scri,h_{ab})$, in contrast with the asymptotically flat case where it arises as the curvature of the connection, and this is inherited from the ambient manifold but not intrinsic to the null $(\scri,h_{ab})$. In this sense, note that \eqref{eq1} is fully intrinsic to the spacelike $(\scri,h_{ab})$ showing in particular that $\overline{S}_{ab}$ is {\em determined exclusively by $C_{ab}$} and thus {\em it cannot contain by itself any gauge-invariant part that describes the existence of radiation, which as explained before, must be encoded in the triplet $(\scri,h_{ab},\F_{ab})$.}  A key equation now is the identity
$$
\frac{1}{2} \frac{3}{\Lambda} \bar p_c = \overline\nabla_c (\F^{ab} \overline{S}_{ab} ) -\overline\nabla_a (\F^{ab}\overline{S}_{bc}) -\overline{S}_{ab} \overline\nabla_c \F^{ab}
$$
which graphically shows that the asymptotic super-Poynting depends on the interplay between $\overline{S}_{ab}$ and $\F_{ab}$. In this formula, every term on the righthand side has a complicated gauge behaviour yet their combination equals $\bar p_c$, whose gauge behaviour is simply $\bar p_c \rightarrow \omega^{-5} \bar p_c$. Given that the vanishing of $\bar p_c$ characterizes the absence of radiation, the existence of any `source' of type News for $\bar p_c$ requires a splitting of the righthand terms in gauge well-behaved parts plus a remainder that must be uniquely determined. Such a ``News tensor'' should then satisfy appropriate differential equations.

Despite these difficulties, $\overline{S}_{ab}$ will probably entail the part of the news (if this exists) not related to the TT-tensor $\F_{ab}$. This is the part that we were able to identify \cite{Fernandez-Alvarez_Senovilla-dS}, as I discuss in the following.

Let us generalize Corollary \ref{coroRho} by finding the general form of the tensor fields defined by Corollary \ref{coroDt} but with a general, non-vanishing, $D_{[C}t_{A]B}$.
\begin{Proposition}
Let $\Sc\subset \scri$ by a cut on $\scri$. If the equation 
\be\label{DU}
D_{[C}W_{A]B}= X_{CAB} 
\ee
for a given \underline{{\em gauge invariant}} tensor field $X_{CAB}=X_{[CA]B}$ has a solution for $W_{AB}=W_{(AB)}$ whose gauge behaviour is (\ref{normalgauge}) with $a=1$, then this solution is given by
\be\label{U}
W_{AB} = S_{AB} -\frac{1}{2} \Sigma_{A}{}^D \Sigma_{BD} +\frac{\varkappa}{2}\Sigma_{AB} +\frac{\varkappa^2}{8} q_{AB}+M_{AB}
\ee
where $M_{AB}$ is a trace-free, gauge invariant and symmetric tensor field solution of
\be
D_{[C}M_{A]B}= X_{CAB} -\frac{1}{2}\ele \epsilon_{CA} C_B +D_{[C} \left(\Sigma_{A]E}\Sigma_{B}{}^E \right)-\frac{1}{2} D_B \Sigma_{[C}{}^E \Sigma_{A]E}.\label{DM}
\ee
\end{Proposition}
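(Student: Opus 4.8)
The plan is to prove the Proposition by writing the sought tensor as $W_{AB}=W'_{AB}+M_{AB}$, where
\[
W'_{AB}:=S_{AB}-\tfrac12\Sigma_A{}^D\Sigma_{BD}+\tfrac{\varkappa}{2}\Sigma_{AB}+\tfrac{\varkappa^2}{8}q_{AB}
\]
is the explicit geometric combination occurring in \eqref{U} and $M_{AB}:=W_{AB}-W'_{AB}$ is the remainder. The whole statement then splits into two independent tasks: first, that $W'_{AB}$ already carries the prescribed gauge behaviour, forcing $M_{AB}$ to be gauge invariant, symmetric and (after normalization) trace-free; second, that substituting \eqref{U} into \eqref{DU} converts the curl equation for $W_{AB}$ into precisely the curl equation \eqref{DM} for $M_{AB}$.

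For the first task I would compute the gauge transform of $W'_{AB}$ by inserting \eqref{gaugeS'}, \eqref{Sigmagauge}, \eqref{kappagauge} and \eqref{trkappagauge} term by term. The key observation is that $\Sigma_A{}^D\Sigma_{BD}$ is gauge invariant (consistently with \eqref{Sigmasquare}, where it equals $\tfrac12\Sigma_{DE}\Sigma^{DE}q_{AB}$), while the inhomogeneous $\omega_m$–pieces produced by $\tfrac{\varkappa}{2}\Sigma_{AB}$ and $\tfrac{\varkappa^2}{8}q_{AB}$ are engineered to cancel exactly the $-\tfrac{\omega_m}{\omega}(\varkappa_{AB}+\tfrac{1}{2\omega}\omega_m q_{AB})$ contribution in the transformation \eqref{gaugeS'} of $S_{AB}$, once $\varkappa_{AB}=\Sigma_{AB}+\tfrac12\varkappa q_{AB}$ is used. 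What survives is only the $a=1$ part $-\tfrac1\omega D_A\omega_B+\tfrac{2}{\omega^2}\omega_A\omega_B-\tfrac{1}{2\omega^2}\omega^D\omega_D q_{AB}$, i.e. the behaviour \eqref{normalgauge} with $a=1$. Hence $M_{AB}$ is gauge invariant and symmetric. Its trace-freeness is equivalent to $q^{AB}W_{AB}=K$; since $q^{AB}W'_{AB}=K$ by the double-traced Gauss equation \eqref{gauss'} (using $\varkappa_{AB}\varkappa^{AB}=\Sigma_{AB}\Sigma^{AB}+\tfrac12\varkappa^2$), this is the natural normalization of the solution, and it also exhibits the Gaussian curvature as the trace of any admissible $W_{AB}$.

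For the second task I would substitute \eqref{U} into \eqref{DU} and evaluate $D_{[C}W'_{A]B}$. The term $D_{[C}S_{A]B}$ is eliminated with the Codazzi-projected Bianchi identity, namely \eqref{eq11} rewritten through \eqref{cod'} as $D_{[C}S_{A]B}+\varkappa_{B[C}(D^E\varkappa_{A]E}-D_{A]}\varkappa)=\tfrac12\ele\epsilon_{CA}C_B$; this produces the source term $-\tfrac12\ele\epsilon_{CA}C_B$ of \eqref{DM}. It then remains to show that the surviving extrinsic-curvature remainder, $\varkappa_{B[C}(D^E\varkappa_{A]E}-D_{A]}\varkappa)-\tfrac12 D_{[C}(\varkappa\Sigma_{A]B})-\tfrac18 D_{[C}(\varkappa^2 q_{A]B})$, together with the $\tfrac12 D_{[C}(\Sigma_{A]E}\Sigma_B{}^E)$ coming from the $-\tfrac12\Sigma_A{}^D\Sigma_{BD}$ term of $W'_{AB}$, reassembles into $D_{[C}(\Sigma_{A]E}\Sigma_B{}^E)-\tfrac12 D_B\Sigma_{[C}{}^E\Sigma_{A]E}$. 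This is a purely two-dimensional manipulation, performed by splitting every $\varkappa_{AB}$ into trace and trace-free parts, repeatedly invoking \eqref{Sigmasquare} to turn $\Sigma\Sigma$ contractions into multiples of $q_{AB}$, and using the covariant constancy of $q_{AB}$ and $\epsilon_{AB}$.

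The main obstacle is precisely this final algebraic reduction: the several $\varkappa\,D\varkappa$, $\varkappa\,D\Sigma$ and $D\Sigma\,\Sigma$ terms must be regrouped with their antisymmetrizations on $[CA]$ carefully tracked, and the delicate point is that the identity \eqref{Sigmasquare} is special to two dimensions and must be differentiated consistently throughout. A secondary subtlety worth checking explicitly is the gauge consistency of \eqref{DU} itself—that $D_{[C}W_{A]B}$ is gauge invariant for a $W_{AB}$ obeying \eqref{normalgauge} with $a=1$, as required for it to equal the gauge-invariant $X_{CAB}$—which is what ultimately fixes the trace of $W_{AB}$ to $K$ and thereby legitimizes the trace-free character of $M_{AB}$.
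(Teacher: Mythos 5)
Your proposal is correct and follows essentially the same route as the paper: verify that the explicit geometric part of \eqref{U} has the gauge behaviour \eqref{normalgauge} with $a=1$ (the $\omega_m$-terms cancelling as you describe), compute $W^E{}_E=K$ via \eqref{gauss'} and \eqref{Sigmasquare} so that Corollary \ref{coroDt} applies, and then convert the projected Bianchi identity \eqref{eq11} into \eqref{DM}. The only cosmetic difference is that you re-derive from \eqref{eq11} the algebraic reduction that the paper has already packaged as \eqref{eq11''} before invoking it.
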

{\bf Remark}: The righthand side of (\ref{DM}) is gauge invariant. If the cut has $\mathbb{S}^2$ topology the solution is unique. More generally, $M_{AB}$ (and a fortiori $W_{AB}$) is unique whenever $(\Sc,q_{AB})$ has a conformal Killing vector with a fixed point \cite{Fernandez-Alvarez_Senovilla-dS}.
\begin{proof}
By using (\ref{gammagauge}), (\ref{Sigmagauge}), (\ref{trkappagauge}) and (\ref{gaugeS'}) it is a matter of checking that the tensor \eqref{U} has the gauge behaviour (\ref{normalgauge}) with $a=1$, provided $M_{AB}$ is gauge invariant. Its trace, on using (\ref{gauss'}) and (\ref{Sigmasquare}) is 
\be\label{trW}
W^E{}_E =K.
\ee
 Therefore, Corollary \ref{coroDt} applies and $D_{[C}W_{A]B}$ is gauge invariant.
For the second part, using (\ref{eq11''}) and manipulating a little one arrives at
$$
D_{[C}W_{A]B}= \frac{1}{2}\ele \epsilon_{CA} C_B -D_{[C} \left(\Sigma_{A]E}\Sigma_{B}{}^E \right)+\frac{1}{2} D_B \Sigma_{[C}{}^E \Sigma_{A]E}+D_{[C}M_{A]B}
$$
from where (\ref{DM}) immediately follows. Due to the second part in Corollary \ref{coroDt} $D_{[C}M_{A]B}$ is gauge invariant.
\end{proof}

Now, notice that the tensor field $W_{AB}-M_{AB}$, that is,
$$
U_{AB}:=S_{AB} -\frac{1}{2} \Sigma_{A}{}^D \Sigma_{BD} +\frac{\varkappa}{2}\Sigma_{AB} +\frac{\varkappa^2}{8} q_{AB}
$$
has the following trace
\be\label{trU}
U^E_E = K
\ee
and that equation (\ref{eq11''}) can be rewritten, in terms of $U_{AB}$ as
\be\label{paso}
D_C(U^C{}_A -U^E{}_E \delta^C_A)=\frac{3}{2}D_B(\Sigma^{BE}\Sigma_{EA})-\Sigma^{CE}D_E \Sigma_{CA} +\ele \epsilon_{EA}C^E .
\ee
Contracting this equation with any conformal Killing vector field $\xi^A$ and integrating its lefthand side on $\Sc$ 
\bean
\int_\Sc \xi^A [D_C(U^C{}_A -U^E{}_E \delta^C_A)] = \int_\Sc D_C [\xi^A(U^C{}_A -U^E{}_E \delta^C_A)] - \int_\Sc (U^{CA} -U^E{}_E q^{CA})D_C\xi_A\\
=  \int_\Sc D_C [\xi^A(U^C{}_A -U^E{}_E \delta^C_A)] - \frac{1}{2} \int_\Sc (U^{CA} -U^E{}_E q^{CA})q_{CA} D_B\xi^B\\
= \int_\Sc D_C [\xi^A(U^C{}_A -U^E{}_E \delta^C_A)] +\frac{1}{2} \int_\Sc K D_B\xi^B
\eean
where in the last equality I have used \eqref{trU}. If $\Sc$ is compact the first summand here vanishes. Concerning the second, a non-trivial result proved in Appendix \ref{App:rho}, namely (\ref{intLieK}), shows that this term also vanishes if $\Sc$ is compact. Therefore, whenever the cut $\Sc$ is compact we arrive at
\be\label{CKV}
\int_{\cal S} \xi^A\left(\frac{3}{2}D_B(\Sigma^{BE}\Sigma_{EA})-\Sigma^{CE}D_E \Sigma_{CA} +\ele \epsilon_{EA}C^E \right) =0
\ee
for every conformal Killing vector fields $\xi^A$ if $\Sc$ is compact.

Define the {\em first piece of news} on $\Sc$ as the tensor field
\be\label{V}
V_{AB} := U_{AB} -\rho_{AB}
\ee
where $\rho_{AB}$ is the tensor field of Corollary \ref{coroRho}. Explicitly, the first piece of news is given by
$$
V_{AB}=S_{AB} -\frac{1}{2} \Sigma_{A}{}^D \Sigma_{BD} +\frac{\varkappa}{2}\Sigma_{AB} +\frac{\varkappa^2}{8} q_{AB}-\rho_{AB} .
$$
By construction, $V_{AB}$ is gauge invariant and trace free, so that 
$$D_{[C}V_{A]B}=D_{[C}U_{A]B}$$
 is also  gauge invariant. However, $V_{AB}$
depends {\em only} on the intrinsic geometry of $(\scri,h_{ab})$ and the cut, and therefore it simply cannot contain the desired News tensor, which must involve, as explained, ${\cal F}_{ab}$. It follows that the part described by $M_{AB}$ must be related to ${\cal F}_{ab}$, thereby bringing the information encoded in $\F_{ab}$  into the total tensor (\ref{U}). Hence, it follows that the `source' $X_{CAB}$ in the equation (\ref{DU}) has to also entail somehow ${\cal F}_{ab}$. The definition of $V_{ab}$ induces 
\be
W_{AB}= U_{AB}+M_{AB} = \rho_{AB} + V_{AB} +M_{AB} , \label{U=r+N} 
\ee
so that $M_{AB}$ is the {\em second piece of news} and the total News tensor field of the cut $\Sc$ is
\be\label{N=V+M}
N_{AB} =V_{AB} +M_{AB} .
\ee
$N_{AB}$ is symmetric, traceless, gauge invariant and satisfies the gauge invariant equation
\be\label{DN}
D_{[C}N_{A]B}= X_{CAB} .
\ee
Notice that $N_{AB}$ is partly known, as the first piece $V_{AB}$ is explicitly known for any cut $\Sc$. To find the complete news tensor one needs to identify the appropriate tensor field $X_{CAB}=X_{[CA]B}$ the provides, via \eqref{DM}, the second piece $M_{AB}$.
Thus, the problem of the existence of $N_{AB}$ reduces to the existence of a tensor field $X_{CAB}$, or equivalently of the one-form $X_A:=X^C{}_{AC}$ with
$$
X_{CAB}= 2 q_{B[C} X_{A]} , 
$$
such that the equation (\ref{DM}) has a solution for $M_{AB}$ and the vanishing of $X_A$ be equivalent, on the entire cut ${\cal S}$, to the vanishing of $N_{AB}$.

To ascertain under which circumstances such choices allow for the existence of the tensor $M_{AB}$, let us consider the trace of (\ref{DM}) which is actually equivalent to (\ref{DM}) itself:
\be\label{divM}
\frac{1}{2} D_C M^C{}_A =X_A+\frac{1}{2}\ele \epsilon_{AB}C^B-\frac{3}{8} D_A(\Sigma_{DE}\Sigma^{DE}) +\frac{1}{2} \Sigma^{CE}D_C\Sigma_{EA}.
\ee
We know that this provides the tensor field $M_{AB}$ if and only if the righthand side is $L^2$-orthogonal to every conformal Killing vector field on ${\cal S}$ (there is a 6-parameter family of these in the sphere, Appendix \ref{App:rho}). Therefore, by using here the relations (\ref{CKV}) for every conformal Killing $\xi^A$, the existence of $N_{AB}$ requires that
\be
\int_{\cal S} \xi^A X_A   =0 \label{cond}
\ee
for every conformal Killing vector $\xi^A$. An analysis of this condition is performed in Appendix \ref{App:Hodge}. Observe that, given that $X_{CAB}$ is gauge invariant, the gauge behaviour of $X_A$ is simply
\be
\tilde{X}_A = \omega^{-2} X_A \label{Xgauge} 
\ee
and therefore the statement (\ref{cond}) is gauge independent (because $\xi^A X_A \epsilon_{BC}$ is gauge invariant). 
Using here Lemma \ref{useful}, a plausible solution for $X_A$ is any one-form of the form
\be\label{possibility}
X_A = \Delta f D_A f
\ee
for a choosable function $f$ on ${\cal S}$. Observe that, due to
$$
\tilde{\Delta} f =\frac{1}{\omega^2} \Delta f, \hspace{1cm} \forall f\in C^2({\cal S})
$$
any such one-form has the correct gauge behaviour (\ref{Xgauge}) for $f$ gauge invariant. Moreover, the physical units of $X_A$ are $L^{-2}$, and thus $f$ carries no physical units. Notice finally that $X_A=0$ if and only if $f$ is constant in the sphere topology.

In principle, if one wishes that $X_A$ be related to the existence or not of radiation, so that the vanishing of a would-be news tensor field $N_{AB}$ implies the vanishing of $X_A$ and, hopefully, viceversa, the function $f$ in (\ref{possibility}) should be related to the triplet $(\scri,h_{ab},\F_{ab})$ including explicitly $\F_{ab}$. One possibility is that $f$ be a (known) function of the potentials $H_C,h_C$ and $H_\F, h_\F$ that $\hat{C}_{AB} $ and $\hat{\F}_{AB}$ possess according to formula (\ref{Hodge2}). Observe that these potentials have the right physical dimensions (a-dimensional), they do not have a simple gauge behaviour though.

\subsection{The problem of incoming and outgoing radiation: The case with $Q_-^\alpha =0$}\label{Q-=0}
As mentioned at the beginning of section \ref{sec:dS}, one of the big differences of the $\Lambda>0$-case with respecto to the $\Lambda=0$-case is the existence of possible in-coming radiation that arrives at $\scri^+$ mingling with the outgoing flux of radiation. This is a complicated matter, and there is no easy way to try to identify in- or out-going components of the radiation. It should be remarked that our criteria \ref{crit1} and \ref{crit2}, based on the vanishing of the asymptotic super-Poynting $\bar{p}^a$ in the case with $\Lambda >0$, does not discriminate between those types of radiations. The absence/presence of radiation on a cut may in general be due to a balance between several possible components, and this varies from one cut to another. This was somehow recognized time ago as a dependence of the radiative part of the field on the direction of approach to $\scri$ if $\scri$ is not a null hypersurface \cite{Penrose65,Krtous2004,Fernandez-Alvarez_Senovilla2022}.This issue is of special importance when considering isolated sources of the radiation, or sources that are confined to a compact region of the spacetime, emitting gravitational radiation. 

In the asymptotically flat scenario the lightlike character of $\scri^+$ implies that any radiation escaping from the space-time through infinity necessarily travels along lightlike directions {\em transversal} to $\scri^+$. The generators of $\scri^+$ are the only exceptions and they provide an evolution direction which can be seen as `incoming direction' and thus, radiation from the physical spacetime is exclusively outgoing. In contrast, when $\Lambda>0$ every radiation component, without exception, crosses $\scri^+$ and escapes from the space-time. In this case one needs to find physically reasonable conditions ruling out undesired radiative components, just leaving the radiation emitted by the isolated system of sources. In \cite{Ashtekar2019} a proposal to solve this problem was presented, but this relies on information from the physical spacetime. In our opinion, and according to the entire philosophy of this paper, everything happening at the portion of the physical spacetime given by the past domain of dependence of $\scri^+$ is determined by the information encoded in the triplet $(\scri^+,h_{ab},\F_{ab})$ ---plus the conformal re-scalings--- so that any `incoming radiation' or any undesired radiation components are {\em encoded in that triplet too}. I wish to stress that this is independent of the existence of multiple isolated sources emitting the radiation, or of the possibility of scattering of the radiation by other components or matter, etcetera, because {\em everything} that happens in the (domain of dependence of $\scri$ in the) physical spacetime is encoded in the initial/final data $(\scri,h_{ab},\F_{ab})$. 

\begin{figure}[!ht]
\includegraphics[width=8cm]{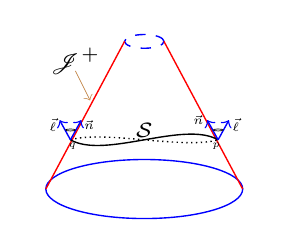}
\hspace{-1.8cm}
\includegraphics[width=10cm]{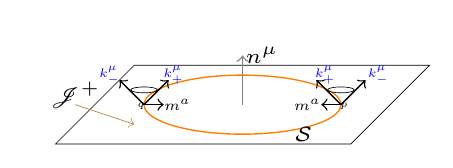}
\caption{Comparison of $\scri^+$ and null directions orthogonal to a cut $\Sc$ for the case with $\Lambda =0$ (left) and the case with $\Lambda >0$ (right). On the left the physical spacetime is the region below the cone representing $\scri^+$ and on the right the region below the plane that represents $\scri^+$. In both cases two points $p$ and $q$ belonging to the cut are shown, as well as the two null normals to the cut $\Sc$ at those points. On the left, they are given by $n^\mu$ itself, and $\ell^\mu$; on the right by $k^\mu_\pm = \n^\mu\pm m^\mu$, where $\vec m$ is the unit normal to $\Sc$ within $\scri^+$, so that $m^\mu =m^ae^\mu_a$. We know that, on the left, the vanishing of the asymptotic radiant super-momentum $\Q^\mu =0$ is equivalent to the vanishing of the news tensor and thus to the absence of radiation crossing $\scri^+$. If one modifies the cut passing through, say, $p$ the picture would be similar, but with a different $\vec\ell$. {\em All possible} such null $\vec \ell$, for all possible cuts through $p$, span the little cone shown above $p$, and similarly for $q$. Hence, vanishing of $\Q^\mu$ implies that there is no radiation on any of all those transversal directions spanning the little cone with the exception, of course, of $\vec n$, which is not transversal but tangent to $\scri^+$ and actually defines an evolution direction to the future. Notice that $\Q^\mu=0$ states that $n^\mu$ is a multiple principal null direction of the re-scaled Weyl tensor $d_{\alpha\beta\lambda}{}^\mu$. Inspiration from these properties on the left is used on the right picture to try to isolate a unique component of radiation arriving at the cut $\Sc$: when $\Lambda >0$ (right picture) set $\Q^\mu_-=0$, and assume that this implies absence of radiation arriving along the directions spanned by the little cones shown above $p$ or $q$ except along $k^\mu_-$, in analogy with the left situation. This would mean that the radiation is arriving basically along the null direction $k^\mu_-$, which again is a multiple null direction of the re-scaled Weyl tensor, so that this makes sense. If this interpretation is accepted, the vector $m^a$ on the right defines, in analogy with $n^a$ on the left, an evolution direction towards the ``future'' within the spacelike $\scri^+$. In a way, one can think that the radiation is crossing $\Sc$ towards its exterior (the projection of $k^\mu_-$). \label{fig:doble}}
\end{figure}

Moreover, one can try to get some inspiration from the asymptitcally flat situation. The vanishing of the radiant super-momentum when $\Lambda=0$ entails the absence of radiation transversal to $\scri^+$, and thus we may suspect that absence of radiation propagating {\em transversally} to some null direction is also encoded in the analogous radiant super-momenta. More specifically, in our setup the vanishing of one of the radiant supermomenta \eqref{Qpm} may mean absence of radiation components travelling along the corresponding transversal directions on that particular cut $\Sc$. This is graphically explained in figure \ref{fig:doble}.

Consider for instance the case with $\Q^\mu_- =0$ on a cut $\Sc$. By the previous discussion, this may indicate that there are no radiation components along directions transversal to $k^\mu_-$, see figure \ref{fig:doble}, in particular along the second null normal to $\Sc$, $k^\mu_+$. Observe that $\Q^\mu_- =0$ signifies that $k^\mu_-$ is a repeated principal null direction of the re-scaled Weyl tensor, and in this sense it may be thought of as the direction of propagation of asymptotic radiation. In turn, this signifies that $m^a$ is, on the given cut $\Sc$, an `incoming' direction that provides the direction of `evolution' of radiation at $\Sc$ within $\scri^+$ --in analogy with the null $n^a$ in the asymptotically flat case, figure \ref{fig:doble}. More importantly, as I am going to prove next, the condition $\Q^\mu_- =0$ can be expressed, in explicit manner, in terms of the triplet $(\scri^+,h_{ab},\F_{ab})$.
Assuming $Q_-^\alpha =0$ on ${\cal S}$ is equivalent, due to (\ref{Ws}), (\ref{Zs}) and (\ref{QAs}) for the minus sign, to 
\be
\F_A =\epsilon_{AB} C^B  \hspace{3mm} \mbox {and} \hspace{3mm} \hat{\F}_{AB}= \epsilon_{AD}\hat{C}_{B}{}^D .
\label{FC}
\ee
These conditions are actually stating that, on the cut $\Sc$
\be\label{DisC}
\fbox{$\F_{ab}-\frac{1}{2} \F_{cd}m^c m^d (3m_a m_b -h_{ab}) \eqc m^d\epsilon_{ed(a} \left(C_{b)}{}^e +m_{b)}m^f C_f{}^e \right).$}
\ee
This is our fundamental relation for cuts with only one radiation component. Note that this condition states that $\F_{ab}$ is determined by $C_{ab}$ (which is intrinsic to $(\scri, h_{ab})$) except for the one single component $\F_{cd}m^c m^d$, which is the only extra degree of freedom not given by the conformal geometry of $(\scri, h_{ab})$. This free degree of freedom concerns the Coulombian part of the gravitational field, proving that \eqref{DisC} certainly affects the radiative degrees of freedom. 

Using \eqref{DisC} one can readily compute the asymptotic super-Poynting vector on $\Sc$
$$
\left(\frac{3}{\Lambda}\right)^{3/2} \bar{p}^a\eqc -2m^a \left(C_{bc} C^{bc} + m^b C_{be} m_c C^{ce}\right) +4C^{ab} C_{bc} m^c +C_{bc} m^b m^c C^{ae} m_e -3(\F_{bc}m^b m^c)\epsilon^{ade}m_dC_{ef}m^f
$$
or equivalently (these can also be obtained from (\ref{pm}) and (\ref{pA}))
\bea
p_m =-2\left(\hat{\F}_{AB} \hat{\F}^{AB} +\F_A \F^A\right)=-2\left(\hat{C}_{AB} \hat{C}^{AB} +C_A C^A\right)\leq 0,\label{pm2} \\
p_A = \left[4\hat{\F}_{AB}+3\left(C^E{}_E \epsilon_{AB} - \F^E{}_E q_{AB} \right) \right] \F^B=
\left[4\hat{C}_{AB}+3\left(C^E{}_E q_{AB} - \F^E{}_E \epsilon_{AB} \right) \right] C^B . \label{pA2}
\eea

Concerning the asymptotic super-momentum $\Q^\alpha_+$, using again (\ref{Ws}), (\ref{Zs}) and (\ref{QAs}), now for the $+$ sign, one derives
$$
\W_+ =32 \hat{\F}_{AB} \hat{\F}^{AB}, \hspace{8mm} \Z_+ =16 \F_A \F^A ,  \hspace{8mm} Q^+_A = 32 \hat{\F}_{AB}\F^B.
$$
or equivalently
\bea
Q_+^\alpha =8 \left( 2\hat{\F}_{AB} \hat{\F}^{AB} k_-^\alpha + \F_A \F^A k^\alpha_+ +4 \hat{\F}^{AB}\F_B E^\alpha_A\right)\nonumber \\
=8 \left( 2\hat{C}_{AB} \hat{C}^{AB} k_-^\alpha + C_A C^A k^\alpha_+ +4 \hat{C}^{AB}C_B E^\alpha_A\right)\label{Q+}.
\eea
{\bf Remark}: It is remarkable that, with the restrictions put on $\F_{ab}$ in this case, $Q^\alpha_+$ is fully determined by the intrinsic geometry of $(\scri,h_{ab})$ and the cut $\Sc$ as follows from (\ref{Q+}). This is also true for $p_m$, see (\ref{pm2}). The only remaining `extrinsic' quantity identified above, $\F^E{}_E=\-\F_{ab}m^am^b$, only affects the components $p_A$ tangential to the cut. Another important point to remark is that $p_m=\bar{p}_a m^a \leq 0$ is non-positive, in accordance with our intuition that radiation in this situation travels towards the exterior of the cut $\Sc$ (figure \ref{fig:doble}), and provides an interesting interpretation for the balance law \eqref{balance}. Furthermore, $p_m=0$ implies that the entire $\bar{p}_a=0$ vanishes, and this statement again depends only on the intrinsic geometry of $(\scri,h_{ab})$ and the cut now.

If the discussed interpretation of the condition $\Q^\mu_- \eqc 0$ is to be accepted, then the absence of radiation determined by $\bar p_a$ should equivalently eliminate the unique radiative component that was left on the cut $\Sc$. This is proven in the following proposition.
\begin{Proposition}\label{prop:typeD}
The following conditions are all equivalent at any point of ${\cal S}$:
\begin{enumerate}
\item $Q_-^\mu =Q_+^\mu =0$.
\item $Q_-^\mu =0$ and $p_m=0$.
\item $Q_-^\mu =0$ and $\bar p_a=0$.
\item $\hat{\F}_{AB}=\hat{C}_{AB}=0$ and $\F_A =C_A=0$.
\item In the basis $\{\vec m,\vec E_A\}$
\be
(\F_{ab})= \F^E{}_E \left(
\begin{array}{ccc}
-1 & 0 & 0 \\
0 & 1/2 & 0 \\
0 & 0 & 1/2
\end{array}
 \right) ,
 \hspace{8mm}
 (C_{ab})= C^E{}_E \left(
\begin{array}{ccc}
-1 & 0 & 0 \\
0 & 1/2 & 0 \\
0 & 0 & 1/2
\end{array}
 \right) \label{typeD}
\ee
\end{enumerate}
\end{Proposition}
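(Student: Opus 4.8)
The plan is to establish the cyclic chain of implications $1 \Rightarrow 2 \Rightarrow 3 \Rightarrow 4 \Rightarrow 5 \Rightarrow 1$, exploiting that we work throughout under the standing hypothesis $Q_-^\mu = 0$, which by \eqref{FC} lets us trade every $\F$-quantity for the corresponding $C$-quantity (and vice versa). The equivalences $2 \Leftrightarrow 3$ should be the cheapest: since we already assume $Q_-^\mu = 0$, I would invoke the formula \eqref{pm2}, which gives $p_m = -2(\hat{\F}_{AB}\hat{\F}^{AB} + \F_A\F^A)$ as a manifestly non-positive sum of squares; hence $p_m = 0$ forces $\hat{\F}_{AB} = 0$ and $\F_A = 0$, and then \eqref{pA2} shows $p_A = 0$ automatically, so $\bar p_a = 0$. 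Conversely $\bar p_a = 0$ trivially gives $p_m = 0$. This same computation, read together with the constraint \eqref{FC}, is exactly statement~4, so in fact $2 \Leftrightarrow 3 \Leftrightarrow 4$ all collapse together.

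Next I would handle $1 \Leftrightarrow 2$. With $Q_-^\mu = 0$ in force, the components of $Q_+^\mu$ are read off from \eqref{Q+}: we have $\W_+ = 32\,\hat{\F}_{AB}\hat{\F}^{AB}$, $\Z_+ = 16\,\F_A\F^A$, and $Q^+_A = 32\,\hat{\F}_{AB}\F^B$, each a non-negative quadratic in the $\F$-data. Thus $Q_+^\mu = 0$ is equivalent to $\hat{\F}_{AB} = 0$ \emph{and} $\F_A = 0$, which is precisely condition~4, and comparing with the expression just derived for $p_m$ in \eqref{pm2} shows $Q_+^\mu = 0 \Leftrightarrow p_m = 0$ (again under $Q_-^\mu = 0$). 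Hence $1 \Leftrightarrow 2 \Leftrightarrow 4$, and combining with the previous paragraph closes the loop through $3$. The only genuinely new content left is the equivalence of the explicit matrix form~5 with condition~4.

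For $4 \Leftrightarrow 5$ I would argue purely algebraically on the cut. Condition~4 states that the trace-free tangential parts $\hat{\F}_{AB}, \hat{C}_{AB}$ and the mixed components $\F_A, C_A$ all vanish. Recall the decomposition notation introduced before \eqref{eq11}: writing $\F_{ab}$ in the basis $\{\vec m, \vec E_A\}$, its blocks are the tangential $\F_{AB}$, the mixed $\F_A$, and the normal-normal $\F \equiv \F^E{}_E = \F_{ab}m^a m^b$, with $\hat{\F}_{AB} = \F_{AB} - \tfrac12 q_{AB}\,\F^E{}_E$ (and the tracelessness of $\F_{ab}$ as a spacetime object forcing $q^{AB}\F_{AB} = -\F_{ab}m^am^b = -\F^E{}_E$, so that the $2\times 2$ tangential trace equals $-\F^E{}_E$). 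Vanishing of $\hat{\F}_{AB}$ then gives $\F_{AB} = \tfrac12 q_{AB}\,\F^E{}_E$ with the $mm$-entry equal to $-\F^E{}_E$; vanishing of $\F_A$ kills the off-diagonal mixed blocks. In the chosen orthonormal frame $q_{AB} = \delta_{AB}$ this is exactly the displayed matrix $\mathrm{diag}(-1, \tfrac12, \tfrac12)\,\F^E{}_E$, and identically for $C_{ab}$ using $\hat{C}_{AB} = C_A = 0$. The converse is immediate by inspection of the matrices. I expect this bookkeeping of the trace conventions—keeping straight that the spacetime-tracelessness of $\F_{ab}$ relates the tangential and normal traces—to be the only subtle point; everything else reduces to the sum-of-squares structure of \eqref{pm2}, \eqref{pA2} and \eqref{Q+} under the hypothesis $Q_-^\mu = 0$.
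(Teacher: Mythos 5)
Your proof is correct and follows essentially the same route as the paper, which runs the circular chain $1\Rightarrow 2\Rightarrow 3\Rightarrow 4\Rightarrow 5\Rightarrow 1$ using exactly the formulas you invoke: the sum-of-squares expressions \eqref{pm2}, \eqref{pA2} and \eqref{Q+} under the constraint \eqref{FC}, plus the general components \eqref{Ws}--\eqref{QAs}. One small point to make explicit: conditions 4 and 5 do not contain the "standing hypothesis" $Q_-^\mu=0$, so closing the loop from 4 (or 5) back to 1 requires observing that $\hat{\F}_{AB}=\hat{C}_{AB}=0$ and $\F_A=C_A=0$ by themselves force $\W_\pm=\Z_\pm=Q_\pm^A=0$ via \eqref{Ws}--\eqref{QAs} --- this is how the paper handles the step $5\Rightarrow 1$; also, in your bookkeeping the normal-normal component is $\F_{ab}m^am^b=-\F^E{}_E$, not $+\F^E{}_E$ (your final matrix is nonetheless the correct one).
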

\begin{proof} 
I provide a circular proof $1\Rightarrow 2 \Rightarrow 3\Rightarrow 4 \Rightarrow 5 \Rightarrow 1$:
\begin{itemize}
\item If $Q_-^\mu =Q_+^\mu =0$ then from (\ref{Q+}) $\hat{C}_{AB}=0=C_A$ so that (\ref{pm2}) gives $p_m=0$.
\item If $Q_-^\mu =0$ and $p_m=0$, (\ref{pm2}) implies $\hat{C}_{AB}=0=C_A$ and together with (\ref{pA2}) gives that the full $\bar p_a$ vanishes.
\item If $Q_-^\mu =0$ and $\bar p_a=0$, (\ref{pm2}) implies $\hat{C}_{AB}=0=C_A$ and then (\ref{FC}) that also $\hat{\F}_{AB}=0=\F_A$. 
\item $\hat{\F}_{AB}=\hat{C}_{AB}=0$ and $\F_A =C_A=0$ is just saying that, in the mentioned basis, the matrices of $\F_{ab}$ and $C_{ab}$ take the form displayed in (\ref{typeD}).
\item If (\ref{typeD}) holds in the given basis, then $\hat{\F}_{AB}=\hat{C}_{AB}=0$ and $\F_A =C_A=0$ so that (\ref{Ws}--\ref{QAs}) imply $\W_\pm =\Z_\pm =0=Q_\pm^A$ and thus $Q_\pm^\mu =0$. $\Box$
\end{itemize}
\end{proof}
{\bf Remark}: This case corresponds to the situation where the rescaled Weyl tensor has Petrov type D at $\scri$ and is aligned at the cut $\Sc$, that is, the two multiple principal null directions are $\vec k_\pm$ (unless when also $\F^E{}_E=C^E{}_E=0$, but this corresponds to the de Sitter spacetime if $\scri \sim \mathbb{S}^3$).

Similar formulas and results are valid if one assumes $\Q^\mu_+ =0$ instead of $\Q^\mu_-=0$.

According to the nomenclature introduced in \cite{Fernandez-Alvarez_Senovilla-dS}, if on $\Delta\subset \scri$ there exists a foliation by cuts, all of them satisfying the property $\Q^\mu_-=0$, then we say that $\Delta$ is {\em strictly equipped and strongly oriented}, the vector field $m^a$ orthogonal to the cuts providing the orientation and equipment. If in addition the cuts are umbilical ($\S_{AB}=0$), $\Delta$ is both {\em strongly equipped and oriented} by $m^a$. The existence of news under such circumstances, as well as other possibilities, were explored at large in \cite{Fernandez-Alvarez_Senovilla-dS}. In particular we proved that the first component of news provides a good total News tensor field in the case of strongly equipped and oriented $\scri$.

\subsection{A conserved charge in vacuum}
As yet another justification for criterion \ref{crit2} let me present a conserved charge, built from the re-scaled Bel-Robinson tensor, that identifies the existence of radiation in asymptotic vacuum (this could be generalized to the case with matter) when the spacetime possesses conformal Killing vector fields. If the energy-momentum tensor of the physical spacetime vanishes in a neighbourhood ${\cal U}$ of $\scri^+$, then on that neighbourhood 
$$
\nabla_\rho {\cal D}^\rho{}_{\mu\nu\tau} \stackrel{{\cal U}}{=} 0.
$$
If $\xi^\mu_{{i}}$ are {\em any} three conformal Killing vectors on $(M,g)$ (they can be repeated) then the currents 
$$
{\cal B}^\rho (i,j,k):= \xi^\mu_{(i)} \xi^\nu_{(j)} \xi^\tau_{(k)}{\cal D}^\rho{}_{\mu\nu\tau}
$$
are divergence-free \cite{Senovilla2000,Lazkoz2003} on ${\cal U}$
$$
\nabla_\rho {\cal B}^\rho (i,j,k)\stackrel{{\cal U}}{=} 0.
$$
This implies that the `charges' defined on any spacelike hypersurface $\S$ without edge within ${\cal U}$ by
$$
{\cal B}_\S (i,j,k) := \int_\S {\cal B}^\rho(i,j,k) t_\rho
$$
(where $t_\rho$ is the unit normal to $\S$) are conserved, in the sense that they are independent of the choice of $\S$. In particular, they are equal to ${\cal B}_{\scri^+}(i,j,k)$.

If the $\xi^\mu_{(i)}=\xi^a_{(i)} e^\mu_a$ happen to be tangent to $\scri^+$, by using the explicit formulae in \cite{Alfonso2008} one can find (for instance, and for simplicity, for three copies of the same $\xi^\mu_{(1)}:=\xi^\mu$)
$$
{\cal B}_{\scri^+}(1,1,1) =\int_{\scri^+}  \left( \left(\frac{3}{\Lambda}\right)^{(3/2)} \bar{p}_a\xi^a-\xi_a\epsilon^{abc} \xi^d C_{bd} \xi^e \F_{ce} \right).
$$
This charge is generically non-zero. Nevertheless, if \eqref{DisC} holds and $\bar p_a=0$ then it vanishes. This is precisely the case of proposition \ref{prop:typeD}. This seems to hint in the direction that (non-zero) values of ${\cal B}_{\scri^+}(1,1,1)$ arise when there is gravitational radiation arriving at $\scri^+$.

\section{Symmetries with $\Lambda >0$}\label{sec:sym}
One of the missing elements to complete the picture in the $\Lambda >0$ scenario are the asymptotic symmetries. There is nothing like the BMS algebra/group and, the lack of a universal structure on $\scri$ is an impediment to provide a general notion of symmetries and, thereby, to look for appropriate conservation and balance laws. Still, one can try to find such missing symmetries in restricted situations, such as the one described in the previous section \ref{Q-=0} with strictly equipped and strongly oriented $\scri$, that is, if \eqref{DisC} holds on $\scri$.

To start with, let me argue that the `natural' definition for (infinitesimal) symmetries is any vector field $\vec Y\in\mathfrak{X}(\scri)$ leaving invariant the the tensor field 
$$
X_{abcdef}:= h_{ab} \F_{cd} \F_{ef}.$$
$X_{abcdef}$ is {\em gauge invariant} and contains the elements needed to determine any property of the physical spacetime, the triplet $(\scri,h_{ab},\F_{ab})$. Thus, a reasonable proposal of infinitesimal symmetries $\vec Y\in\mathfrak{X}(\scri)$ is simply
$$
\lied_Y (h_{ab} \F_{cd} \F_{ef})=0.
$$
This can be easily shown to be equivalent to
\be\label{basicsym}
\lied_Y h_{ab} =2 \psi h_{ab} , \hspace{1cm} \lied_Y D_{ab} = - \psi D_{ab}
\ee
for some function $\psi$. That this is a good definition is justified by noting that any solution $\vec Y$ of \eqref{basicsym} generates a Killing vector field on the physical spacetime and viceversa. This follows from a result due to Paetz \cite{Paetz2016}. Any solution of \eqref{basicsym} is termed {\em basic infinitesimal symmetry}. They satisfy
$$
\lied_Y \bar{p}_a =-5\psi \bar{p}_a .
$$

Nevertheless, an obvious problem arises with such basic symmetries. Observe that the first equation in \eqref{basicsym} informs us that $\vec Y$ must be a conformal Killing vector of $(\scri,h_{ab})$, and of course a generic 3-dimensional Riemannian manifold does not need to possess such vector fields. Hence, there are many $(\scri,h_{ab})$  without any {\em basic} infinitesimal symmetries. 

To remedy this situation, let me restrict the possible $(\scri,h_{ab})$ to those which possess a vector field $m^a$, orthogonal to a foliation of cuts, such that \eqref{DisC} holds on $\scri$, that is to say, $\scri$ is strictly equipped, and also strongly oriented,  by $m^a$. Then, we want that the symmetries preserve this structure, conformally keeping the orientation and equipment. This is achieved by the vector fields that satisfy 
\be\label{sym}
\lied_Y h_{ab}=2\psi h_{ab} +2\gamma m_a m_b, \hspace{1cm} \lied_Y m_a =(\gamma +\psi) m_a
\ee
for some functions $\psi$ and $\gamma$ on $\scri$. From this one also has
$$
\lied_Y m^a = -(\gamma +\psi) m^a .
$$
First of all, observe that the basic symmetries \eqref{basicsym} are included here (for $\gamma =0$) as long as they preserve the direction field $m^a$. Secondly, it is easy to check that the family of solutions of \eqref{sym} constitute a Lie algebra. Thirdly, the function $\gamma$ is gauge invariant under \eqref{gauge} while $\psi$ has the following behaviour
$$
\tilde\psi =\psi +\frac{1}{\omega} \lied_Y \omega.$$
Fourthly, equations \eqref{sym} are equivalent to
\be\label{sym2}
\lied_Y P_{ab} = 2\psi P_{ab} , \hspace{1cm} \lied_Y m_a =(\gamma +\psi) m_a
\ee
where 
$$
P_{ab} := h_{ab} -m_a m_b
$$
is the orthogonal projector of the foliation defined by $m_a$ that projects to the leaves. In this form, and given that the projector restricted to each leaf $\Sc$ of the foliation gives the corresponding first fundamental form $q_{AB}$, the first relation in \eqref{sym2} states that the vector fields leave the conformal metrics invariant. Actually, \eqref{sym2} or \eqref{sym} are an example of the infinitesimal symmetries called bi-conformal vector fields \cite{GarciaParrado2004} that leave two orthogonal distributions  conformally invariant.  As proved in  \cite{GarciaParrado2004}, the solutions of \eqref{sym2} can form an infinite-dimensional Lie algebra. 

It remains the question of whether or not these new symmetries can be somehow derived as asymptotic generalized symmetries from the physical spacetime. This is certainly the case, as we briefly explain next. Start by considering a vector field $\hat\xi^\mu$ on the physical spacetime $(\hat M,\hat g)$ such that is has a smooth extension to $\scri$ on $M$. Then on $\hat M$
$$
\lied_{\hat\xi} g_{\mu\nu} = \Omega^2 \lied_{\hat\xi}\hat{g}_{\mu\nu}+ \frac{2}{\Omega} \lied_{\hat\xi}\Omega \hat{g}_{\mu\nu}
$$
and require that 
$$
H_{\mu\nu} := \Omega^2 \lied_{\hat\xi}\hat{g}_{\mu\nu}
$$
has a regular limit to $\scri$. The basic idea is to find the `minimum' possible $H_{\mu\nu}$ that induces the symmetries on $(\scri,h_{ab})$. In other words, $\hat\xi^\mu$ can be thought as an approximate symmetry when approaching infinity. One can easily prove \cite{Fernandez-Alvarez_Senovilla-dS} that 
$$
\xi^\mu n_\mu \eqs 0 \Longrightarrow \, \, \hat\xi^\mu =Y^a e^\mu_a
$$
and $Y^a$ is a vector field on $\scri$. It is necessary to take into account that only the class of vector fields $\hat\xi^\mu$ defined modulo the addition of any term of the form $\Omega v^\mu$, for arbitrary $v^\mu$, makes sense. This implies that combinations of type
$$
v_\mu n_\nu + v_\nu n_\mu -2v^\rho n_\rho g_{\mu\nu} + 2\Omega (\nabla_\mu v_\nu +\nabla_\nu v_\mu)
$$
can be added to $H_{\mu\nu}$ without changing the sought asymptotic symmetry. 

Thus, in order to choose $H_{\mu\nu}$ one first notices that $H_{\mu\nu}\propto g_{\mu\nu}$ (including $H_{\mu\nu}=0$, which mimics the case of $\Lambda =0$ as studied in \cite{Geroch81}) will lead to conformal Killing vectors of $(\scri,h_{ab})$, that is, to the basic symmetries \eqref{basicsym}. Thus, one needs a more general choice. The next `minimal' possible such choice is that $H_{\mu\nu}$ is a rank-1 tensor field on $\scri$, that is, there exists a vector field $m^\mu$ such that $H_{\mu\nu} = F m_\mu m_\nu$, or including the redundant terms above
$$
H_{\mu\nu} = F m_\mu m_\nu +v_\mu n_\nu + v_\nu n_\mu -2v^\rho n_\rho g_{\mu\nu} + 2\Omega (\nabla_\mu v_\nu +\nabla_\nu v_\mu)
$$
where necessarily $m^\mu n_\mu \eqs 0$ \cite{Fernandez-Alvarez_Senovilla-dS}. Projection to $\scri$ then shows that \cite{Fernandez-Alvarez_Senovilla-dS}
$$
\lied_Y h_{ab}= 2\psi h_{ab} +2\gamma m_a m_b 
$$
where $\gamma =F|_\scri$ and $\psi = -(2v^\rho n_\rho +\xi^\mu n_\mu/\Omega)|_\scri$. This is precisely the first in \eqref{sym}, and the Lie algebra property requires the second one. 

The precise structure of the Lie algebra of the symmetries \eqref{sym} depends on the specific situation, that is, on the particular properties of the foliation determined by the vector field $m^a$ that equips and orientates $\scri$. For instance, in the case that the orientation and the equipment are both strong (so that the foliation is by umbilical cuts), the structure is the product of conformal transformations of the cuts times an ideal which commutes with the previous and depends on arbitrary functions, so that the algebra is infinite dimensional  \cite{Fernandez-Alvarez_Senovilla-dS}.

\section{Closing comments with examples}\label{sec:fin}
Criteria \ref{crit1} and \ref{crit2} have been tested in a variety of spacetimes \cite{Fernandez-Alvarez_Senovilla20b,Fernandez-Alvarez_Senovilla-dS} that admit a conformal completion and so far they agree with the expected results concerning existence of gravitational radiation, as well as in relation to other concepts introduced in this paper. Herein I provide a summary of the known results and add a couple of new ones.

First of all, take spherically symmetric spacetimes that, as we know, do not contain any kind of gravitational radiation. If they admit a conformal completion this can be assumed to have spherical symmetry too, and then $C_{ab}$ and $\F_{ab}$ inherit the symmetry. This readily proves that $C_{ab}$ and $\F_{ab}$ must be proportional to each other, so that their commutator vanishes and using \eqref{superp} this leads to $\bar{p}_a=0$ in agreement with the absence of radiation in such situations according to our criteria. This includes, in particular, de Sitter spacetime which actually has both $C_{ab}$ and $\F_{ab}$ vanishing, where one can identify the 10 asymptotically basic infinitesimal symmetries, four possible strong equipments (all of them equivalent) with umbilical foliations by $\mathbb{S}^2$ cuts, and find the structure of the group of symmetries of type \eqref{sym} for any of the strong equipments. This is composed of the conformal Killing vectors of the sphere together with a vector field of type $F(\chi) \partial_\chi$, for {\em arbitrary} function $F$, where $\chi$ is a typical latitud coordinate on the 3-dimensional sphere \cite{Fernandez-Alvarez_Senovilla-dS}.

Next, consider the ``Kerr-de Sitter-like spacetimes'' as defined in \cite{Mars2016}. Basically, these are the $\Lambda $-vacuum spacetimes with a Killing vector filed whose `Mars-Simon' tensor vanishes \cite{MS2015} and admit  a conformal completion. They include in particular the Kerr-de Sitter solution as well as many others \cite{MS2015,Mars2016,Mars2017,MPS1}.
Kerr-de Sitter-like spacetimes are characterized by initial data $(\scri,h_{ab},\F_{ab})$ with
$$
C_{ab}= \frac{A}{|Y|^5} \left(Y_a Y_b -\frac{1}{3} |Y|^2 h_{ab} \right), \hspace{3mm} 
\F_{ab}= \frac{B}{|Y|^5} \left(Y_a Y_b -\frac{1}{3} |Y|^2 h_{ab} \right)
$$
for some constants $A,B$ where $Y^a\in \mathfrak{X}(\S)$ is a conformal Killing vector on $(\scri,h)$ with no fixed points. $Y^a$ is the conformal Killing vector induced by the Killing vector of the physical spacetime with vanishing Mars-Simon tensor. From the expressions above we check that again $C_{ab}$ and $\F_{ab}$ are proportional to each other so that \eqref{superp} imples $\bar p^a=0$ and criterion \ref{crit2} states that there is no gravitational radiation. This is also an expected result.
In the particular case of Kerr-de Sitter spacetime, including the Kottler solution for zero angular momentum, the constant $A=0$ ($(\scri,h_{ab})$ is conformally flat), there are two strong orientations but neither of them leads to a strong equipment. The corresponding symmetries \eqref{sym} coincide with the basic asymptotic symmetries \eqref{basicsym} and are induced by the two Killing vectors of the spacetime. Still, there exists a `natural' strong equipment by umbilical cuts and the corresponding algebra of symmetries \eqref{sym} is again infinite dimensional depending on an arbitrary function of one variable \cite{Fernandez-Alvarez_Senovilla-dS}.


In \cite{Mars2016} a more general class of spacetimes, termed {\em asymptotically Kerr-de Sitter-like spacetimes}, was introduced. They also have a Killing vector but now the Mars-Simon tensor is only required to vanish asymptotically. Their characterization at infinity is given by data $(\scri,h_{ab},\F_{ab})$ such that
$$
C_{ab} Y^b = \delta Y_a, \hspace{1cm} \F_{ab} Y^b =\beta Y_a
$$
for some functions $\delta,\beta$ on $\scri$, where $Y^a$ is the conformal Killing vector on $(\scri,h_{ab})$ induced by the Killing vector of the physical spacetime. In other words, $C_{ab}$ and $\F_{ab}$ have $Y^a$ as a common eigenvector field. Obviously, the Kerr-de Sitter-like spacetimes are included here, but there are many other possibilities. In this case, gravitational radiation may be present. An interesting possibility is the analysis of asymptotically Kerr-de Sitter-like spacetimes which also comply with \eqref{DisC} for some $m^a$. If in this case $Y^a$ points into the direction $m^ a$ that equips $\scri$, that is to say, $Y^a= |Y| m^a$ then the eigenvalues of the common eigendirection are
$$
\delta = C_{ab}m^am^b, \hspace{1cm} \beta =\F_{ab}m^am^b
$$
and  also $\F_A=0$ and $C_A=0$. Equation \eqref{pA2} tells us that $p_A=0$ and thus from \eqref{pm2}
$$
\left(\frac{3}{\Lambda}\right)^{3/2} \bar{p}_a = -\hat{C}_{AB} \hat{C}^{AB} m_a.
$$

Next, a very interesting spacetime to be used as example is the $C$-metric \cite{Stephani2003,Griffiths-Podolsky2009}, both in the $\Lambda>0$ and $\Lambda =0$ cases, see \cite{Fernandez-Alvarez_Senovilla20b,Fernandez-Alvarez_Senovilla-dS}, because this is known to have gravitational radiation in the asymptotically flat case \cite{Ashtekar-Dray81}.  The existence of gravitational radiation according to our criterion \ref{crit2} for $\Lambda\geq 0$ was proven in \cite{Fernandez-Alvarez_Senovilla20b}. For the $C$-metric there are two possible strong orientations, both of them providing strong equipments, and the Lie algebra of symmetries \eqref{sym} is infinite dimensional once more, but in this case depending of multiple arbitrary functions \cite{Fernandez-Alvarez_Senovilla-dS}.

Another interesting family of spacetimes usable as examples are the Robinson-Trautman metrics \cite{Stephani2003,Griffiths-Podolsky2009}, for $\Lambda \geq 0$. Generically, they have one strong orientation which defines a strong equipment, and the corresponding asymptotic symmetries \eqref{sym} form an infinite-dimensional Lie algebra that depends on an arbitrary function of one variable. They generically contain gravitational radiation according to criterion \ref{crit2}, the particular case of Petrov type N Robinson-Trautman metrics is analyzed in detail in \cite{Fernandez-Alvarez_Senovilla-dS}.



\vspace{6pt} 




\section*{Funding}
Research supported by Basque Government grant numbers IT956-16 and IT1628-22, and by Grant FIS2017-85076-P funded by the Spanish MCIN/AEI/10.13039/501100011033 and by "ERDF A way of making Europe".

\section*{Acknowledgments} 
Discussions with Fran Fern\'andez-\' Alvarez are gratefully acknowledged.




\appendix

\section[\appendixname~\thesection]{`(Super)-energy' tensors in a nutshell}\label{App:1}
Given any tensor (field), say $t_{\mu_1\dots\mu_m}$, there is a canonical way \cite{Senovilla2000} of constructing a new tensor (field) $T\{t\}_{\mu_1\dots \mu_{2s}}$ quadratic on $t_{\mu_1\dots\mu_m}$ and satisfying the {\em dominant property}, that is to say
\be
T\{t\}_{\mu_1\dots \mu_{2s}}u^{\mu_1} \dots v^{\mu_{2s}}\geq 0
\ee
for arbitrary future-pointing vectors $u^{\mu_1} \dots v^{\mu_{2s}}$. The inequality is strict if all the vectors $u^{\mu_1} \dots v^{\mu_{2s}}$ are timelike. In particular, the total timelike component in an orthonormal basis $\{\vec{e}_{\alpha}\}$ whose timelike direction is given by $\vec{e}_0$, that is,
$$
T_{0\dots 0} := T\{t\}_{\mu_1\dots \mu_{2s}}e_0^{\mu_1} \dots e_0^{\mu_{2s}}\geq 0
$$
is positive and vanishes if and only if $t_{\mu_1\dots\mu_m}=0$. Such quadratic tensors are called `super-energy' tensors generically, and its total timelike component is the `super-energy' of $t_{\mu_1\dots\mu_m}$ relative to the chosen $\vec{e}_0$. The fully symmetric part $T\{t\}_{(\mu_1\dots \mu_{2s})}$ ---which is the only part relevant for the super-energy of $t_{\mu_1\dots\mu_m}$--- is {\em unique} with the above properties. 

If the underlying, seed, tensor $t_{\mu_1\dots\mu_m}$ is actually a $p$-form, then $s=1$ and $T\{t\}_{\mu\nu}$ is a rank-2 symmetric tensor. In particular, if $t_\mu =\nabla_\mu \phi$ is an exact one-form, then $T\{\nabla\phi\}_{\mu\nu}$ is the standard energy-momentum tensor of a massless scalar field $\phi$; while if $t_{\mu\nu}=F_{[\mu\nu]}$ is a 2-form, then $T\{F\}_{\mu\nu}$ is the standard energy-momentum tensor of the electromagnetic field $F_{\mu\nu}$. For further details, see \cite{Senovilla2000}.

In this article, we are interested in the super-energy tensor $T\{W\}$ of {\em Weyl-tensor candidates} $W_{\alpha\beta\mu\nu}$. A Weyl tensor candidate is a double (2,2)-form with the same symmetry and trace properties of the Weyl tensor:
$$
W_{\alpha\beta\mu\nu}=W_{[\alpha\beta][\mu\nu]}, \hspace{1cm} W_{\alpha[\beta\mu\nu]}=0, \hspace{1cm} W^\rho{}_{\beta\rho\mu}=0 .
$$
Its super-energy tensor is the rank-4 tensor
\bean
T\{W\}_{\alpha\beta\lambda\mu} &=& W_{\alpha\rho\lambda\sigma} W_\beta{}^\rho{}_\mu{}^\sigma + W_{\alpha\rho\mu\sigma} W_\beta{}^\rho{}_\lambda{}^\sigma
-\frac{1}{2} g_{\alpha\beta} W_{\tau\rho\lambda\sigma} W^{\tau\rho}{}_\mu{}^\sigma\\
&&-\frac{1}{2} g_{\lambda\mu} W_{\alpha\rho\tau\sigma} W_\beta{}^{\rho\tau\sigma}
+\frac{1}{8} g_{\alpha\beta}g_{\lambda\mu} W_{\nu\rho\tau\sigma} W^{\nu\rho\tau\sigma}
\eean
which, in 4-dimensional spacetime reduces to simply 
\be\label{BRW}
T\{W\}_{\alpha\beta\lambda\mu} = W_{\alpha\rho\lambda\sigma} W_\beta{}^\rho{}_\mu{}^\sigma + W_{\alpha\rho\mu\sigma} W_\beta{}^\rho{}_\lambda{}^\sigma
-\frac{1}{8} g_{\alpha\beta}g_{\lambda\mu} W_{\nu\rho\tau\sigma} W^{\nu\rho\tau\sigma}.
\ee
This tensor is fully symmetric and traceless \cite{Senovilla2000,Senovilla97}. It also admits the alternative expression (still in 4 dimensions)
\be\label{BRW1}
T\{W\}_{\alpha\beta\lambda\mu} = W_{\alpha\rho\lambda\sigma} W_\beta{}^\rho{}_\mu{}^\sigma + \stackrel{*}{W}_{\alpha\rho\lambda\sigma} \stackrel{*}{W}_\beta{}^\rho{}_\mu{}^\sigma 
\ee
where 
$$
\stackrel{*}{W}_{\alpha\rho\lambda\sigma}:= \frac{1}{2} \eta_{\alpha\rho\mu\nu} W^{\mu\nu}{}_{\lambda\sigma} 
$$
and $\eta_{\alpha\rho\mu\nu}$ is the canonical volume element 4-form. 

If the Weyl-tensor candidate is divergence-free, $\nabla_\rho W^\rho{}_{\beta\mu\nu}=0$, then $T\{W\}_{\alpha\beta\lambda\mu} $ is divergence-free too. 

When $W_{\alpha\beta\mu\nu}=C_{\alpha\beta\mu\nu}$ is the true Weyl tensor, $T\{C\}_{\alpha\beta\lambda\mu} $ is called the Bel-Robinson tensor \cite{Senovilla97,Bel1958,Bel1962}.

\section[\appendixname~\thesection]{The tensor $\rho_{AB}$ for conformal classes of 2-dimensional Riemannian manfiolds}\label{App:rho}
In this appendix an important tensor field available in 2-dimensional Riemannian manifolds with relevant conformal properties is presented. This tensor is reminiscent of another one introduced by Geroch for $\scri$ in an asymptotically flat situation \cite{Geroch1977} and allows one to extract the news tensor field from the pullback of the Schouten tensor $S_{ab}$, as explained in section \ref{sec:Lambda=0}. The invariant interpretation and significance of this tensor field is discussed in this Appendix, see also \cite{Fernandez-Alvarez_Senovilla-dS}. 
			
			As all possible 2-dimensional Riemannian manifolds are (locally) conformal to the round sphere, let us start by considering the round sphere $(\mathbb{S}^2,q_{round})$ with constant Gaussian curvature $K$ , given in conformally flat form in Cartesian coordinates $\{x,y\}$ by 
$$
q_{round} = \left[1+\frac{K}{4} (x^2+y^2) \right]^{-2} \left(dx^2 + dy^2 \right).
$$
Using canonical angular coordinates on $\mathbb{S}^2$ via the standard stereographic projection from the north pole
\bean
x=\frac{2}{\sqrt{K}} \cot \frac{\theta}{2} \cos\varphi , \hspace{1cm} y=\frac{2}{\sqrt{K}} \cot \frac{\theta}{2} \sin\varphi ,\\
\theta =2\arctan \frac{2}{\sqrt{K(x^2+y^2)}} , \hspace{1cm} \varphi =\arctan \frac{y}{x}
\eean
with $\theta\in (0,\pi]$ and $\varphi\in [0,2\pi)$, the metric becomes
\be
q_{round}=\frac{1}{K} \left(d\theta^2 +\sin^2\theta d\varphi^2 \right)\label{round}
\ee
and the part in parenthesis is the metric of the unit round sphere, which will be denoted in index notation by $\Omega_{AB}$ from now on. As is well known, the sphere possesses a 6-dimensional algebra of {\em global} conformal Killing vector fields ---see e.g. Appendix F in \cite{Fernandez-Alvarez_Senovilla-dS}---, an appropriate basis for them is
\bea
\vec{\xi}_1  &=& -\left(\sin\varphi \partial_\theta +\cot\theta \cos\varphi\partial_\varphi \right)\label{Kil1}\\
\vec\xi_2 &=& \cos\varphi \partial_\theta -\cot\theta \sin\varphi\partial_\varphi ,\label{Kil2}\\
\vec{\xi}_3 &=&\partial_\varphi,\label{Kil3}\\
\vec\eta_1 &=& \cos\theta\cos\varphi\partial_\theta -\frac{\sin\varphi}{\sin\theta}\partial_\varphi , \label{CKil1}\\
\vec\eta_2 &=& \cos\theta\sin\varphi\partial_\theta +\frac{\cos\varphi}{\sin\theta}\partial_\varphi \label{CKil2}\\
 \vec\eta_3 &=& -\sin\theta \partial_\theta \label{CKil3} .
\eea
The first three are actually Killing vectors generating the group SO(3) while the remaining three are proper conformal Killing vectors satisfying ($i=1,2,3$, $D_A$ is the covariant derivative on the sphere)
$$
D_A \eta^B_{(i)} =-\delta^B_A n_{(i)} 
$$
where 
$$
n_{(i)} =\left(\sin\theta\cos\varphi,\sin\theta\sin\varphi, \cos\theta\right).
$$

Observe that the three CKVs (\ref{CKil1}-\ref{CKil3}) are all exact one-forms
$$
\bm{\eta}_1 = \frac{1}{K} d(\sin\theta\cos\varphi), \hspace{6mm} \bm{\eta}_2= \frac{1}{K} d(\sin\theta\sin\varphi),  \hspace{6mm}  \bm{\eta}_3 =\frac{1}{K} d(\cos\theta),
$$
or more compactly
$$
\eta_{(i)B} = \frac{1}{K} D_B n_{(i)}
$$
while the three Killing vector fields (\ref{Kil1}-\ref{Kil3}) are co-exact
$$
\xi^A_{(i)} = \epsilon^{AB}\eta_{(i)B} =\frac{1}{K} D_B (\epsilon^{AB} n_{(i)})
$$
where $\epsilon_{AB}$ is the volume 2-form. This leads to the known result 
\be\label{DDn}
D_A D_B n_{(i)} =-\Omega_{AB} n_{(i)} .
\ee
Notice that in particular $\Delta n_{(i)} = -2K n_{(i)}$, where $\Delta$ is the Laplacian on the sphere, meaning that $n_{(i)}$ are the three spherical harmonics $Y_1^i$, with $l=1$. These three, together the spherical harmonic of order $l=0$, can thus be combined into a single covariant `4-vector' 
$$
\pi_{(\mu)} := (1,n_{(i)}) 
$$
which is null in an auxiliary Minkowski metric: $\eta^{\mu\nu} \pi_{(\mu)}\pi_{(\nu)}=0$. Using (\ref{DDn}) one can then write (here each $\pi_{(\mu)}$ is considered as a function)
\be\label{DDpi}
D_A D_B \pi_{(\mu)} - \frac{1}{2} \Delta \pi_{(\mu)} \frac{1}{K}\Omega_{AB} =0.
\ee

The question that arises is:  Is there a conformally invariant version of (\ref{DDpi}), valid in arbitrary 2-dimensional Riemannian manifolds with metric $q_{AB}$? To answer this question, perform a general conformal transformation 
$$
\tilde{q}_{AB}=\omega^2 q_{AB}
$$
and assume that the four $\pi_{(\mu)}$ transform in a ``coordinated'' and homogeneous manner so that
$$
\tilde{\pi}_{(\mu)} = H(\omega) \pi_{(\mu)}
$$
for some function $H(\omega)$ to be determined. A direct calculation using the change of the covariant derivative under conformal re-scalings leads then to
\bea
\tilde{D}_A \tilde{D}_B \tilde{\pi}_{(\mu)} =HD_A D_B \pi_{(\mu)} +D_A H D_B \pi_{(\mu)} +D_B H D_A \pi_{(\mu)}+\pi_{(\mu)} D_A D_B H\nonumber \\
-\frac{1}{\omega} \left[D_A\omega D_B(H\pi_{(\mu)}) + D_B \omega D_A (H\pi_{(\mu)}) -q^{CE}D_C \omega D_E(H\pi_{(\mu)}) q_{AB}\right]\label{DDpi1}
\eea
whose trace reads
\be\label{trDDpi1}
\tilde\Delta \tilde \pi_{(\mu)}=\frac{1}{\omega^2} \left(H\Delta \pi_{(\mu)} +2 q^{CE} D_C \omega D_E\pi_{(\mu)} +\pi_{(\mu)}\Delta H \right)
\ee
so that the combination of \eqref{DDpi1} and \eqref{trDDpi1} produces
\bea
\tilde{D}_A \tilde{D}_B \tilde{\pi}_{(\mu)} -\frac{1}{2} \tilde{q}_{AB} \tilde\Delta \tilde{\pi}_{(\mu)} =H\left(D_A D_B \pi_{(\mu)} - \frac{1}{2} \Delta \pi_{(\mu)} q_{AB}  \right)\nonumber\\
+ \pi_{(\mu)} \left(D_A D_B H -\frac{1}{\omega} D_A\omega D_B H -\frac{1}{\omega} D_B H D_A\omega -\frac{1}{2} \Delta H q_{AB} +\frac{1}{\omega} q^{CE} D_c \omega D_E H  q_{AB}\right)\nonumber\\
+\omega \left[ D_A \pi_{(\mu)} D_B\left(\frac{H}{\omega}\right) +D_B \pi_{(\mu)} D_A\left(\frac{H}{\omega}\right)- q^{CE}D_C\pi_{(\mu)} D_E\left(\frac{H}{\omega} \right) q_{AB}\right].\label{DDpi2}
\eea
Hence, the only way that this can lead to a conformally well-behaved relation is that the terms with $D_A \pi_{(\mu)}$ dissapear, which requires
$$
H = \omega
$$
where an arbitrary multiplicative constant has been set to $1$ by a simple redefinition of $\pi_{(\mu)}$. Introducing this into \eqref{DDpi2} one gets
\bea
\tilde{D}_A \tilde{D}_B \tilde{\pi}_{(\mu)} -\frac{1}{2} \tilde{q}_{AB} \tilde\Delta \tilde{\pi}_{(\mu)} =\omega\left(D_A D_B \pi_{(\mu)} - \frac{1}{2} \Delta \pi_{(\mu)} q_{AB}  \right)\nonumber\\
+ \pi_{(\mu)} \left(D_A D_B \omega -\frac{2}{\omega} D_A\omega D_B \omega -\frac{1}{2} \Delta \omega q_{AB} +\frac{1}{\omega} q^{CE} D_c \omega D_E \omega  q_{AB}\right)\label{DDpi3}
\eea
To make sense of the conformal behaviour of this expression notice that the first line contains the same combination on both sides and thus the second line must go partly to one side and partly to the other side in a concordant manner. The terms multiplying $q_{AB}$ can be easily rearranged by using the relation between Gaussian curvatures of conformally related metrics
\be\label{K'}
\tilde{K} =\frac{1}{\omega^2} \left(K -\frac{1}{\omega} \Delta \omega +\frac{1}{\omega^2} q^{CB} \omega_B \omega_C \right)=
\frac{1}{\omega^2} \left(K- \Delta \ln \omega \right).
\ee
where I use the notation $\omega_A := D_A \omega$. Then \eqref{DDpi3} becomes
\bea
\tilde{D}_A \tilde{D}_B \tilde{\pi}_{(\mu)} -\frac{1}{2} \tilde{q}_{AB} \tilde\Delta \tilde{\pi}_{(\mu)}-\frac{\tilde K}{2} \tilde{q}_{AB} \tilde{\pi}_{(\mu)}=
\omega\left(D_A D_B \pi_{(\mu)} - \frac{1}{2} \Delta \pi_{(\mu)} q_{AB} -\frac{K}{2} q_{AB} \pi_{(\mu)}  \right)\nonumber\\
+ \pi_{(\mu)} \left(D_A D_B \omega -\frac{2}{\omega} D_A\omega D_B \omega +\frac{1}{2\omega} q^{CE} D_c \omega D_E \omega  q_{AB}\right) . \label{DDpi4}
\eea
If our goal is achievable, the second line here must be the difference between a symmetric tensor field and its tilded version --up to a factor $\omega$. Call this tensor field $\rho_{AB}$, and set
$$
\rho_{AB} -\tilde{\rho}_{AB} :=\frac{1}{\omega} D_A D_B \omega -\frac{2}{\omega^2} D_A\omega D_B \omega +\frac{1}{2\omega^2} q^{CE} D_c \omega D_E \omega  q_{AB}
$$
which renders \eqref{DDpi4} in the form
\bean
\tilde{D}_A \tilde{D}_B \tilde{\pi}_{(\mu)} -\frac{1}{2} \tilde{q}_{AB} \tilde\Delta \tilde{\pi}_{(\mu)}+\left(\tilde{\rho}_{AB}-\frac{\tilde K}{2} \tilde{q}_{AB}\right) \tilde{\pi}_{(\mu)}\\
=\omega\left[D_A D_B \pi_{(\mu)} - \frac{1}{2} \Delta \pi_{(\mu)} q_{AB}+\left(\rho_{AB} -\frac{K}{2} q_{AB} \right)\pi_{(\mu)}  \right]
\eean
This is the sought result, providing the right expression which is well behaved and answers in the affirmative our question. Hence the equation valid in arbitrary metrics $q_{AB}$ on the sphere reads (with $D_A$ the covariant derivative for $q_{AB}$ and $\Delta$ and $K$ the corresponding Laplacian and Gaussian curvature, respectively)
\begin{equation}\label{eq:Hess-gen}
			D_A D_B \pi_{(\mu)} -\frac{1}{2} q_{AB} \Delta\pi_{(\mu)}+\left( \rho_{AB} -\frac{K}{2} q_{AB}\right)\pi_{(\mu)}=0
			\end{equation} 
{\em as long as the tensor field $\rho_{AB}$ behaves, under conformal re-scalings of type (\ref{gaugeh}), like}
			\begin{equation}\label{eq:rho-rho'}
			 \tilde{\rho}_{AB} =\rho_{AB}- \frac{1}{\omega}D_A D_B \omega +\frac{2}{\omega^2} D_A\omega D_B\omega -\frac{1}{2\omega^2}q_{AB}  q^{CD} D_C\omega D_D\omega .
			\end{equation} 
If this holds, and if $\pi_{(\mu)}$ are the four solutions of (\ref{eq:Hess-gen}), then $\tilde{\pi}_{(\mu)}=\omega \pi_{(\mu)}$ are the corresponding four solutions in the re-scaled metric $\tilde{q}_{AB}=\omega^2 q_{AB}$. Notice that the constraint $\eta^{\mu\nu} \pi_{(\mu)}\pi_{(\nu)}=0$ with the auxiliary Minkowski metric remains invariant. 
			
The trace of \eqref{eq:Hess-gen} leads to 
			\begin{equation}\label{eq:traceOFrho}
			q^{AB} \rho_{AB}= K
			\end{equation}
which, taking (\ref{eq:rho-rho'}) into account, also holds in any gauge because of \eqref{K'}.

Observe that, if we wish to recover (\ref{DDpi}) in the round gauge, (\ref{eq:Hess-gen}) requires that in that gauge $\rho_{AB} =(K/2) q_{AB}=(1/2)\Omega_{AB}$ so that $D_C \rho_{AB}=0$ holds in that round gauge. In particular, 
\begin{equation}\label{eq:Derivadarho} 
			D_{[C} \rho_{A]B} =0
\end{equation}
and this formula holds in any gauge due to \eqref{eq:rho-rho'} and \eqref{eq:traceOFrho}. Properties \eqref{eq:rho-rho'} and \eqref{eq:Derivadarho} uniquely determine the tensor $\rho_{AB}$ if the 2-dimensional manifold has topology $\mathbb{S}^2$ (Corollary \ref{coroRho} below) or, more generally, for arbitrary topology if there is a conformal Killing vector with a fixed point. 
This follows from the following set of results.
\begin{Lemma}\label{lem:dt2}
Let $(\Sc,q_{AB})$ by any 2-dimensional Riemannian manifold and $t_{AB}=t_{(AB)}$ any symmetric tensor field on ${\cal S}$ whose gauge behaviour under residual gauge transformations (\ref{gaugeh}) is
\be\label{normalgauge}
\tilde{t}_{AB}= t_{AB}-\frac{a}{\omega}D_A\omega_B +\frac{2a}{\omega^2} \omega_A\omega_B -\frac{a}{2\omega^2} \omega^D\omega_D q_{AB}
\ee
for some fixed constant $a\in \mathbb{R}$. Then, 
\be
\tilde{D}_{[C}\tilde{t}_{A]B}=D_{[C}t_{A]B}+\frac{1}{\omega} (aK-t^E{}_E) \omega_{[C}q_{A]B}.
\label{Dtgauge2}
\ee
In particular, if $n_{AB}=n_{(AB)}$ is any symmetric and gauge-invariant tensor field on ${\cal S}$, then,
\be
\tilde{D}_{[C}\tilde{n}_{A]B}=D_{[C}n_{A]B}-\frac{1}{\omega} n^E{}_E \omega_{[C}q_{A]B}
\label{DNgauge}
\ee
\end{Lemma}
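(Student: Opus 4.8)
The statement is a pure gauge-transformation identity, so the plan is a direct computation built on the conformal change of the Levi-Civita connection \eqref{gammagauge}. First I would introduce the difference of the two connections,
$$
\Theta^E_{CA}:=\tilde\gamma^E_{CA}-\gamma^E_{CA}=\frac{1}{\omega}\left(\delta^E_C\omega_A+\delta^E_A\omega_C-\omega^E q_{CA}\right),
$$
so that $\tilde D_C\tilde t_{AB}=D_C\tilde t_{AB}-\Theta^E_{CA}\tilde t_{EB}-\Theta^E_{CB}\tilde t_{AE}$. The immediate observation is that $\Theta^E_{CA}$ is symmetric in $C,A$; hence $\Theta^E_{[CA]}=0$, the term $\Theta^E_{CA}\tilde t_{EB}$ drops upon antisymmetrising, and one is left with $\tilde D_{[C}\tilde t_{A]B}=D_{[C}\tilde t_{A]B}-\Theta^E_{[C|B}\tilde t_{|A]E}$.

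Next I would split $\tilde t_{AB}=t_{AB}+\Delta t_{AB}$, where $\Delta t_{AB}$ is the $\omega$-dependent shift in \eqref{normalgauge}, and treat the two pieces separately. For the purely $t$-dependent correction $-\Theta^E_{[C|B}t_{|A]E}$ I would decompose $t_{AB}=\tfrac12 t^E{}_E\,q_{AB}+\hat t_{AB}$ into trace and trace-free parts. A short contraction shows that the trace part produces exactly $-\tfrac{1}{\omega}t^E{}_E\,\omega_{[C}q_{A]B}$, which is the term sought in \eqref{Dtgauge2}. The trace-free contribution is antisymmetric in $[CA]$, hence of the form $\epsilon_{CA}Z_B$; contracting with $\epsilon^{CA}$ and using the two-dimensional identity that $\epsilon_A{}^{C}\hat t_{CB}$ is symmetric whenever $\hat t_{AB}$ is symmetric and trace-free shows $Z_B=0$, so this remainder vanishes identically. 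This is the one genuinely two-dimensional step in the trace sector.

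It then remains to show that the $\Delta t$ sector, $D_{[C}(\Delta t)_{A]B}-\Theta^E_{[C|B}(\Delta t)_{|A]E}$, reduces to $\tfrac{aK}{\omega}\,\omega_{[C}q_{A]B}$. The decisive ingredient is the Hessian term $-\tfrac{a}{\omega}D_A D_B\omega$ in $\Delta t$: antisymmetrising its derivative produces $-\tfrac{a}{\omega}D_{[C}D_{A]}D_B\omega$, and commuting the covariant derivatives on the one-form $D_B\omega$ brings in the curvature. In two dimensions $R_{ABCD}=K(q_{AC}q_{BD}-q_{AD}q_{BC})$, whence $D_{[C}D_{A]}D_B\omega=-K\,\omega_{[C}q_{A]B}$, and this single contribution is precisely $\tfrac{aK}{\omega}\,\omega_{[C}q_{A]B}$. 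I expect the principal obstacle to be verifying that every remaining contribution in this sector — the pieces cubic in $\omega_A$ and the cross terms between $D_A(1/\omega)$ and the Hessian — cancels against $-\Theta^E_{[C|B}(\Delta t)_{|A]E}$; this is routine but sign- and index-sensitive bookkeeping, and it is the only part of the argument that is genuinely tedious.

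Collecting the two sectors yields $\tilde D_{[C}\tilde t_{A]B}=D_{[C}t_{A]B}+\tfrac{1}{\omega}(aK-t^E{}_E)\,\omega_{[C}q_{A]B}$, which is \eqref{Dtgauge2}. Finally, the corollary \eqref{DNgauge} is immediate as the special case $a=0$: a gauge-invariant tensor $n_{AB}$ satisfies \eqref{normalgauge} with $a=0$, so the entire $\Delta t$ sector is absent and only the trace term $-\tfrac{1}{\omega}n^E{}_E\,\omega_{[C}q_{A]B}$ survives.
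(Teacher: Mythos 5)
Your proof is correct and follows essentially the same route as the paper's: a direct computation of the conformal change of $D_{[C}t_{A]B}$ via the transformed connection, in which the two-dimensional identity $t_{B[C}\omega_{A]}+q_{B[C} t^D_{A]}\omega_D=t^E{}_E \, q_{B[C}\omega_{A]}$ (which your trace/trace-free splitting reproduces) collapses the algebraic terms to the trace term, and the Ricci identity applied to the Hessian of $\omega$ supplies the $aK$ contribution. The paper merely quotes the intermediate result of the ``direct calculation'' before invoking that identity, so your sketch is, if anything, more explicit about where the curvature term originates.
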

{\em Proof:}
A direct calculation leads to
\be
\tilde{D}_{[C}\tilde{t}_{A]B}=D_{[C}t_{A]B}+
\frac{1}{\omega} t_{B[C}\omega_{A]}+\frac{1}{\omega}q_{B[C} t^D_{A]}\omega_D +\frac{a}{\omega} K \omega_{[C}q_{A]B} .
\label{Dtgauge}
\ee
By using the 2-dimensional identity
$$
t_{B[C}\omega_{A]}+q_{B[C} t^D_{A]}\omega_D=t^E{}_E \, q_{B[C}\omega_{A]}
$$
valid for any  symmetric tensor field $t_{AB}$, equation (\ref{Dtgauge}) can be rewritten simply as \eqref{Dtgauge2}.
$\Box$

Two important corollaries follow.
\begin{Corollary}\label{coroDt}
A symmetric tensor field $t_{AB}=t_{(AB)}$ on ${\cal S}$ whose gauge behaviour under residual gauge transformations is given by (\ref{normalgauge}) satisfies
$$
\tilde{D}_{[C}\tilde{t}_{A]B}=D_{[C}t_{A]B}
$$
if and only if its trace is $t^C{}_C =a K$.

In particular, a symmetric and gauge-invariant tensor field $\tilde{N}_{AB}=N_{AB}=N_{(AB)}$ on ${\cal S}$ satisfies
$$
\tilde{D}_{[C}\tilde{N}_{A]B}=D_{[C}N_{A]B}
$$
if and only if it is traceless $N^C{}_C =0$.
\end{Corollary}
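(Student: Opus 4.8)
The plan is to obtain both assertions as immediate consequences of Lemma \ref{lem:dt2}, specifically of its master identity (\ref{Dtgauge2}). First I would simply read off that formula: for any symmetric $t_{AB}$ with the gauge behaviour (\ref{normalgauge}) one has
$$
\tilde{D}_{[C}\tilde{t}_{A]B}=D_{[C}t_{A]B}+\frac{1}{\omega} (aK-t^E{}_E) \omega_{[C}q_{A]B}.
$$
The entire content of the corollary is packaged in the single extra term on the right, so the task reduces to deciding exactly when that term vanishes for the residual gauge freedom.

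For the forward implication I would note that if $t^C{}_C = aK$, then the scalar factor $aK-t^E{}_E$ is identically zero, so the extra term drops out for \emph{every} choice of conformal factor $\omega$, giving $\tilde{D}_{[C}\tilde{t}_{A]B}=D_{[C}t_{A]B}$ at once. For the converse I would argue that if the two antisymmetrized derivatives agree under the residual gauge transformations, then $(aK-t^E{}_E)\,\omega_{[C}q_{A]B}=0$ must hold for all admissible $\omega$. Here $\omega_{[C}q_{A]B}=\tfrac{1}{2}(\omega_C q_{AB}-\omega_A q_{CB})$, and since $\omega_C=D_C\omega$ can be prescribed to be an arbitrary nonzero one-form at any given point by a suitable choice of positive $\omega$, this two-index antisymmetric object is generically nonvanishing. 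Because $K$, $t^E{}_E$ and $q_{AB}$ do not depend on $\omega$, the scalar coefficient itself must therefore vanish pointwise, i.e.\ $t^E{}_E = aK$ throughout $\Sc$.

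Finally, the second assertion is merely the special case $a=0$: a gauge-invariant symmetric tensor $N_{AB}$ is precisely one obeying (\ref{normalgauge}) with $a=0$ (so that $\tilde N_{AB}=N_{AB}$), and the trace condition $t^C{}_C=aK$ then collapses to $N^C{}_C=0$. I do not anticipate any genuine obstacle, as everything is already encoded in (\ref{Dtgauge2}); the only point requiring a little care is the converse step, namely justifying that the vanishing of $(aK-t^E{}_E)\,\omega_{[C}q_{A]B}$ for all $\omega$ forces the scalar factor to vanish rather than merely some contracted combination. This is precisely what the freedom to choose $\omega_C$ pointwise secures.
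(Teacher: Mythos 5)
Your proposal is correct and follows exactly the route the paper intends: Corollary \ref{coroDt} is stated as an immediate consequence of the identity (\ref{Dtgauge2}) of Lemma \ref{lem:dt2}, with the forward direction trivial and the converse secured by the pointwise freedom in $\omega_C=D_C\omega$ (which indeed forces the scalar $aK-t^E{}_E$ to vanish, since contracting $\omega_{[C}q_{A]B}$ with $q^{AB}$ returns $\tfrac{1}{2}\omega_C\neq 0$ wherever the gradient is nonzero). The identification of the gauge-invariant case with $a=0$ is also exactly as in the paper.
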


\begin{Corollary}\label{coroRho}
If $\Sc$ has $\mathbb{S}^2$-topology, there is a unique symmetric tensor field $\rho_{AB}$ whose gauge behaviour is (\ref{eq:rho-rho'}) and satisfies the equation
\be
D_{[C}\rho_{A]B}=0\label{Drho}
\ee
in any gauge. Furthermore, this tensor field must have a trace $\rho^E{}_E = K$ ---and is given, for round spheres, by $\rho_{AB} = q_{AB} K/2$.
\end{Corollary}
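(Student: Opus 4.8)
The plan is to establish, in order, the trace condition $\rho^E{}_E=K$, then existence, then uniqueness, relying throughout on the gauge machinery of Lemma \ref{lem:dt2} and Corollary \ref{coroDt} together with the conformal flatness of the $2$-sphere.

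First I would settle the trace statement, since it is forced by mere consistency. The prescribed gauge behaviour \eqref{eq:rho-rho'} is exactly the behaviour \eqref{normalgauge} with $a=1$, so the general transformation formula \eqref{Dtgauge2} gives
$$
\tilde{D}_{[C}\tilde{\rho}_{A]B}=D_{[C}\rho_{A]B}+\frac{1}{\omega}\left(K-\rho^E{}_E\right)\omega_{[C}q_{A]B}.
$$
If \eqref{Drho} is to hold in every gauge for arbitrary $\omega>0$, the inhomogeneous term must vanish identically, which forces $\rho^E{}_E=K$. By Corollary \ref{coroDt} the quantity $D_{[C}\rho_{A]B}$ is then genuinely gauge invariant, so that \eqref{Drho} is a consistent gauge-independent condition; this also recovers \eqref{eq:traceOFrho}.

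For existence I would use the classical fact that every metric on $\mathbb{S}^2$ is globally conformal to a round metric $\mathring{q}_{AB}$ of constant curvature $\mathring{K}$. On that round representative set $\mathring{\rho}_{AB}=\tfrac{\mathring{K}}{2}\mathring{q}_{AB}$; it is parallel, hence satisfies \eqref{Drho}, and has trace $\mathring{K}$ --- exactly the round-gauge choice singled out after \eqref{eq:Derivadarho}. Writing a general metric as $q_{AB}=\omega^2\mathring{q}_{AB}$, I would \emph{define} $\rho_{AB}$ in the $q$-gauge by imposing the law \eqref{eq:rho-rho'} relative to $\mathring{\rho}_{AB}$, so that the required gauge behaviour holds by construction. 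Because $\mathring{\rho}_{AB}$ has trace $\mathring{K}$, Corollary \ref{coroDt} with $a=1$ gives $D_{[C}\rho_{A]B}=\mathring{D}_{[C}\mathring{\rho}_{A]B}=0$ in the $q$-gauge, so \eqref{Drho} holds in all gauges; a short trace computation using the curvature relation \eqref{K'} confirms $\rho^E{}_E=K$, and the round-sphere value $\rho_{AB}=q_{AB}K/2$ is built in.

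Finally, for uniqueness, suppose $\rho_{AB}$ and $\rho'_{AB}$ are two solutions. Their difference $N_{AB}:=\rho_{AB}-\rho'_{AB}$ is symmetric and, since the inhomogeneous terms in \eqref{eq:rho-rho'} are identical for both, gauge invariant; moreover $N^E{}_E=K-K=0$ and $D_{[C}N_{A]B}=0$. Thus $N_{AB}$ is a symmetric, traceless, Codazzi tensor on the compact $\Sc$, equivalently a traceless symmetric divergence-free tensor as used in the proof of Theorem \ref{th}. The classical fact that no such nonzero tensor exists on a genus-zero surface \cite{Liu1998} then forces $N_{AB}=0$, i.e.\ $\rho_{AB}=\rho'_{AB}$. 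I expect this last step to be the crux: it is the only place where the $\mathbb{S}^2$ topology is genuinely needed --- reflecting the absence of nonzero holomorphic quadratic differentials on the sphere --- and it is precisely what upgrades $\rho_{AB}$ from merely existing to being canonical.
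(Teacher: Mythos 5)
Your proposal is correct and follows essentially the same route as the paper's own (much terser) proof: the trace $\rho^E{}_E=K$ is forced by Corollary \ref{coroDt} (equivalently by the inhomogeneous term in \eqref{Dtgauge2}), existence comes from transporting $\rho_{AB}=q_{AB}K/2$ from the round representative via the transformation law \eqref{eq:rho-rho'}, and uniqueness reduces to the vanishing of trace-free Codazzi tensors on a compact genus-zero surface. Your version merely makes explicit the global conformal flatness of $\mathbb{S}^2$ and the gauge invariance of the difference of two solutions, both of which the paper leaves implicit.
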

{\em Proof:}
Uniqueness follows from that of trace-free Codazzi tensors on $\mathbb{S}^2$ Riemannian manifolds, by  noticing that Corollary \ref{coroDt} implies that any such $\rho_{AB}$ has a fixed trace given by $K$ and the assumption that (\ref{Drho}) holds in any gauge. Existence can be deduced directly by noticing that $\rho_{AB} = q_{AB} K/2$ is such that $D_C \rho_{AB}=0$ in the round metric sphere.
$\Box$

\noindent
Let $\vec\chi$ denote any conformal Killing vector on $(\mathbb{S}^2,q_{AB})$. Then, as proven in \cite{Fernandez-Alvarez_Senovilla-dS} the symmetric tensor field
$$
\lied_\chi \rho_{AB} + \frac{1}{2} D_A D_B D_C\chi^C
$$
is trace- and divergence-free and gauge invariant under \eqref{gaugeh}. Therefore, it must vanish on the sphere. Thus, for any conformal Killing vector on $(\mathbb{S}^2,q_{AB})$ we  have
\be\label{lierho}
\lied_\chi \rho_{AB} =-\frac{1}{2} D_A D_B D_C \chi^C .
\ee
For manifolds $\Sc$ with other topologies, if they contain a conformal Killing vector $\vec\chi$ with a fixed point --which necessarily generates an axial conformal symmetry around the fixed point \cite{Fernandez-Alvarez_Senovilla-dS,Mars1993}--, the uniqueness of $\rho_{AB}$ can also be proven by adding \eqref{lierho} for that $\vec\chi$ as an assumption. The existence of such a conformal Killing vector is ensured if the topology of $\Sc$ is either $\mathbb{S}^2$ or $\mathbb{S}^1\times\mathbb{R}$ or $\mathbb{R}^2$.

This `magic' tensor $\rho_{AB}$ allows us to derive the following non-trivial result.
\begin{Lemma}\label{lem:intLieK}
Let $(\mathbb{S}^2,q_{AB})$ be any Riemannian manifold on the 2-sphere. Then, for every conformal Killing vector field $\vec\xi$
\be\label{intLieK}
\int_{\mathbb{S}^2} \pounds_{\vec \xi} K  =0.
\ee 
\end{Lemma}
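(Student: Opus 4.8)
The plan is to turn the statement into the claim that $\int_{\mathbb{S}^2}K\,\phi\,\bm\epsilon=0$, where $\phi:=D_A\xi^A$ is the divergence of the conformal Killing field, and then to settle this by uniformizing and reducing to the round sphere. First I would rewrite the integrand as $\pounds_\xi K=\xi^A D_A K = D_A(K\xi^A)-K\,D_A\xi^A$; since $\int_{\mathbb{S}^2}D_A(K\xi^A)\bm\epsilon=0$ on the closed surface, the Lemma is equivalent to $\int_{\mathbb{S}^2}K\,\phi\,\bm\epsilon=0$. It is worth noting at the outset that the pointwise identity obtained from the trace of \eqref{lierho}, namely $\pounds_\xi K=-\phi K-\tfrac12\Delta\phi$, only reproduces this same relation upon integration (the $\Delta\phi$ term integrates to zero), so it does not by itself resolve the question; the content is genuinely global.

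Next I would uniformize: every metric on $\mathbb{S}^2$ is globally conformal to the round metric, $q_{AB}=\omega^2 q^{0}_{AB}$ with $\omega>0$, and --- crucially --- the conformal Killing condition is conformally invariant, so $\xi$ is also a CKV of $(\mathbb{S}^2,q^0)$, i.e.\ a combination of the six fields \eqref{Kil1}--\eqref{CKil3}. Writing $\sigma:=\ln\omega$ and using \eqref{K'} in the form $K\bm\epsilon_q=(K_0-\Delta_0\sigma)\bm\epsilon_0$, together with the transformation $\phi=\phi_0+2\pounds_\xi\sigma$ of the divergence (with $\phi_0$ the round-metric divergence of $\xi$, and $\Delta_0,\bm\epsilon_0,K_0$ referring to $q^0$), the quantity to be evaluated becomes $\int_{\mathbb{S}^2}(K_0-\Delta_0\sigma)(\phi_0+2\pounds_\xi\sigma)\,\bm\epsilon_0$.

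I would then expand into four pieces and dispose of them using the special structure of the round sphere. The term $\int K_0\phi_0\,\bm\epsilon_0$ vanishes because $K_0$ is constant and $\int_{\mathbb{S}^2}\phi_0\,\bm\epsilon_0=0$. For the two cross terms I would invoke the key spectral fact that the divergence of any round CKV is an $\ell=1$ eigenfunction, $\Delta_0\phi_0=-2K_0\phi_0$ (Killing fields have $\phi_0=0$, while for the proper CKVs $\eta_{(i)}$ of \eqref{CKil1}--\eqref{CKil3} one has $\phi_0\propto n_{(i)}$, and \eqref{DDn} gives $\Delta_0 n_{(i)}=-2K_0 n_{(i)}$); integrating by parts and comparing, $2\int K_0\pounds_\xi\sigma\,\bm\epsilon_0$ and $-\int(\Delta_0\sigma)\phi_0\,\bm\epsilon_0$ cancel exactly. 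Finally, the purely quadratic term $\int(\Delta_0\sigma)(\pounds_\xi\sigma)\,\bm\epsilon_0$ vanishes for every CKV on any closed surface: integrating by parts and using $\pounds_\xi q_0^{AB}=-\phi_0 q_0^{AB}$ together with $\int\pounds_\xi(|D\sigma|^2\bm\epsilon_0)=0$ shows it equals $\tfrac12\int\phi_0|D\sigma|^2\bm\epsilon_0-\tfrac12\int\phi_0|D\sigma|^2\bm\epsilon_0=0$. Summing the four pieces gives zero, hence $\int_{\mathbb{S}^2}\pounds_\xi K\,\bm\epsilon=0$.

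The main obstacle, and the only step with real content, is that neither the integration-by-parts identity nor the pointwise relation $\pounds_\xi K=-\phi K-\tfrac12\Delta\phi$ proves the result on its own --- they are the same tautology $\int\pounds_\xi K\,\bm\epsilon=-\int\phi K\,\bm\epsilon$. The genuinely global ingredient is the eigenvalue relation $\Delta_0\phi_0=-2K_0\phi_0$, which is special to $\mathbb{S}^2$ and to the round representative and is precisely what forces the cross-term cancellation; everything else is routine. For this reason I would insist on carrying out the reduction on the round metric rather than seeking a gauge-invariant shortcut, since $\int(\pounds_\xi K)\bm\epsilon$ is \emph{not} conformally invariant and only collapses when the base metric is $q^0$.
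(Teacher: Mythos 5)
Your proof is correct, but it takes a genuinely different route from the paper's. The paper proves the Lemma in two lines from the ``magic'' tensor $\rho_{AB}$ of Corollary \ref{coroRho}: the trace of its Codazzi property gives $D_C(\rho^C{}_A-\delta^C_A K)=0$, and contracting with $\xi^A$, integrating by parts and using $D_{(A}\xi_{B)}=\tfrac12 q_{AB}D_C\xi^C$ kills the trace-free part of $\rho^{CA}-q^{CA}K$ and leaves exactly $-\tfrac12\int_{\mathbb{S}^2}K\,D_C\xi^C=0$. You instead uniformize to the round metric and compute the four pieces of $\int(K_0-\Delta_0\sigma)(\phi_0+2\pounds_\xi\sigma)\,\bm\epsilon_0$ explicitly; the cross terms cancel via the $\ell=1$ spectral identity $\Delta_0\phi_0=-2K_0\phi_0$, and your quadratic term is precisely the paper's Lemma \ref{useful} (the Bourguignon--Ezin identity $\int\Delta f\,\pounds_\xi f=0$), which you reprove correctly along the way. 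The trade-off: the paper's argument is shorter and gauge-covariant on its face, but it hides the uniformization input inside the existence proof of $\rho_{AB}$ (which is constructed by transporting $K q_{AB}/2$ from the round representative); your argument is longer and more computational but self-contained, and it makes transparent exactly where the $\mathbb{S}^2$ topology enters --- through the fact that the divergence of a round CKV is a first spherical harmonic. Your closing remark that the pointwise identity $\pounds_\xi K=-\phi K-\tfrac12\Delta\phi$ and the naive integration by parts are the same tautology, so that the content is genuinely global, is an accurate and worthwhile observation.
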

\begin{proof}
Let $(\mathbb{S}^2,q_{AB})$ be any Riemannian manifold on the 2-sphere, and let $\rho_{AB}$ be the unique tensor field on $(\mathbb{S}^2,q_{AB})$ of Corollary \ref{coroRho}. Then
$$
D_C(\rho^C{}_A -\delta^C_A K) =0
$$
and this statement is conformally invariant. Contracting here with $\xi^A$ and integrating one easily gets
$$
0=- \frac{1}{2} \int_{\mathbb{S}^2}KD_C \xi^C  = \frac{1}{2} \int_{\mathbb{S}^2} \xi^C D_C K .
$$
\end{proof}
This result seems to have been found first in \cite{Bourguignon1987}, see also references therein, and is actually valid for arbitrary {\em compact} Riemannian manifolds, also in higher dimensions if the scalar curvature is used instead of $K$. In that paper they also prove for arbitrary compact manifolds 
\begin{Lemma}\label{useful}
Let $({\cal S},q_{AB})$ be any compact 2-dimensional Riemannian manifold. Then 
$$
\int_{\mathbb{S}^2} \Delta f\,  \pounds_{\vec\xi} f =0,  \hspace{1cm} \forall f\in C^2({\cal S})
$$
and this statement is conformally invariant.
\end{Lemma}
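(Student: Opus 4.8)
The plan is to rewrite the integrand using $\pounds_{\vec\xi}f = \xi^A D_A f$ and $\Delta f = D^A D_A f$, and then to integrate by parts freely (since $\mathcal{S}$ is compact without boundary, all total-divergence terms drop) until the conformal Killing property of $\vec\xi$ forces an exact cancellation. Throughout I take $\vec\xi$ to be a conformal Killing vector of $(\mathcal{S},q_{AB})$, as in the preceding results; in two dimensions this means $D_{(A}\xi_{B)} = \tfrac{1}{2}(D_C\xi^C)q_{AB}$.

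First I would move one derivative off the Laplacian:
$$\int_{\mathcal{S}}(\Delta f)\,\xi^A D_A f = -\int_{\mathcal{S}}(D^B\xi^A)\,D_A f\, D_B f -\int_{\mathcal{S}}\xi^A (D^B D_A f)\, D_B f,$$
and then treat the two resulting terms separately. In the first term the factor $D_A f\, D_B f$ is symmetric in $A,B$, so only the symmetric part $D^{(B}\xi^{A)}$ contributes; by the conformal Killing equation this equals $\tfrac{1}{2}(D_C\xi^C)q^{AB}$, giving $-\tfrac{1}{2}\int_{\mathcal{S}}(D_C\xi^C)\,|Df|^2$, where $|Df|^2 := q^{AB}D_A f\, D_B f$. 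In the second term the symmetry of the Hessian, $D^B D_A f = D_A D^B f$, lets me write $(D^B D_A f)\,D_B f = \tfrac{1}{2} D_A|Df|^2$; one further integration by parts then converts $-\tfrac{1}{2}\int_{\mathcal{S}}\xi^A D_A|Df|^2$ into $+\tfrac{1}{2}\int_{\mathcal{S}}(D_A\xi^A)\,|Df|^2$. The two contributions are exact negatives of one another, so the integral vanishes for every $f\in C^2(\mathcal{S})$.

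For the conformal-invariance claim I would simply observe that under a rescaling (\ref{gaugeh}) the quantity $\pounds_{\vec\xi}f = \xi(f)$ is metric-independent, $\vec\xi$ remains a conformal Killing vector, and in two dimensions $\tilde\Delta f = \omega^{-2}\Delta f$ while the area element acquires a factor $\omega^2$; the two factors cancel, so the integral is literally unchanged.

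I expect no serious obstacle here. The only delicate points are invoking the two-dimensional conformal Killing identity and the Hessian symmetry in exactly the right form, and keeping the signs straight so that the $\pm\tfrac{1}{2}\int_{\mathcal{S}}(D_A\xi^A)\,|Df|^2$ terms genuinely cancel. The conceptual content is precisely that this cancellation is insensitive to $f$, which is why the statement holds for arbitrary $f\in C^2(\mathcal{S})$ rather than for a special class; and that it reduces to the trivial $0=0$ when $\vec\xi$ is a genuine Killing vector, so the conformal (rather than strict) symmetry is the natural level of generality.
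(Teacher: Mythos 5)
Your proof is correct. Note that the paper itself gives no proof of this lemma: it is quoted as a result from the cited reference (Bourguignon et al.\ 1987), so there is no in-text argument to compare against. Your elementary integration-by-parts derivation is a valid, self-contained substitute: the first integration by parts isolates $-\int (D^{(B}\xi^{A)})D_Af\,D_Bf$, the two-dimensional conformal Killing identity $D_{(A}\xi_{B)}=\tfrac{1}{2}(D_C\xi^C)q_{AB}$ turns it into $-\tfrac{1}{2}\int(D_C\xi^C)|Df|^2$, and the Hessian-symmetry trick plus a second integration by parts produces exactly $+\tfrac{1}{2}\int(D_C\xi^C)|Df|^2$. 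It is worth observing that the cancellation of the coefficients $\tfrac{1}{n}$ and $\tfrac{1}{2}$ is special to $n=2$, which is consistent with the lemma being stated for surfaces; the regularity $f\in C^2$ suffices for both integrations by parts on a closed manifold. Your conformal-invariance remark ($\tilde\Delta f=\omega^{-2}\Delta f$ against the area element $\omega^{2}\bm{\epsilon}$, with $\pounds_{\vec\xi}f$ metric-independent and the conformal Killing property preserved) is also correct.
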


In explicit calculations, it  is sometimes useful to have the version of \eqref{eq:rho-rho'} that provides $\rho_{AB}$ in terms of $\tilde{\rho}_{AB}$, the conformal metric metric $\tilde{q}_{AB}$ and its covariant derivative $\tilde{D}_A$, which reads
\be
\rho_{AB}=\tilde{\rho}_{AB} +\frac{1}{\omega} \tilde{D}_A\omega_B-\frac{1}{2\omega^2} \tilde{q}^{CD} \omega_C \omega_D \tilde{q}_{AB} .
\ee

If the 2-dimensional metric has axial symmetry, one can present an explicit expression of the tensor $\rho_{AB}$ in explicit adapted coordinates $\{x^A\}=\{p,\varphi\}$, with $\partial_\varphi$ the axial Killing vector. Let the metric be
$$
q_{AB}dx^A dx^B = F(p) dp^2 +G(p) d\varphi^2 
$$
where $F$ and $G$ are arbitrary functions of $p$ only subject to satisfy the necessary regularity condition at the fixed point of $\partial_\varphi$ \cite{Mars1993}. This metric is (locally) conformal to the round metric (\ref{round}), so that by adapting the coordinates on the round sphere to make the fixed point of $\partial_\varphi$ coincide with either $\theta=0$ or $\theta=\pi$ in \eqref{round}. Then, the tensor $\rho_{AB}$ is explicitly given by
\bean
\rho_{pp} &=& \frac{F}{2G} \sin^2\theta -\Psi' +\frac{F'}{2F} \Psi -\frac{1}{2} \Psi^2, \\
\rho_{p\varphi}&=&0,\\
\rho_{\varphi\varphi}&=&\frac{1}{2} \sin^2\theta +\frac{\Psi}{2F}(G\Psi-G')
\eean
where primes are derivatives with respect to $p$ and
$$
\tan \frac{\theta}{2} =b e^{\epsilon\int\sqrt{F/G}dp},  \hspace{7mm} \Psi = \frac{G'}{2G} -\epsilon \sqrt{\frac{F}{G}} \cos\theta
$$
with $\epsilon^2=1$ a sign while $b$ is a constant to be determined at the fixed point depending on the choice of $\theta=0,\pi$.

With these formulas at hand, one can easily derive that, for the flat metric with $F(p)=1$ and $G(p)=p^2$, the tensor $\rho_{AB}|_{flat}=0$ vanishes
\cite{Fernandez-Alvarez_Senovilla-dS}.

\section[\appendixname~\thesection]{Analysis of (\ref{cond}) based on the Hodge decomposition}\label{App:Hodge}
On $(\mathbb{S}^2,q_{AB})$ the Hodge theorem applies and thus any one-form $\bm{X}$ can be decomposed, uniquely, into an exact one-form, plus a co-exact one-form, plus a harmonic one-form, the latter in the cohomology class as $\bm{X}$. As $\mathbb{S}^2$ is simply connected, the harmonic one-form necessarily vanishes and thus (using $\star$ for the Hodge operator on $(\mathbb{S}^2,q_{AB})$)
$$
\bm{X} = \star d \star X_{[2]} +dX, \hspace{1cm} X_A = D^B X_{AB} +D_A X
$$
for some 2-form $X_{AB}=X_{[AB]}$ and scalar field $X$ subject to the freedom $X_{AB}\rightarrow X_{AB}+c_1\epsilon_{AB}$ and $X\rightarrow X+c_2$, with $c_1,c_2$ arbitrary constants. Notice that 
$$
X_{AB} =\epsilon_{AB} x, \hspace{1cm} x:= \star X_{[2]} = \frac{1}{2} \epsilon^{AB} X_{AB} 
$$
so that the above formula can be re-expressed in terms of two scalar fields $x$ and $X$:
\be\label{Hodge}
X_A = \epsilon_A{}^B D_B x + D_A X = D_B\left(\epsilon_A{}^B x +\delta_A{}^B X \right)
\ee
with
\be\label{Deltaxs}
\epsilon^{AB} D_A X_{B} =-\Delta x, \hspace{1cm} D_A X^A = \Delta X.
\ee
From (\ref{Hodge}) one readily obtains
\be\label{XAXA}
q^{AB}X_A X_B = D_Bx D^B x + D_B X D^B X +2 \epsilon^{AB} D_A X D_B x
\ee

The dual decomposition is simply 
$$
(\star X)_A = \epsilon_A{}^B D_B X - D_A x =D_B\left(\epsilon_A{}^B X -\delta_A{}^B x \right).
$$
Observe that $x$ and $X$ are gauge invariant if and only if $X_A$ is gauge invariant. In our case, we are rather interested in the situation where $X_A$ has gauge behaviour (\ref{Xgauge}). The relation between the $x,X$ in one gauge and $\tilde{x},\tilde{X}$ in another gauge is not trivial. 

There exists a decomposition for symmetric and traceless tensors (see e.g. \cite{Chen2021}) $H_{AB}$, analogous to (\ref{Hodge}) and also with two potentials , say $h$ and $H$, given by
\be\label{Hodge2}
H_{AB} = D_A D_B H -\frac{1}{2} q_{AB} \Delta H+ \epsilon_{(A}{}^E D_{B)} D_E h 
\ee
which also has a dual version
$$
\epsilon_{(A}{}^E H_{B)E} = \epsilon_{(A}{}^E D_{B)} D_E H -( D_A D_B h -\frac{1}{2} q_{AB} \Delta h).
$$
Notice that $H,h$ being functions on the sphere, they can be expanded in spherical harmonics as explained below for $x$ and $X$, but the harmonics with spin $s=0,1$ do not contribute to the formula (\ref{Hodge2}). In other words, the potentials $H,h$ are defined up to addition of arbitrary harmonics with $s=0,1$. 
These formulas can be applied, for instance, to $\hat{C}_{AB}$, $\hat{\F}_{AB}$ or $\S_{AB}$. 

Fortunately, the analysis of the the {\em gauge-invariant} condition (\ref{cond}) can be done in any gauge, in particular in one where the metric of the cut ${\cal S}$ is the round metric (\ref{round}). We have, for any CKV $\vec \xi$, using (\ref{Hodge})
$$
\int_{{\cal S}} \xi^A X_A  = \int_{{\cal S}} \xi^A \left(\epsilon_A{}^B D_B x +D_A X \right)=
 \int_{{\cal S}} \left(x\, \epsilon^{AB} D_A \xi_B  -D_C \xi^C X \right) .
$$
It follows from this expression that the term with $X$ is irrelevant for Killing vectors (as $D_C \xi^C=0$ then), while the term with $x$ is irrelevant for conformal Killing vectors, for we proved in Appendix \ref{App:rho}  that all of them are closed as one-forms (and thus $D_{[A}\xi_{B]}=0$ for them). Taking also into account that, for the Killing vectors (\ref{Kil1}-\ref{Kil3}), a direct calculation provides
$$
\epsilon_{AB} D^A \tilde\xi_{(i)}^B = 2n_{(i)}, \hspace{1cm} \forall i=1,2,3,
$$
it easily follows that the condition (\ref{cond}) splits into two similar relations for $x$ and $X$:
\be\label{condxX}
\int_{{\cal S}} x n_{(i)} =0, \hspace{1cm} \int_{{\cal S}} X n_{(i)} =0, \hspace{1cm} \forall i=1,2,3.
\ee
But $n_{(i)}$ are the spherical harmonics of degree $s=1$, and thus the above relations simply express that both $x$ and $X$ must be $L^2$-orthogonal to $Y_1^i$.

$x$ and $X$ being functions on $\mathbb{S}^2$, they can be expanded in spherical harmonics, that is
$$
x= \sum_{s=0}^\infty x^{i_1\dots i_s} n_{(i_1)} \dots n_{(i_s)}, \hspace{1cm} X= \sum_{s=0}^\infty X^{i_1\dots i_s} n_{(i_1)} \dots n_{(i_s)},
$$
where $x^{i_1\dots i_s}$ and $X^{i_1\dots i_s}$ are (for $s\geq 2$) fully symmetric and traceless `constant tensors' 
$$
X^{i_1\dots i_s} =X^{(i_1\dots i_s)}, \hspace{1cm} x^{i_1\dots i_s} =x* {(i_1\dots i_s)}, \hspace{1cm}
\delta_ {i_1i_2} X^{i_1\dots i_s}=\delta_{i_1i_2} x^{i_1\dots i_s}=0
$$ 
and they are totally traceless in the sense that contraction on any two indices with $\delta_{ij}$ vanishes. Therefore, condition (\ref{cond}) re-expressed as (\ref{condxX}) simply implies that the terms with $s=1$, $x^i$ and $X^i$ vanish. As $x$ and $X$ are defined up to the addition of an arbitrary constant, one can also get rid of the terms with $s=0$ and (\ref{condxX}) imply the following expansions
$$
x= \sum_{s=2}^\infty x^{i_1\dots i_s} n_{(i_1)} \dots n_{(i_s)}, \hspace{1cm} X= \sum_{s=2}^\infty X^{i_1\dots i_s} n_{(i_1)} \dots n_{(i_s)}.
$$
Introducing these expressions into (\ref{Hodge}) one gets for the solution of (\ref{cond}) 
\be
X_A=\sum_{s= 2} ^\infty s\left(x^{i_1\dots i_s} \epsilon_{AB}\eta^{B}_{(i_1)} +X^{i_1\dots i_s}\eta_{(i_1)A}\right) n_ {(i_2)} \dots n_{(i_s)} .\label{XAexpanded}
\ee

Let now $\{\bm{v},\star\bm{v}\}$ be an appropriate ON basis on $\mathbb{S}^2$ (this can be chosen to be the eigenbasis of $C_{AB}$, or of $\F_{AB}$, etcetera but those choices are not compulsory and thus $v^A$ must be seen as an arbitrary unit vector field). One can thus express all the conformal Killing vector fields in this basis, so that
$$
\eta^A_{(i)} = f_{(i)} v^A +g_{(i)} \epsilon^{AB} v_B , \hspace{1cm} (\star\eta)_{(i)}^A =\tilde\xi_{(i)}^A = -g_{(i)} v^A+f_{(i)}\epsilon^{AB} v_B .
$$

The scalar products of the conformal Killing vectors are known (or can be directly computed)
\bea
\vec{\tilde\xi}_{(i)}\cdot\vec{\tilde\xi}_{(j)}&=& \vec\eta_{(i)}\cdot \vec\eta_{(j)} =q^{AB} D_A n_{(i)} D_B n_{(j)} = \frac{1}{K} (\delta_{ij} -n_{(i)} n_{(j)}),\label{etadoteta}\\
\vec\eta_{(i)}\cdot \vec{\tilde\xi}_{(j)}&=& \frac{1}{K} \epsilon_{ij}{}^k n_{(k)}  \label{etadotxi}.
\eea
Another interesting identity is
\be
n_{(i)} \vec\eta_{(i)} =\vec 0, \hspace{1cm} n_{(i)} \vec{\tilde\xi}_{(i)} =\vec 0\label{neta}
\ee
where sum on $i$ understood.
The functions $f_{(i)}, g_{(i)}$ are thus subject, due to (\ref{etadoteta}-\ref{etadotxi}), to the following relations
$$
f_{(i)}  f_{(j)} + g_{(i)}  g_{(j)} = \frac{1}{K} \left(\delta_{ij}- n_{(i)}n_{(j)}\right), \hspace{1cm}
  f_{(j)}  g_{(i)}-  f_{(i)}  g_{(j)} =\frac{1}{K}\epsilon^{ijk} n_{(k)} 
$$
and due to (\ref{neta})
$$
\delta^{ij} n_{(i)} f_{(j)} =0, \hspace{1cm} \delta^{ij} n_{(i)} g_{(j)} =0.
$$
In simpler words, $\{n^{(i)}, f^{(i)},g^{(i)}\}$ constitute an orthonormal triad in the standard flat space. 
Using this in (\ref{XAexpanded}) one arrives at the expression
\be\label{condsol}
X_A= \sum_{s=2}^\infty s n^{(i_2)} \dots n^{(i_s)}\left[\left(X_{i_1\dots i_s} f^{(i_1)} -x_{i_1\dots i_s} g^{(i_1)} \right)v_A + \left(X_{i_1\dots i_s} g^{(i_1)} +x_{i_1\dots i_s} f^{(i_1)} \right)\epsilon_{AB} v^B\right].
\ee




\bibliographystyle{plain}

\bibliography{SenovillaUniverse}

\begin{thebibliography}{10}

\bibitem{Virmani2019}
P.~B. Aneesh, S.~J. Hoque, and A.~Virmani.
\newblock {Conserved charges in asymptotically de Sitter spacetimes}.
\newblock {\em Class. Quant. Grav.}, 36(20):205008, sep 2019.

\bibitem{Ashtekar81}
A.~Ashtekar.
\newblock Radiative degrees of freedom of the gravitational field in exact
  general relativity.
\newblock {\em J. Math. Phys.}, 22(12):2885–2895, 1981.

\bibitem{Ashtekar2017}
A.~Ashtekar.
\newblock Implications of a positive cosmological constant for general
  relativity.
\newblock {\em Rep. Prog. Phys.}, 80(10):102901, aug 2017.

\bibitem{Ashtekar2019}
A.~Ashtekar and S.~Bahrami.
\newblock {Asymptotics with a positive cosmological constant. IV. The
  no-incoming radiation condition}.
\newblock {\em Phys. Rev. D}, 100:024042, Jul 2019.

\bibitem{Ashtekar2014}
A.~Ashtekar, B.~Bonga, and A.~Kesavan.
\newblock {Asymptotics with a positive cosmological constant: I. Basic
  framework}.
\newblock {\em Class. Quant. Grav.}, 32(2):025004, dec 2014.

\bibitem{Ashtekar2015-b}
A.~Ashtekar, B.~Bonga, and A.~Kesavan.
\newblock {Asymptotics with a positive cosmological constant. II. Linear fields
  on de Sitter spacetime}.
\newblock {\em Phys. Rev. D}, 92:044011, Aug 2015.

\bibitem{Ashtekar2015}
A.~Ashtekar, B.~Bonga, and A.~Kesavan.
\newblock {Asymptotics with a positive cosmological constant. III. The
  quadrupole formula}.
\newblock {\em Phys. Rev. D}, 92:104032, Nov 2015.

\bibitem{Ashtekar-Dray81}
A.~{Ashtekar} and T.~{Dray}.
\newblock {On the existence of solutions to Einstein's equation with non-zero
  Bondi news}.
\newblock {\em Commun. Math. Phys.}, 79(4):581--599, December 1981.

\bibitem{Bel1958}
L.~{Bel}.
\newblock {Sur la radiation gravitationnelle}.
\newblock {\em Comptes Rendues Acad. Sci. (Paris)}, 247:1094, 1958.

\bibitem{Bel1962}
L.~Bel.
\newblock {Les \'etats de radiation et le probl\`eme de l'\'energie en
  relativité g\'en\'eral}.
\newblock {\em Cahiers de Physique}, 16:59--80, 1962.
\newblock {English translation: Radiation States and the Problem of Energy in
  General Relativity, \href{https://doi.org/10.1023/A:1001958805232}{ Gen. Rel.
  Grav. {\bf 32} 2047–2078 (2000)}}.

\bibitem{Bergqvist98}
G.~Bergqvist.
\newblock {Positivity properties of the Bel–Robinson tensor}.
\newblock {\em J. Math. Phys.}, 39(4):2141--2147, 1998.

\bibitem{Bishop2016}
N.T. Bishop.
\newblock Gravitational waves in a de sitter universe.
\newblock {\em Phys. Rev. D}, 93:044025, 2016.

\bibitem{Bondi1960}
H.~Bondi.
\newblock {Gravitational waves in general relativity}.
\newblock {\em Nature}, 186:535, 1960.

\bibitem{Bondi1962}
H.~Bondi, M.~G.~J. van~der Burg, and A.~W.~K. Metzner.
\newblock {Gravitational waves in general relativity. VII. Waves from
  axisymmetric isolated systems}.
\newblock {\em Proc. R. Soc. A.}, 269:21–52, 1962.

\bibitem{Bonga2017}
B.~Bonga and J.S. Hazboun.
\newblock Power radiated by a binary system in a de sitter universe.
\newblock {\em Phys. Rev. D}, 96:064018, 2017.

\bibitem{Senovilla97}
M.~A.~G. Bonilla and J.~M.~M. Senovilla.
\newblock {Some properties of the Bel and Bel-Robinson tensors}.
\newblock {\em Gen. Rel. Grav.}, 29:91, 1997.

\bibitem{Bourguignon1987}
J.-P. Bourguignon and J..P. Ezin.
\newblock Scalar curvature functions in a conformal class of metrics and
  conformal transformations.
\newblock {\em Trans. Amer. Math. Soc.}, 301:723--736, 1987.

\bibitem{Chen2021}
M.~Chen, J~Keller, Wang M-T., Wang Y-K., and Yau S-T.
\newblock {Evolution of Angular Momentum and Center of Mass at Null Infinity}.
\newblock {\em Commun. Math. Phys.}, 386:551--588, 2021.

\bibitem{Chrusciel2021b}
P.~T. Chruściel, S.J. Hoque, M.~Maliborski, and T.~Smolka.
\newblock {On the canonical energy of weak gravitational fields with a
  cosmological constant $\Lambda \in \mathbb{R}$}.
\newblock {\em Phys. Rev. D.}, 103:064008, 2021.

\bibitem{Chrusciel2021}
P.~T. Chruściel, S.J. Hoque, and T.~Smolka.
\newblock {Energy of weak gravitational waves in spacetimes with a positive
  cosmological constant}.
\newblock {\em Phys. Rev. D.}, 103:064008, 2021.

\bibitem{Chrusciel2016}
P.~T. Chruściel and L.~Ifsits.
\newblock The cosmological constant and the energy of gravitational radiation.
\newblock {\em Phys. Rev. D}, 93:124075, Jun 2016.

\bibitem{Chrusciel2013}
P.~T. Chruściel, J.~Jezierski, and J.~Kijowski.
\newblock {Hamiltonian mass of asymptotically Schwarzschild-de Sitter
  space-times}.
\newblock {\em Phys. Rev. D.}, 87(12):124015, Jun 2013.

\bibitem{Compere2020}
G.~{Comp{\`e}re}, A.~{Fiorucci}, and R.~{Ruzziconi}.
\newblock {The $\Lambda$-BMS$_4$ Charge Algebra}.
\newblock {\em J. High. Energy. Phys.}, 2020(205), 2020.

\bibitem{Compere2019}
G.~Compère, F.~Fiorucci, and R.~Ruzziconi.
\newblock {The $\Lambda$-BMS$_4$ group of dS$_4$ and new boundary conditions
  for AdS$_4$}.
\newblock {\em Class. Quant. Grav.}, 36(19):195017, sep 2019.

\bibitem{Dolan2019}
B.~P. Dolan.
\newblock {The definition of mass in asymptotically de Sitter space-times}.
\newblock {\em Class. and Quant. Grav.}, 36(7):077001, mar 2019.

\bibitem{Fefferman2002}
F.~Fefferman and Graham C.R.
\newblock Q-curvature and poincar\'e metrics.
\newblock {\em Mathematical Research Letters}, 9:139--151, 2002.

\bibitem{Fefferman2005}
F.~Fefferman and Graham C.R.
\newblock The ambient metric.
\newblock {\em Annals of Mathematical Studies}, 178, 2005.

\bibitem{Fernandez-Alvarez_Senovilla20b}
F.~Fern\'andez-\'Alvarez and J.~M.~M. Senovilla.
\newblock Gravitational radiation condition at infinity with a positive
  cosmological constant.
\newblock {\em Phys. Rev. D}, 102:101502, Nov 2020.

\bibitem{Fernandez-Alvarez_Senovilla20}
F.~Fern\'andez-\'Alvarez and J.~M.~M. Senovilla.
\newblock Novel characterization of gravitational radiation in asymptotically
  flat spacetimes.
\newblock {\em Phys. Rev. D}, 101:024060, Jan 2020.

\bibitem{Fernandez-Alvarez_Senovilla2022}
F.~Fern\'andez-\'Alvarez and J.~M.~M. Senovilla.
\newblock The peeling theorem with arbitrary cosmological constant.
\newblock {\em Class. Quantum Grav.}, 39:10LT01, May 2022.

\bibitem{Fernandez-Alvarez_Senovilla-dS}
F.~Fernández-Álvarez and J.~M.~M. Senovilla.
\newblock {Asymptotic structure with a positive cosmological constant}.
\newblock {\em Class. Quant. Grav.}, 39:165012, 2022.

\bibitem{Fernandez-Alvarez_Senovilla-afs}
F.~Fernández-Álvarez and J.~M.~M. Senovilla.
\newblock {Asymptotic structure with vanishing cosmological constant}.
\newblock {\em Class. Quant. Grav.}, 39:165011, 2022.

\bibitem{Ferrando1997}
J.~J. Ferrando and J.~A. Sáez.
\newblock {A covariant determination of the Weyl canonical frames in Petrov
  type I spacetimes}.
\newblock {\em Class. Quant. Grav.}, 14(1):129--138, jan 1997.

\bibitem{Frauendiener2004}
J.~Frauendiener.
\newblock {Conformal Infinity}.
\newblock {\em Living Rev. Relativ.}, 7(1):1, 2004.

\bibitem{Friedrich1986a}
H.~Friedrich.
\newblock {Existence and structure of past asymptotically simple solutions of
  Einstein's field equations with positive cosmological constant}.
\newblock {\em J. Geom. Phys.}, 3(1):101 -- 117, 1986.

\bibitem{Friedrich1986b}
H.~Friedrich.
\newblock {On the existence of $n$-geodesically complete or future complete
  solutions of Einstein's field equations with smooth asymptotic structure}.
\newblock {\em Comm. Math. Phys.}, 107(4):587--609, 1986.

\bibitem{Friedrich2002}
H.~Friedrich.
\newblock {\em Conformal Einstein Evolution}, pages 1--50.
\newblock Springer Berlin Heidelberg, Berlin, Heidelberg, 2002.

\bibitem{GarciaParrado2004}
A.~García-Parrado and J.~M.~M. Senovilla.
\newblock Bi-conformal vector fields and their applications.
\newblock {\em Class. Quant. Grav.}, 21(8):2153–2177, Mar 2004.

\bibitem{Alfonso2008}
A.~García-Parrado Gómez-Lobo.
\newblock Dynamical laws of superenergy in general relativity.
\newblock {\em Class. Quant. Grav.}, 25(1):015006, 2008.

\bibitem{Geroch1977}
R.~Geroch.
\newblock {Asymptotic Structure of Space-Time}.
\newblock In F.~Paul Esposito and Louis Witten, editors, {\em {Asymptotic
  Structure of Space-Time}}, page 1–105. Springer US, 1977.

\bibitem{Geroch81}
R.~Geroch and J.~Winicour.
\newblock Linkages in general relativity.
\newblock {\em J. Math. Phys.}, 22(4):803--812, 1981.

\bibitem{Griffiths-Podolsky2009}
J.~B. Griffiths and J.~Podolský.
\newblock {\em {Exact Space-Times in Einstein's General Relativity}}.
\newblock Cambridge Monographs on Mathematical Physics. Cambridge University
  Press, 2009.

\bibitem{He2016}
X.~He, Z.~Cao, and J.~Jing.
\newblock Asymptotical null structure of an electro-vacuum spacetime with a
  cosmological constant.
\newblock {\em Int. J. Mod. Phys. D}, 25(07):1650086, 2016.
\newblock {see also:
  \href{https://arxiv.org/abs/1710.05475}{arXiv:1710.05475}}.

\bibitem{He2018}
X.~He, J.~Jing, and Z.~Cao.
\newblock Relationship between bondi–sachs quantities and source of
  gravitational radiation in asymptotically de sitter spacetime.
\newblock {\em Int. J. Mod. Phys. D}, 27:18500463, 2018.

\bibitem{Hoque2019}
S.~J. Hoque and A.~Aggarwal.
\newblock Quadrupolar power radiation by a binary system in de sitter
  background.
\newblock {\em Int. J. Mod. Phys. D}, 28:1950025, sep 2019.

\bibitem{Horowitz82}
G.~T. Horowitz and B.~G. Schmidt.
\newblock Note on gravitational energy.
\newblock {\em Proc. R. Soc. A.}, 381(1780):215--224, 1982.

\bibitem{Kolanowski2020}
M.~Kolanowski and J.~Lewandowski.
\newblock Energy of gravitational radiation in the de sitter universe at
  $\scri^+$ and at a horizon.
\newblock {\em Phys. Rev. D}, 102:124052, 2020.

\bibitem{Kolanowski2021}
M.~Kolanowski and J.~Lewandowski.
\newblock Hamiltonian charges in the asymptotically de sitter spacetimes.
\newblock {\em J. High Energy Phys.}, 2021:063, 2021.

\bibitem{Krtous2004}
P.~Krtou{\v{s}} and J.~Podolsk{\'{y}}.
\newblock Asymptotic directional structure of radiative fields in spacetimes
  with a cosmological constant.
\newblock {\em Class. Quant. Grav.}, 21(24):R233--R273, 2004.

\bibitem{Lazkoz2003}
R.~Lazkoz, J.~M.~M. Senovilla, and R.~Vera.
\newblock Conserved superenergy currents.
\newblock {\em Class. Quant. Grav.}, 20(19):4135--4152, sep 2003.

\bibitem{Liu1998}
H.~L. Liu, U.~Simon, and C.~P. Wang.
\newblock {Higher-order Codazzi tensors on conformally flat spaces}.
\newblock {\em Beitr\"age zur Algebra und Geometrie / Contributions to Algebra
  and Geometry}, 39:329--348, 1998.

\bibitem{Maartens1998}
R.~Maartens and B.~A.~Bassett.
\newblock Gravito-electromagnetism.
\newblock {\em Class. Quant. Grav.}, 15(3):705--717, 1998.

\bibitem{Winicour}
T.~M\"adler and Winicour J.
\newblock {Bondi-Sachs Formalism}.
\newblock {\em Scholarpedia}, 11:33528, 2016.

\bibitem{Mars2017}
M.~Mars, T-T. Paetz, and J.~M.~M. Senovilla.
\newblock {Classification of Kerr–de Sitter-like spacetimes with conformally
  flat $\scri$}.
\newblock {\em Class. Quant. Grav.}, 34(9):095010, april 2017.

\bibitem{MPS1}
M.~Mars, T-T. Paetz, and J.~M.~M. Senovilla.
\newblock {The limit of Kerr-de Sitter spacetime with infinite angular-momentum
  parameter $ a $}.
\newblock {\em Phys. Rev. D}, 97:024021, jan 2018.

\bibitem{Mars2016}
M.~Mars, T-T. Paetz, J.~M.~M. Senovilla, and W.~Simon.
\newblock {Characterization of (asymptotically) Kerr{\textendash}de Sitter-like
  spacetimes at null infinity}.
\newblock {\em Class. Quant. Grav.}, 33(15):155001, jul 2016.

\bibitem{Mars1993}
M.~Mars and J.~M.~M. Senovilla.
\newblock {Axial symmetry and conformal Killing vectors}.
\newblock {\em Class. Quant. Grav.}, 10(8):1633--1647, aug 1993.

\bibitem{MS2015}
M.~Mars and J.~M.~M. Senovilla.
\newblock A spacetime characterization of the kerr-nut-(a)de sitter and related
  metrics.
\newblock {\em Ann. Henri Poincar\'e}, 16:1509--1550, 2015.

\bibitem{Penrose62}
E.~Newman and R.~Penrose.
\newblock {An Approach to Gravitational Radiation by a Method of Spin
  Coefficients}.
\newblock {\em J. Math. Phys.}, 3(3):566--578, 1962.

\bibitem{Penrose1966}
E.~T. Newman and R.~Penrose.
\newblock Note on the bondi-metzner-sachs group.
\newblock {\em J. Math. Phys.}, 7(5):863--870, 1966.

\bibitem{Paetz2016}
T.-T. Paetz.
\newblock {Killing Initial Data on spacelike conformal boundaries}.
\newblock {\em J. Geom. Phys.}, 106:51 -- 69, 2016.

\bibitem{Penrose65}
R.~Penrose.
\newblock Zero rest-mass fields including gravitation: asymptotic behaviour.
\newblock {\em Proc. R. Soc. A.}, 284:159–203, 1965.

\bibitem{Penrose2011}
R.~Penrose.
\newblock {On cosmological mass with positive $\Lambda$}.
\newblock {\em Gen. Rel. Grav.}, 43(12):3355--3366, Dec 2011.

\bibitem{Penrose1986}
R.~Penrose and W.~Rindler.
\newblock {\em {Spinors and Space-Time}}, volume 1 and 2 of {\em Cambridge
  Monographs on Mathematical Physics}.
\newblock Cambridge University Press, 1986.

\bibitem{Pirani57}
F.~A.~E. Pirani.
\newblock {Invariant Formulation of Gravitational Radiation Theory}.
\newblock {\em Phys. Rev.}, 105:1089, 1957.

\bibitem{Podolsky2009}
J.~Podolsk{\'{y}} and H.~Kadlecov{\'{a}}.
\newblock {Radiation generated by accelerating and rotating charged black holes
  in (anti-)de Sitter space}.
\newblock {\em Class. Quant. Grav.}, 26(10):105007, apr 2009.

\bibitem{Poole2019}
A.~Poole, K.~Skenderis, and M.~Taylor.
\newblock {(A)dS$_4$ in Bondi gauge}.
\newblock {\em Class. Quant. Grav.}, 36(9):095005, apr 2019.

\bibitem{Poole2021}
A.~Poole, K.~Skenderis, and M.~Taylor.
\newblock {Charges, conserved quantities and fluxes in de Sitter spacetime}.
\newblock {\em arXiv}, 2021.

\bibitem{Sachs62}
R.~K. Sachs.
\newblock {Asymptotic symmetries in gravitational theories}.
\newblock {\em Phys. Rev.}, 128:2851--2864, 1962.

\bibitem{Sachs1962}
R.~K. Sachs.
\newblock {Gravitational waves in general relativity. VIII. Waves in
  asymptotically flat space-times}.
\newblock {\em Proc. R. Soc. A.}, 270:103–126, 1962.

\bibitem{Saw2016}
V-L. Saw.
\newblock Mass-loss of an isolated gravitating system due to energy carried
  away by gravitational waves with a cosmological constant.
\newblock {\em Phys. Rev. D}, 94:104004, Nov 2016.

\bibitem{Saw2017}
V-L. Saw.
\newblock {Behavior of asymptotically electro-$\mathrm{\ensuremath{\Lambda}}$
  spacetimes}.
\newblock {\em Phys. Rev. D}, 95:084038, Apr 2017.

\bibitem{Saw2017ziu}
V-L. Saw.
\newblock Asymptotically simple spacetimes and mass loss due to gravitational
  waves.
\newblock {\em Int. J. Mod. Phys. D}, 27(01):1730027, 2018.

\bibitem{Saw2018}
V-L. {Saw}.
\newblock {Bondi mass with a cosmological constant}.
\newblock {\em Phys. Rev. D}, 97(8):084017, Apr 2018.

\bibitem{Saw2018i}
V-L. Saw.
\newblock {\em {Mass Loss Due to Gravitational Waves with $ \Lambda> 0 $}},
  pages 33--36.
\newblock 2018.

\bibitem{Senovilla2000}
J.~M.~M. Senovilla.
\newblock {Superenergy tensors}.
\newblock {\em Class. Quant. Grav.}, 17:2799–2842, 2000.

\bibitem{Senovilla2011}
J.~M.~M. Senovilla.
\newblock {Algebraic classification of the Weyl tensor in higher dimensions
  based on its `superenergy' tensor}.
\newblock {\em Class. Quant. Grav.}, 27(12):222001, 2010.
\newblock Erratum, ibid. {\bf 28}:129501, 2011.

\bibitem{Starobinsky1983}
A.A. Starobinsky.
\newblock {Isotropization of arbitrary cosmological expansion given an
  effective cosmological constant}.
\newblock {\em JETP Letters}, 37(1):66--69, jan 1983.

\bibitem{Stephani2003}
H.~Stephani, D.~Kramer, M.~MacCallum, C.~Hoenselaers, and E.~Herlt.
\newblock {\em {Exact Solutions of Einstein's Field Equations}}.
\newblock Cambridge Monographs on Mathematical Physics. Cambridge University
  Press, 2 edition, 2003.

\bibitem{Stewart1991}
J.~Stewart.
\newblock {\em {Advanced General Relativity}}.
\newblock Cambridge University Press, 1991.

\bibitem{Szabados2004}
L.~B. Szabados.
\newblock {Quasi-Local Energy-Momentum and Angular Momentum in General
  Relativity: A Review Article}.
\newblock {\em Living Rev. Relativ.}, 12, 2009.

\bibitem{Szabados2013}
L.~B. Szabados.
\newblock On the total mass of closed universes with a positive cosmological
  constant.
\newblock {\em Class. and Quant. Grav.}, 30(16):165013, jul 2013.

\bibitem{Szabados2015}
L.~B. Szabados and P.~Tod.
\newblock {A positive Bondi-type mass in asymptotically de Sitter spacetimes}.
\newblock {\em Class. Quant. Grav.}, 32(20):205011, oct 2015.

\bibitem{Szabados2019}
L.~B. Szabados and P.~Tod.
\newblock {A review of total energy–momenta in GR with a positive
  cosmological constant}.
\newblock {\em Int. J. Mod. Phys. D}, 28(01):1930003, 2019.

\bibitem{Trautman58}
A.~Trautman.
\newblock {Radiation and Boundary Conditions in the Theory of Gravitation}.
\newblock {\em Bull. Acad. Pol. Sci. Ser. Sci. Math. Astron. Phys.},
  6(6):407--412, 1958.

\bibitem{Kroon}
J.~A. Valiente~Kroon.
\newblock {\em {Conformal Methods in General Relativity}}.
\newblock Cambridge University Press, 2016.

\bibitem{Zakharov}
V.~D. Zakharov.
\newblock {\em {Gravitational Waves in Einstein’s Theory}}.
\newblock Wiley and sons, 1973.

\end{thebibliography}

\end{document}